%% file: testing-unate-distributions.tex
\documentclass[11pt]{article}

\usepackage{sn-preamble}
\usepackage{mathtools} % for \coloneqq command

\newcommand{\1}[1]{\mathbf{1}\left[#1\right]}

\newcommand{\Binomial}{\mathsf{Binomial}} % Binomial distribution
\newcommand{\pr}[2]{\Pr_{#1}\left[#2\right]}
\newcommand{\ex}[2]{\Ex_{#1}\left[#2\right]}

\newcommand{\UnifUnate}{\texttt{Unate-Uniformity}}
\newcommand{\bsigma}{\boldsymbol{\sigma}}
\newcommand{\btau}{\boldsymbol{\tau}}
\newcommand{\brho}{\boldsymbol{\rho}}
\newcommand{\dtv}{\mathrm{d}_{\mathrm{TV}}}
\newcommand{\dkl}{\mathrm{d}_{\mathrm{KL}}}
\newcommand{\dchi}{\mathrm{d}_{\mathrm{\chi^2}}}
\newcommand{\SubcubeUnate}{\texttt{Subcube-Unate}}

\title{Testing Unate Distributions}

\author{
    Daeho Lee\\ 
    \emph{Massachusetts Institute of Technology}\\
    \url{daeho0_0@mit.edu}
    \and 
    Shivam Nadimpalli\\
    \emph{Massachusetts Institute of Technology}\\
    \url{shivamn@mit.edu}
    \and 
    Mingda Qiao\\
    \emph{University of Massachusetts, Amherst}\\
    \url{mingda.qiao.cs@gmail.com}
    \and 
    Ronitt Rubinfeld\\
    \emph{Massachusetts Institute of Technology}\\
    \url{ronitt@csail.mit.edu}
    \vspace{0.5em}
}

\date{\today}

\hypersetup{
	colorlinks = true,
    linkcolor={purple},
    urlcolor={purple},
    citecolor={blue!80!black}
}

\begin{document}

\pagenumbering{gobble}

\maketitle

\hypersetup{linkcolor={purple}}

\begin{abstract}
    We initiate the study of \emph{unate distributions} over $\bits^n$---a natural analogue of unate Boolean functions---by considering two basic testing problems that parallel well-studied questions for monotone distributions: 
    \begin{itemize}
	\item \textbf{Uniformity Testing of Unate Distributions:} We show that $\wt{\Theta}(n^{3/2})$ samples are sufficient and necessary, in contrast to the $\wt{\Theta}(n)$ sample complexity of the analogous problem for monotone distributions (Rubinfeld and Servedio, STOC 2005; Adamaszek, Czumaj, and Sohler, SODA 2010). 

	\item \textbf{Unateness Testing of Arbitrary Distributions:} We give a tester that uses $\wt{O}(n^{3/2})$ conditional samples in the subcube conditional model. On the other hand, every tester that draws conditional samples in a similar fashion, namely from $O(1)$-dimensional subcubes, must have an $\wt{\Omega}(n^{2/3})$ complexity. In the same model, the complexity of monotonicity testing was recently shown to be $\wt{\Theta}(n)$~(Chakrabarty et al., STOC 2025).
    \end{itemize}
    Our algorithms for both problems significantly outperform the naive approach of reducing to the monotone case, which would incur $\Omega(n^2)$ sample complexity. 
    Our uniformity tester relies on a subroutine that ``weakly'' learns the hidden orientations of a unate distribution, together with a new correlation bound for these estimates. 
    Both tools may be of independent interest in studying monotonicity and unateness over $\bn$.
\end{abstract}

\newpage

\pagenumbering{arabic}

\input{sections/intro}

\input{sections/prelims}

\input{sections/testing-uniformity}
\input{sections/subcube}

\input{sections/subcube-lower-bound}

\section*{Acknowledgements}

D.L.~is supported by the MIT Undergraduate Research Opportunities Program (UROP). 
S.N.~is partially supported by Elchanan Mossel’s Vannevar Bush Faculty Fellowship ONR-N00014-
20-1-2826. 
R.R.~is supported by the NSF TRIPODS program (award DMS-2022448) and CCF-2310818. 

\appendix

\bibliography{allrefs}
\bibliographystyle{alphaurl}

\end{document}

%% file: sections/intro.tex
\section{Introduction}
\label{sec:intro}

Many fundamental distribution testing tasks---uniformity testing being the canonical example---become infeasible in high dimensional settings such as the Boolean hypercube $\bn$, requiring exponentially many samples in the dimension~\cite{canonne2020survey,canonne2022topics}. 
A central theme in the field has therefore been to identify \emph{structured} settings that admit efficient testers: examples include Bayesian networks~\cite{canonne2017testing,daskalakis2017square}, Markov random fields~\cite{daskalakis2019testing,bezakova2020lower}, and structured truncations~\cite{de2024detecting,de2025testing}. 

Among such structural assumptions, one of the most extensively studied is \emph{monotonicity}, the distributional analogue of the classical Boolean function property.\footnote{Formally, a distribution $\calD$ over $\bn$ is \emph{monotone} if its probability mass function $\calD : \bn\to[0,1]$ is monotone, i.e., $\calD(x) \leq \calD(y)$ whenever $x_i \leq y_i$ for all $i\in[n]$.} 
In the Boolean function setting, monotonicity has played a central role in sublinear algorithms for nearly three decades: it was among the first properties studied in Boolean function testing and continues to be actively investigated to this day~\cite{BshoutyTamon:96,GGLRS,chakrabarty2013n,chen2014new,khot2018monotonicity,lange2022properly,chen2024mildly,black2025monotonicity,lange2025agnostic,chen2025boolean}. 
In the distributional setting, monotonicity has similarly been studied in depth, leading to tight upper and lower bounds for tasks such as uniformity testing and property testing under various access models~\cite{batu2004sublinear,RubinfeldS09,adamaszek2010testing,bhattacharyya2011testing,acharya2015optimal,aliakbarpour2019towards,rubinfeld2020monotone,blanc2023lifting,CCRSW25}. 

A closely related property in the Boolean function setting is \emph{unateness}. 
Informally, a Boolean function is unate if each coordinate is either always non-decreasing or always non-increasing.  
% A Boolean function $f:\bits^n \to \bits$ is unate if $g(x) := f(x \oplus r)$ is monotone for some $r \in \bits^n$, where $\oplus$ denotes coordinate-wise multiplication; equivalently, $f$ is either non-decreasing or non-increasing in each coordinate. 
The problem of testing unateness of Boolean functions was introduced alongside monotonicity testing in~\cite{GGLRS}, and has since received considerable attention~\cite{khot2016n,chakrabarty2016widetilde,CWX17stoc,CWX17focs,chen2019testing,baleshzar2020optimal}. 
Both monotonicity and unateness are fundamental structural properties that arise naturally across many domains, for example in social choice (where we can view a Boolean function $f$ as a voting rule), economics, learning theory, and circuit design. 

Despite extensive work on unate functions, the analogous notion for distributions has, to the best of our knowledge, not been studied. 
In this work, we initiate the study of \emph{unate distributions} over $\bn$. 
The motivation for this is twofold:
\begin{itemize}
    \item The study of monotone distributions has a well-developed literature precisely because monotonicity is a natural structural constraint arising in settings ranging from social choice to learning theory. Unateness is a strict and natural generalization---it captures the same ``directional'' structure without fixing a canonical orientation---and we believe its study for distributions will prove equally fruitful. 
    \item Additionally, unate distributions, which we define shortly, form a broad and natural family of probability distributions. For example, every unate function $f$ induces a unate distribution (the uniform distribution over $f^{-1}(1))$, and the class of unate functions includes families such as halfspaces and read-once decision lists, which need not be monotone. Beyond this connection to functions, every product distribution over $\bn$ is unate.
\end{itemize}

Finally, as this work demonstrates, unate distributions pose challenges that cannot be addressed by simply reducing to the monotone setting. 
Even for basic tasks such as uniformity testing, the sample complexity changes, and more broadly, new ideas are required to design and analyze optimal testers for this richer class of distributions. 

\subsection{Our Results}
\label{subsec:our-results}

We start by formally defining unate distributions over $\bits^n$: 

\begin{definition} \label{def:unate-dist}
	A distribution $\calD$ over $\bits^n$ is \emph{unate} if there exists $\sigma \in \bits^n$ such that $\calD(x\oplus \sigma)$ is a monotone probability mass function, where $\oplus$ denotes coordinate-wise multiplication. 
\end{definition}

Unate distributions immediately inherit many of the algorithmic challenges and lower bounds known from the monotone distribution setting. 
This includes, for example, the exponential sample lower bounds for tasks such as entropy estimation and independence testing~\cite{RubinfeldS09}. 
In this work, we give efficient algorithms as well as lower bounds for two basic problems:
\begin{itemize}
    \item Uniformity testing of unate distributions; and 
    \item Testing unateness of an unknown distribution in the \emph{subcube conditional model} introduced by Canonne, Ron, and Servedio~\cite{canonne2015testing} and Chakraborty, Fischer, Goldhirsh, and Matsliah~\cite{chakraborty2016power}. 
\end{itemize}

In the subcube conditional model for distributions over $\bn$, an algorithm can query $\calD$ with a subcube $\rho \in \{\pm1, \ast\}^n$ and receive an independent sample $\bx\sim\calD$ conditioned on $\bx_i = \rho_i$ for all $i$ where $\rho_i \neq \ast$. 
This model provides a distributional analogue of the ``membership query'' from learning theory and property testing of Boolean functions, and has received considerable attention in recent years.

We note that for both of the problems we consider, an easy argument using the Chernoff bound together with a union bound over all $\sigma \in \bn$ (cf.~\Cref{def:unate-dist}) allow us to reduce to the corresponding problems for monotone distributions with a multiplicative $O(n)$ sample/subcube query overhead. 
Our main algorithmic results improve on this naive approach for each task; we defer a technical overview of our results to~\Cref{sec:tech-overview} and survey related work in~\Cref{subsec:related-work}. 

\subsubsection{Uniformity Testing of Unate Distributions}

Throughout, we write $\calU = \calU_n$ for the uniform distribution over $\bn$. 
Our first result gives an efficient uniformity tester for high-dimensional distributions under the promise of unateness: 

\begin{restatable}[]{theorem}{unifUB}
\label{thm:our-unif-test}
    Let $\eps \in (0,1)$. There is an algorithm, \emph{\UnifUnate} (\Cref{alg:unif-unate}) which, given i.i.d.~sample access to an unknown unate distribution $\calD$ over $\bn$, draws $\wt{O}(n^{3/2}/\eps^2)$ samples and has the following performance guarantee:
    \begin{itemize}
        \item If $\calD = \calU$, then it outputs ``accept'' with probability $9/10$. 
        \item If $\dtv(\calD, \calU) \geq \eps$, then it outputs ``reject'' with probability $9/10$. 
    \end{itemize}
    Furthermore, it runs in $\wt{O}(n^{5/2}/\eps^2)$ time.
\end{restatable}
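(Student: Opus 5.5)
The plan is to reduce to the structure that drives the monotone case. For monotone $\calD$, the tester of Rubinfeld--Servedio / Adamaszek--Czumaj--Sohler uses the fact that if $\dtv(\calD,\calU)\ge\eps$, then the total degree-one Fourier mass is large: $\sum_i \hat{\calD}(i) \gtrsim \eps/\mathrm{polylog}$, where $\hat{\calD}(i)=\Ex_{\bx\sim\calD}[\bx_i]$. For unate $\calD$ with orientation $\sigma$, the same argument applied to $\calD(x\oplus\sigma)$ gives
\[
\sum_i \sigma_i \mu_i \ge \tilde\Omega(\eps), \qquad \mu_i := \Ex_{\bx\sim\calD}[\bx_i],
\]
and every term $\sigma_i\mu_i$ is nonnegative. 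Writing $m := \sum_i |\mu_i|$, the distance lower bound therefore becomes $m \ge \tilde\Omega(\eps)$. If $\calD=\calU$, then every $\mu_i=0$. The task thus reduces to distinguishing $\mu=0$ from $\|\mu\|_1\ge\tilde\Omega(\eps)$, given that the samples come from a unate distribution.

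The naive estimator $\|\bar\mu\|_2^2$ of $\|\mu\|_2^2 \ge m^2/n$ needs about $n^{3/2}/\eps^2$ samples in the worst case when the coordinates are independent. However, correlations under a general unate $\calD$ can inflate its variance, and this is where the unknown orientation matters. I would proceed as follows.

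\textbf{Step 1 (weak orientation learning).} Split the samples into two halves. From the first half, compute empirical means and set $\hat\sigma_i = \mathrm{sign}(\bar\mu_i)$; do this with random tie-breaking or a thresholded sign. The resulting estimates are only weakly correlated with $\sigma_i$, with bias roughly $\sqrt{N}|\mu_i|$.

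\textbf{Step 2 (statistic).} From the second half, compute
\[
T=\sum_i \hat\sigma_i\,\bar\mu'_i = \Ex_{\text{emp}}\Big[\sum_i\hat\sigma_i \bx_i\Big].
\]
Under $\calU$, the $\hat\sigma$ are independent of the second half, so $\Ex[T]=0$ and $\mathrm{Var}(T)=n/N$. Under the far case, $\Ex[T]=\sum_i \Ex[\hat\sigma_i]\mu_i \approx \sum_i \min(1,\sqrt N|\mu_i|)\,|\mu_i|$. Convexity together with $\sum_i|\mu_i|\ge \eps$ lower-bounds this quantity. Choosing $N=\tilde\Theta(n^{3/2}/\eps^2)$ makes the mean exceed the null standard deviation $\sqrt{n/N}$, once the case analysis over how the mass of $\mu$ is spread is carried out (spread-out $\mu$ is the worst case). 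The statistic is linear given $\hat\sigma$, and computing it takes $\tilde O(nN)=\tilde O(n^{5/2}/\eps^2)$ time. This matches the claimed running time.

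\textbf{Step 3 (main obstacle).} The hard part is the far case, for two reasons. First, under a correlated unate $\calD$, the fluctuation of $\sum_i\hat\sigma_i\bx_i$ is not $\sqrt n$. Second, the $\hat\sigma_i$ are themselves correlated across coordinates. I expect to need a new correlation bound for monotone distributions. After reorienting by $\sigma$, both $x\mapsto \sigma_i\hat\sigma_i$ and $x\mapsto \sigma_ix_i$ are monotone functions of a sample from a monotone distribution. By Harris/FKG-type positive correlation, one should be able to show that $\mathrm{Var}$ is dominated or is harmless. Either the variance is controlled by $n/N$ plus $O(\Ex[T]^2)$, or large positive covariance only increases $\Ex[T]$ on average. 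Concretely, I would bound $\Pr[T<\tau]$ via Chebyshev, using $\mathrm{Var}_{\calD}(\sum_i \sigma_i\bx_i)\le n+O(n\,m)$-type estimates for monotone distributions. One route is to obtain this estimate by comparing with the edge-isoperimetric / Talagrand-style bounds used in the monotone uniformity analysis. Another route is to condition on the first half and use the independence of the two halves. If this variance bound fails, I would fall back to a repetition-and-median amplification with a polylog loss, which the $\tilde O$ allows.
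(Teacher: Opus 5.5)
\textbf{Verdict: genuine gap in Step~3.} Your Steps 1--2 are essentially the paper's algorithm: sign estimates from a first batch of $N=\Theta(n^{3/2}/\eps^2)$ samples, then the $\hat\sigma$-weighted Hamming weight averaged over a fresh batch. Your mean bound $\E[T]\ge\Omega(n^{-1/4}\|\mu\|_1)$ is also correct.

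Step~3 is where the difficulty lies. Put $X:=\sum_i\mu_i\hat\sigma_i=\E[T\mid\hat\sigma]$. Then $\mathrm{Var}(T)=\E[\mathrm{Var}(T\mid\hat\sigma)]+\mathrm{Var}(X)$. Your estimate on $\mathrm{Var}_{\calD}(\sum_i\sigma_i x_i)$ concerns only the first term, and nothing in your plan bounds $\mathrm{Var}(X)$. A priori $\mathrm{Var}(X)$ can be of order $\|\mu\|_1^2$, far above $\E[X]^2\approx n^{-1/2}\|\mu\|_1^2$. The marginal bounds $\E[\sigma_i\hat\sigma_i]\ge\Omega(n^{-1/4})$ only give, by reverse Markov, $\Pr[X\ge\E[X]/2]\ge\Omega(n^{-1/4})$. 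This cannot be improved from marginals alone: a $\pm1$ vector whose coordinates all have mean $p$ can have positive sum with probability only $p$ (all-ones with probability $p$, balanced otherwise).

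Your fallback does not fix this. Median amplification needs per-run success probability above $1/2$. Boosting a per-run success probability of $n^{-1/4}$ needs $\Omega(n^{1/4})$ fresh orientation batches, which is exactly the polynomial loss behind the paper's $\widetilde O(n^{5/3})$ warm-up. What you need is the paper's main technical lemma (\Cref{prop:BE-bash}): for unate $\calD$, $|\mathrm{Cov}(\hat\sigma_i,\hat\sigma_j)|=O(N^{-1/2})$. This gives $\mathrm{Var}(X)\le\sum_i\mu_i^2\mathrm{Var}(\hat\sigma_i)+O(N^{-1/2})\|\mu\|_1^2=O(n^{-1/2}\|\mu\|_1^2)$, with a constant that shrinks as the constant in $N$ grows, and then Chebyshev goes through.

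The route you suggest for this bound does not work, for two reasons.
\begin{itemize}
\item A monotone PMF does not make $\calD$ positively associated. The uniform distribution on $\{(+1,-1),(-1,+1),(+1,+1)\}$ is monotone, yet $\mathrm{Cov}(x_1,x_2)=-4/9$, so Harris/FKG is unavailable.
\item Even where FKG applies, it points the wrong way. It would give $\mathrm{Cov}(\hat\sigma_i,\hat\sigma_j)\ge 0$, but what must be excluded is large \emph{positive} covariance. That inflates $\mathrm{Var}(X)$ and, by linearity, leaves $\E[T]$ unchanged, so it never helps the mean.
\end{itemize}

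The missing structural fact is quantitative. For a monotone distribution with $\mu_1+\mu_2\le 1/2$, the correlation of $x_1,x_2$ satisfies $|\rho|\le\mu_1+\mu_2$ (\Cref{absrho}). So weakly biased coordinates are nearly uncorrelated, while strongly biased ones have nearly deterministic signs. The paper uses this in three steps:
\begin{itemize}
\item Owen's orthant formula for the Gaussian proxy gives $|\mathrm{Cov}_{\approx}|\le(\mu_1+\mu_2)e^{-N(\mu_1+\mu_2)^2/8}=O(N^{-1/2})$.
\item The bivariate Berry--Esseen theorem transfers this to the actual signs at additional cost $O(N^{-1/2})$.
\item The case $\mu_1+\mu_2>1/2$ is handled by Hoeffding, since both signs are then correct with overwhelming probability.
\end{itemize}

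The same fact also gives $|\mathrm{Cov}(x_i,x_j)|=O(|\mu_i|+|\mu_j|)$ for all pairs. Hence $\mathrm{Var}(\sum_i s_i x_i)\le n+O(n\|\mu\|_1)$ for every fixed $s\in\{\pm1\}^n$, which is the estimate you wanted. Once this lemma is in hand, the second-batch noise in your Chebyshev plan is also under control.
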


Here $\dtv(\cdot)$ refers to the total variation or statistical distance (\Cref{subsec:tv-kl-etc}). 
This result should be contrasted with the $\wt{O}(n/\eps^2)$ bound for uniformity testing of monotone distributions due to Rubinfeld and Servedio~\cite{RubinfeldS09}.\footnote{See also Adamaszek, Czumaj, and Sohler~\cite{adamaszek2010testing} who improved the sample complexity to $O(n/\eps^2)$.} 
At a high level, moving from monotone to unate distributions introduces an unknown orientation vector $\sigma$ (cf.~\Cref{def:unate-dist}); any naive strategy that first learns $\sigma$ and then runs the \cite{RubinfeldS09} tester provably incurs an $\Omega(n^2)$ sample overhead.
Our algorithm bypasses this bottleneck by never learning $\sigma$ explicitly. Instead, we introduce a weak-orientation learning framework whose errors are provably weakly correlated across coordinates. 
This allows a small average bias to be amplified into a global signal using only $\widetilde{O}(n^{3/2})$ samples. 
We believe this structural result on limited correlations among marginals of unate distributions is of independent interest. 
% A direct attempt to adapt the \cite{RubinfeldS09} tester to the unate setting---by learning the unknown orientation vector $\sigma$ (cf.~\Cref{def:unate-dist})---provably requires $\Omega(n^2)$ samples. 
% Our $\wt{O}(n^{3/2})$ sample tester circumvents this barrier through a more delicate analysis that exploits limited correlations among the marginals of unate distributions. 
We complement~\Cref{thm:our-unif-test} with a matching lower bound: 

\begin{restatable}[]{theorem}{unifLB}
\label{thm:our-unif-lb}
    Let $\calA$ be any algorithm which, given i.i.d.~sample access to an unknown distribution $\calD$, has the performance guarantee from~\Cref{thm:our-unif-test} with $\eps = 1-n^{-10}$. 
    Then $\calA$ must draw $\Omega(n^{3/2}/\log^2 n)$ samples from $\calD$. 
\end{restatable}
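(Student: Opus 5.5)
We prove \Cref{thm:our-unif-lb} with a standard indistinguishability argument. We build a distribution over unate ``no'' instances $\calD_{\bsigma}$, each at total variation distance at least $1-n^{-10}$ from $\calU$. We then show that for $m = o(n^{3/2}/\log^2 n)$ samples, the law of $m$ i.i.d.\ samples from $\calD_{\bsigma}$, averaged over the random hidden orientation $\bsigma$, is $o(1)$-close in total variation to $\calU^{\otimes m}$. The natural family is a ``random-orientation product-like'' distribution. Draw a uniformly random $\bsigma\in\bn$ and let $\calD_{\bsigma}$ be the product distribution with $\Ex[\bx_i] = \delta\bsigma_i$ for $\delta = \Theta(\sqrt{\log n}/\sqrt n)$. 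This is unate, since every product distribution is unate. It is also far from uniform: under $\calD_\bsigma$ the statistic $\sum_i \bsigma_i \bx_i$ concentrates around $\delta n = \Theta(\sqrt{n\log n})$, while under $\calU$ it is $O(\sqrt{n})$ with Gaussian tails. A threshold at $\frac{1}{2}\delta n$ therefore gives distance $1-n^{-10}$ once the constant in $\delta$ is large enough.

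The key computation bounds the $\chi^2$-divergence between the mixture $\Ex_{\bsigma}[\calD_{\bsigma}^{\otimes m}]$ and $\calU^{\otimes m}$ using the Ingster second-moment method. Introduce an independent copy $\btau$ of $\bsigma$. The divergence is $\Ex_{\bsigma,\btau}\bigl[\bigl(\Ex_{x\sim\calU}[\calD_\bsigma(x)\calD_\btau(x)/\calU(x)^2]\bigr)^m\bigr]-1$. Per coordinate, the inner factor is $1+\delta^2\bsigma_i\btau_i$, so this equals $\Ex\bigl[\prod_i(1+\delta^2\bsigma_i\btau_i)^m\bigr]-1 \le \Ex[\exp(m\delta^2\sum_i \bsigma_i\btau_i)]-1 \le \exp(m^2\delta^4 n/2)-1$. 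This is small when $m \ll 1/(\delta^2\sqrt n) = \sqrt{n}/\log n$. That is far too weak, so a product construction only certifies about $\sqrt n$ samples. The real construction must put the signal in the orientation while keeping low-order moments nearly invisible.

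The intended construction follows the upper bound's intuition that the bottleneck is learning $\bsigma$ coordinatewise. Take $\calD_\bsigma$ to be uniform on the set of $x$ with $\langle x,\bsigma\rangle \ge t$ for $t=\Theta(\sqrt{n\log n})$. This is a unate halfspace, and its mass is $n^{-\Theta(1)}$, so the distance to $\calU$ is essentially $1$. Its marginals still have bias about $\delta=t/n$, which is too large. To kill the first moments, randomize $t$ or mix over thresholds; to exploit exactness, replace the halfspace by a two-sided slab. The cleanest version is to take $\calD_\bsigma$ uniform on $\{x: |\langle x,\bsigma\rangle|\ge t\}$. This set is symmetric under $x\mapsto -x$, so odd moments vanish. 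The unateness of this set needs checking, and may require choosing a one-sided variant with $\bsigma$ hidden among $\pm$ symmetries. With this choice the second-moment computation depends on pairwise overlaps: the leading term becomes $m^2\cdot\Ex[(\text{degree-2 correlation})^2]\approx m^2\cdot n^2\delta^4\cdot(1/n)$. Requiring this to be $o(1)$ gives $m \ll 1/(\delta^2\sqrt n)\cdot\sqrt n = n^{3/2}/\log^2 n$ after substituting $\delta^2 \approx \log n/n$ appropriately. This matches the target, including the $\log^2 n$ loss.

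The main obstacle is this last step. I must choose the construction so that it is simultaneously unate, $(1-n^{-10})$-far from uniform, and has vanishing degree-$1$ Fourier mass, with degree-$2$ coefficients of size $\delta^2\bsigma_i\bsigma_j$. I must then control all higher-degree terms in $\Ex[\exp(\cdots)]$ for $m$ up to $n^{3/2}/\log^2 n$. If the direct $\chi^2$ bound blows up because of the rare-tail conditioning, I would instead apply Poissonization and bound the divergence coordinate-free via a moment-generating argument on the overlap $\langle\bsigma,\btau\rangle$, truncating on the high-probability event $|\langle\bsigma,\btau\rangle|\le O(\sqrt{n\log n})$.
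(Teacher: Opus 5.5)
\textbf{There is a genuine gap: the proposal never produces a valid hard family, and the constructions it considers cannot work.}

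\emph{The slab is not unate.} The two-sided slab $\{x : |\langle x,\sigma\rangle| \ge t\}$ fails unateness in every direction $i$. Its indicator goes up across edges on which $\langle x,\sigma\rangle$ moves from $t-2$ to $t$. It goes down across edges (flipped in the same direction) on which it moves from $-t$ to $-t+2$.

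\emph{The halfspace has biases that are too large.} The one-sided halfspace is unate. But for its mass to be at most $n^{-10}$ you need $t = \Omega(\sqrt{n\log n})$. This forces $|\mu_i| \ge t/n = \Omega(\sqrt{\log n/n})$ on every coordinate, which is exactly the regime of your product computation. There the leading term $m^2 n\delta^4$ certifies only $m \approx \sqrt n/\log n$. This is not an artifact of the analysis. The orientation-free statistic $\sum_i(\sum_j \bx^{(j)}_i)^2$ shifts in mean by $m(m-1)\|\mu\|_2^2 = \Theta(m^2\log n)$, against a standard deviation of $\Theta(m\sqrt n)$ under $\calU$. So such instances are genuinely detectable with $\wt{O}(\sqrt n)$ samples, and no Poissonization or truncation will rescue them.

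\emph{Vanishing degree-$1$ mass is unattainable.} Mixing thresholds does not help, since every threshold biases coordinate $i$ in the same direction $\sigma_i$. More fundamentally, a single unate instance that is far from uniform cannot have vanishing degree-$1$ mass. After reorienting it to be monotone, \Cref{lemma:RS} gives $\sum_i|\mu_i| \ge \eps$. Only the first moments averaged over $\bsigma$ vanish, and the random orientation already gives you that.

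\emph{The closing arithmetic is wrong.} The expression $m^2\cdot n^2\delta^4\cdot(1/n) = m^2 n\delta^4$ is literally the product-case expression. Moreover, $1/(\delta^2\sqrt n)\cdot\sqrt n = 1/\delta^2 \approx n/\log n$, not $n^{3/2}/\log^2 n$.

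\textbf{The missing idea is to decouple ``far from uniform'' from the size of the marginal biases.} Your own Ingster bound $\exp(nm^2\delta^4/2)-1$ is small for $m \le c\,n^{3/2}/\log^2 n$ precisely when $\delta = \Theta(\log n/n)$. So you need unate instances with biases of only that order that are nonetheless $(1-n^{-\Omega(1)})$-far from uniform. No product distribution or halfspace achieves this, but a sparse mixture of subcubes does.

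The paper uses the Rubinfeld--Servedio decomposition. The $\tau$-biased product $\calD_\tau$, with $\tau = \Theta(\log n/n)$, is a mixture of uniform measures on monotone subcubes whose codimension is distributed as $\Binomial(n,2\tau)$, i.e.\ $\Theta(\log n)$. Averaging $T = n^3$ independently drawn such subcubes gives a monotone $\bq$ with two properties:
\begin{itemize}
\item Its support has measure $n^{-\Omega(1)}$, with any desired exponent by adjusting the constant in $\tau$. Hence $\bq$ is nearly maximally far from $\calU$ (\Cref{prop:RS-lemma-13}).
\item A birthday argument (\Cref{prop:RS-claim-7}) shows that $m \le 0.1\sqrt T$ samples from $\bq$ are $1/100$-close to $m$ samples from $\calD_\tau$. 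This is because with high probability the samples come from distinct subcubes.
\end{itemize}
A uniformly random orientation $\bsigma$ makes $\bq^{\bsigma}$ unate. The triangle inequality then reduces the problem to bounding $\dtv\bigl(\Ex_{\bsigma}[(\calD_\tau^{\bsigma})^{\otimes m}], \calU^{\otimes m}\bigr)$. That is exactly your per-coordinate $\chi^2$ computation with $\delta = 2\tau$, giving $O(nm^2\tau^4)$, which is small once $m \le c\,n^{3/2}/\log^2 n$.
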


Our proof of \Cref{thm:our-unif-lb} extends the ``monotone decomposition method'' of Rubinfeld and Servedio~\cite{RubinfeldS09} to the setting of unate distributions; see \Cref{sec:tech-overview} for more details.
 
\begin{remark} 
One might hope that the additional power of the subcube conditional model could yield improved bounds for uniformity testing of unate distributions. 
Recall that uniformity testing of \emph{arbitrary} distributions can be done using $\wt{O}(\sqrt{n})$ subcube queries~\cite{canonne2021random}. 
Unfortunately, it turns out that the additional structure of unateness cannot beat this baseline: Chakrabarty, Chen, Ristic, Seshadhri, and Waingarten~\cite{CCRSW25} have recently shown that even the easier problem of testing uniformity of monotone distributions in the subcube model already requires $\wt{\Omega}(\sqrt{n})$ queries.  
\end{remark}

\subsubsection{Testing Unateness of Distributions}

We now turn to the problem of testing whether an unknown distribution over $\bn$ is itself unate. 
In the standard access model where one receives i.i.d.~samples from the unknown distribution being tester, this task is intractable: even for the simpler case of monotonicity, classical ``birthday paradox'' arguments imply that exponential-in-$n$ sample complexity is unavoidable~\cite{aliakbarpour2019towards,rubinfeld2020monotone}. 
It is therefore necessary to consider stronger access models to obtain meaningful algorithms. 

Motivated by this, Chakrabarty, Chen, Ristic, Seshadhri, and Waingarten~\cite{CCRSW25} studied monotonicity testing in the {subcube conditional model}, where the tester can query samples conditioned on arbitrary coordinate restrictions. 
Building on their results, we give an efficient algorithm  for testing unateness in the same model. 

\begin{restatable}[]{theorem}{subcubeUB}
\label{thm:our-subcube-UB}
    Let $\eps \in (0,1)$. 
    There is an algorithm, \emph{\SubcubeUnate} (\Cref{alg:subcube-unate}) which, given subcube query access to an unknown distribution $\calD$ over $\bn$, makes $\wt{O}(n^{3/2}/\eps^2)$ queries and has the following performance guarantee:
    \begin{itemize}
        \item If $\calD$ is unate, it outputs ``accept'' with probability at least $2/3$. 
        \item If $\calD$ is $\eps$-far in TV~distance from every unate distribution on $\bn$, then it outputs ``reject'' with probability at least $2/3$. 
    \end{itemize}
\end{restatable}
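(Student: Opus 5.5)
We propose to build \SubcubeUnate{} on the monotonicity tester of Chakrabarty et al.~\cite{CCRSW25}. That tester uses $\wt{O}(n)$ subcube queries, and its analysis is roughly the following. If $\calD$ is $\eps$-far from monotone, one finds a random restriction $\brho$ that leaves a small number of free coordinates. With noticeable probability, the conditional distribution $\calD|_{\brho}$ then exhibits a violated edge: a coordinate $i$ and a setting of the other free coordinates where flipping $x_i$ from $-1$ to $+1$ decreases mass by an amount detectable with few samples. We will use the unate analogue of the underlying structural statement. If $\calD$ is $\eps$-far from unate, then for a typical random restriction there is a coordinate $i$ whose edges disagree in direction across different fixings of the remaining coordinates. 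Concretely, there exist two edges in direction $i$, one with $\calD(x^{i\to +1}) > \calD(x^{i\to -1})$ and one with the reverse inequality, each with a non-negligible gap.

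The tester proceeds in three steps.

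\textbf{Step 1: direction estimation.} For a random restriction $\brho$ with $O(1)$ free coordinates including $i$, estimate the sign of the $i$-th edge bias of $\calD|_{\brho}$. Each such estimate costs $\wt{O}(1/\eps^2)$ queries.

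\textbf{Step 2: detecting conflicting directions.} For each coordinate $i$, draw many restrictions and look for two that produce confidently opposite signs. This is exactly a violation of unateness in coordinate $i$.

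\textbf{Step 3: accept/reject.} Reject if any coordinate shows confident disagreement; otherwise accept.

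Completeness is immediate. For unate $\calD$, every subcube conditional is unate with the same orientation $\sigma$. So every edge in direction $i$ has sign $\sigma_i$ or zero, and confident disagreement occurs only with small probability after a union bound over the $\wt{O}(n^{3/2})$ estimates.

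Soundness is where the $n^{3/2}$ comes from. The distance to unateness lower-bounds, via a matching/rearrangement argument adapted from the monotone case, the total ``conflict mass'' $\sum_i \min(P_i^+, P_i^-)$. Here $P_i^{\pm}$ is the probability, under a random restriction, of seeing an $\eps$-significant edge in direction $i$ with sign $\pm$. The argument should show $\sum_i \min(P_i^+,P_i^-) \gtrsim \eps$ up to polylog factors, but these conflicts may be spread thinly. For example, $\sqrt n$ coordinates could each have $\min(P_i^+,P_i^-)\approx \eps/\sqrt n$. Detecting a conflict in coordinate $i$ requires observing both signs, which costs about $1/P_i^+ + 1/P_i^-$ samples for that coordinate. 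We would balance two strategies:
\begin{itemize}
\item sample every coordinate $\sqrt n$ times, at cost $n\cdot\sqrt n$ estimates;
\item handle heavily-conflicted coordinates via a birthday-type argument.
\end{itemize}
A bucketing over the values of $P_i^+$ and $P_i^-$ then yields total cost $\wt{O}(n^{3/2}/\eps^2)$.

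The main obstacle is the structural lemma: showing that being $\eps$-far from unate forces conflict mass in the above sense. The difficulty is that, unlike monotonicity, the ``correct'' orientation is a global choice. A distribution can look monotone in each coordinate under some local orientation and still be far from unate because no single $\sigma$ works everywhere. Our first attempt will go through the following reduction:
\begin{enumerate}
\item For each $i$, choose $\sigma_i$ to be the majority sign under random restrictions.
\item Apply the monotone distance-to-violations bound of~\cite{CCRSW25} to $\calD(x\oplus\sigma)$. Since this distribution is $\eps$-far from monotone, it has $\eps$-mass of violating edges.
\item Each such violation in coordinate $i$ is, by the majority choice, a minority-sign edge. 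Charging these violations gives the bound on $\sum_i \min(P_i^+,P_i^-)$.
\end{enumerate}
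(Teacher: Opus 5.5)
There is a genuine gap. Your structural lemma, together with the budget of $\widetilde O(1/\eps^2)$ queries per sign estimate, assumes that violations appear on edges whose bias exceeds a fixed ``$\eps$-significant'' threshold. That is false. \Cref{lemma:CCRSW} is inherently multiscale: it only gives some $w\le O(\log(n/\eps))$ with $\Pr_{\bx\sim\calD,\bi}[(\delta_{\bx,\bi})_+^2\ge 2^{-w}]\ge\widetilde\Omega(2^w\eps^2/n)$, and $2^{-w/2}$ can be as small as about $\eps/\sqrt n$.

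Here is a concrete failure. Let $\calD(x)\propto\cosh(c\sum_i x_i)$, the equal mixture of the two product distributions with coordinate means $\pm 2\eta$, where $\tanh c=2\eta$ and $\eta=1/\sqrt n$.
\begin{itemize}
\item A direct computation gives $\delta_{x,i}=-\tanh(c\sum_{j\neq i}x_j)\tanh c$. So every edge has bias at most $2/\sqrt n$ in absolute value.
\item The $i$-th marginal bias of any conditional $\calD|_{\rho}$ is a convex combination of such edge biases, so it is also $O(1/\sqrt n)$.
\item Yet $\calD$ is $\Omega(1)$-far from unate. Since $\calD(x)=\calD(-x)$, you may assume $S_+=\{i:\sigma_i=+1\}$ has $|S_+|\ge n/2$. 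The down-set $\{x:\sum_{i\in S_+}x_i\le-\sqrt{|S_+|}\}$ then gets $\Omega(1)$ more mass under $\calD^\sigma$ than under $\calU$, because the standardized sum is close to a mixture of two Gaussians shifted by $\pm\Theta(1)$. By Harris' inequality, every monotone distribution gives this set at most its $\calU$-mass.
\end{itemize}
So for constant $\eps$ all your $P_i^\pm$ vanish, and no $\widetilde O(1)$-query sign estimate is ever confident. Your tester therefore accepts a distribution that is far from unate. The bound $\sum_i\min(P_i^+,P_i^-)\gtrsim\eps$ cannot hold for any fixed significance threshold.

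The missing idea is to randomize the scale, as in \cite{CCRSW25}. Root the edge at a sample $\bx\sim\calD$, draw $\bw$ with $\Pr[\bw=w]\propto 2^{-w}$, and spend $\Theta(2^{\bw}\log(nk))$ queries on $\{\bx,\bx^{\oplus i}\}$. This costs $\widetilde O(1)$ queries per trial in expectation. The conflict measure is then $q_i^\pm=\Pr_{\bw,\bx\sim\calD}[(\pm\delta_{\bx,i})_+^2\ge 2^{-\bw}]$.

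Your majority-orientation reduction does work, but only if the majority is taken with respect to this same measure. Applying \Cref{lemma:CCRSW} to $\calD^\sigma$, with $\sigma_i$ the sign attaining $\min(q_i^+,q_i^-)$, gives $\sum_i\min(q_i^+,q_i^-)\ge\widetilde\Omega(\eps^2)$ immediately; this is \Cref{cor:CCRSW}. If the majority is taken under any other measure, the violations of $\calD^\sigma$ in coordinate $i$ could be the larger of $q_i^\pm$, and your charging step breaks.

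Finally, no bucketing or birthday argument is needed. Use $k=\widetilde\Theta(\sqrt n/\eps^2)$ trials for every coordinate and set $m_i=\min(q_i^+,q_i^-)$. Then $r_i\ge(1-(1-m_i)^{k/2})^2$, and this function is convex on $(0,2\ln 2/k)$ (\Cref{inflection}). Either some $m_i\ge 2\ln 2/k$, giving $r_i\ge 1/4$, or by Jensen the worst case is conflict mass spread evenly over all $n$ coordinates. That case gives $\sum_i r_i\gtrsim n(k\eps^2/n)^2=\Omega(1)$, up to polylogarithmic factors absorbed in $k$. It is also the case that forces $k\approx\sqrt n$, not your $\sqrt n$-coordinate example.
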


We note that the algorithm of~\Cref{thm:our-subcube-UB} also works in the weaker \emph{coordinate oracle} model of Blanca et al.~\cite{blanca2024complexity} where queries are only made on one-dimensional subcubes. 
We complement \Cref{thm:our-subcube-UB} with the following lower bound: 

\begin{restatable}[]{theorem}{subcubeLB}
\label{thm:subcube-lower-bound}
    There exists a constant $\eps_0 > 0$ such that the following holds: If an algorithm tests whether an unknown distribution $\calD$ over $\bits^n$ is unate or $\eps_0$-far from unate by drawing $Q_1$ samples from $\calD$ and making $Q_2$ (possibly adaptive) subcube queries on subcubes of dimensions $\le d$, we have
    \[
        Q_1 + 2^dQ_2 \ge \Omega(n^{2/3} / \log^3 n).
    \]
\end{restatable}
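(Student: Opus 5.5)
We will prove \Cref{thm:subcube-lower-bound} via Yao's principle. We construct a distribution $\calD_{\mathrm{yes}}$ over unate distributions and a distribution $\calD_{\mathrm{no}}$ over distributions that are $\eps_0$-far from unate, and show that any deterministic adaptive algorithm with $Q_1 + 2^d Q_2 = o(n^{2/3}/\log^3 n)$ has transcripts whose laws under the two ensembles are $o(1)$-close in TV distance.

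The construction we would try is a ``hidden matching of coordinates'' family. Pick a random set $S$ of $m$ disjoint pairs $(i,j)$ of coordinates together with random signs. Let $\calD$ be the uniform distribution over $\bn$ reweighted by a product over pairs of small factors $(1+\delta\, g_{ij}(x_i,x_j))$ with $\delta = \Theta(1)$ small. In the yes case each $g_{ij}$ is a unate bivariate function, for instance $\sigma_i x_i + \sigma_j x_j$; then every factor, and hence the product, is unate with orientation $\sigma$. In the no case each $g_{ij}$ is chosen so that the $(i,j)$ marginal block is not unate, for instance $x_i x_j$-type (XOR) correlations. Choosing parameters so that the yes and no cases agree on all one-dimensional marginals, and differ only in pairwise structure, forces an algorithm to locate a pair $(i,j)\in S$.

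For the far-from-unate claim, we argue that any unate $\calD'$ has a single orientation $\sigma$. On each planted pair, an XOR-type block incurs constant TV loss relative to every orientation. Since the pairs are disjoint and the blocks multiply, a tensorization or averaging argument (conditioning on the other coordinates and using that $\calD'$ restricted to each 2-dimensional face is still unate) sums these losses to $\Omega(1)$ once $m$ is linear-ish in $n$.

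The indistinguishability step is the main obstacle. A sample, or a subcube query of dimension $\le d$ (which we simulate by at most $O(2^d)$ ``effective'' full-information samples on its free coordinates, which is why the budget $Q_1 + 2^dQ_2$ appears), reveals information about a pair only through correlations within the free coordinates. We would bound the $\chi^2$ or KL contribution and show that distinguishing requires some pair $(i,j)\in S$ to be free in, or jointly exposed across, enough queries to detect a $\delta$-biased correlation. Given $q$ queries, each exposing $O(d)$ free coordinates plus the fixed coordinates, an adaptive algorithm learns about candidate pairs only slowly. We would model the process as the algorithm searching for hidden pair edges, where detecting a single edge costs about $1/\delta^2$ co-exposures. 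A birthday-type argument across samples, in which full samples leak weak evidence about all $\binom n2$ pairs simultaneously, then yields the $n^{2/3}$ threshold: roughly $q^3/n^2$ total evidence, balancing $q$ samples against $n$ candidate partners per coordinate. The $\log^3 n$ loss comes from the union bounds and the truncation of high-deviation events. Making this hybrid argument rigorous under adaptivity, via a coupling in which the transcripts coincide until a ``revealing'' event occurs and then bounding the probability of that event, is where most of the work lies.
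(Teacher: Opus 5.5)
Your proposal has a genuine gap. The indistinguishability step is the entire content of the theorem, and you only describe it heuristically: neither the ``$q^3/n^2$ total evidence'' nor the origin of the $\log^3 n$ is derived. Worse, the family you propose cannot satisfy it.

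Write a two-bit block as $\frac14(1+a x_i+b x_j+c\,x_ix_j)$. It is unate iff $|c|\le\min\{|a|,|b|\}$. Moreover, every two-dimensional marginal of a unate distribution is unate, since summing a monotone mass function over the remaining coordinates keeps it monotone. So in your yes case all $\binom n2$ two-dimensional marginals are unate. In your no case, each planted pair has a marginal that is $\Omega(1)$-far from unate; that is exactly your ``constant TV loss per pair.''

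A tester can therefore draw $O(\log n)$ unconditional samples and estimate every two-dimensional marginal to small constant accuracy (Hoeffding plus a union bound). It rejects iff some estimate is far from unate, and this separates your ensembles with high probability. So no $n^{2/3}$ bound can come from this construction. The same fact shows that a single such pair already makes $\calD$ constant-far from unate, so no tensorization is needed.

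Two further points about the construction:
\begin{itemize}
\item With $g_{ij}=\sigma_ix_i+\sigma_jx_j$ versus an $x_ix_j$-type term, the one-dimensional marginals do not agree as you claim. Making them agree (same $a,b$) forces the blocks to differ in their $x_ix_j$ coefficient.
\item Shrinking the per-pair violation to $\eta=o(1)$ would mean defeating, adaptively, this marginal test, global pairwise-moment statistics, and edge queries that read off a conditional bias of size $\eta$. You would also have to re-prove farness from many small violations. That amounts to proving an adaptive $n^{2/3}$ lower bound from scratch.
\end{itemize}

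The missing idea is that no new hard instance is needed. The paper reduces from Boolean unateness testing and inherits the $\Omega(n^{2/3}/\log^3 n)$ adaptive lower bound of Chen, Waingarten, and Xie. For $f:\bits^n\to\{0,1\}$, let $\calD_f$ be uniform on $f^{-1}(1)$.
\begin{itemize}
\item Its mass function is $\alpha f$ with $\alpha=1/|f^{-1}(1)|$, so unateness transfers from $f$ to $\calD_f$.
\item Conversely, if $\calD_f$ is $\eps$-close to unate, then $f$ is $2\eps$-close to unate. Fix the orientation of the nearby unate distribution and take the LP for the $L_1$-closest unate function. It has a basic feasible solution of value at most $2\eps$. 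Such a solution is constant on each component of tight edge constraints and agrees with $\alpha f$ at some point of each component, hence is $\{0,\alpha\}$-valued.
\item Membership queries to $f$ simulate the distribution tester. An unconditional sample costs $O(1/\eps)$ expected queries by rejection sampling, after first checking that $f$ is not nearly the zero function. A subcube query of dimension $d$ costs $2^d$ queries, obtained by querying every point of the subcube.
\end{itemize}
This simulation is where the budget $Q_1+2^dQ_2$ comes from, with $\eps_0$ equal to half of the constant distance parameter in the Boolean lower bound.
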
 

We note that all prior lower bounds in the subcube model are based on product distributions as hard instances. 
Since every product distribution is trivially unate, these techniques cannot be applied to this setting.
To the best of our knowledge, ours is the first lower bound that breaks this barrier; we return to this point in~\Cref{subsec:discussion}. 

\subsection{Main Ideas} 
\label{sec:tech-overview}

We give a brief technical overview of our results. 
Although both of our upper bounds end up with the same asymptotic sample complexity of $\wt{O}(n^{3/2}/\varepsilon^2)$, that the ideas underlying each of them are quite different. 

\subsubsection{Testing Uniformity of Unate Distributions}

\paragraph{Upper Bound.} 

A natural first attempt is to adapt the Rubinfeld--Servedio~\cite{RubinfeldS09} uniformity tester for monotone distributions to the unate setting.  
The idea would be to first learn the hidden ``orientation vector'' $\sigma^\ast \in \bits^n$ (cf.~\Cref{def:unate-dist}) such that the reoriented distribution $\calD^{\sigma^\ast}$, where $\bx\sim\calD^{\sigma^\ast}$ is obtained by drawing $\by \sim \calD$ and setting $\bx_i = \sigma^\ast_i \cdot \by_i$, is monotone. 
Since total variation distance is preserved under bijections, $\dtv(\calD, \calU) = \dtv(\calD^{\sigma^\ast}, \calU)$. 
Thus, if $\sigma^\ast$ were known, one could directly apply the $O(n/\varepsilon^2)$-sample uniformity tester of~\cite{RubinfeldS09,adamaszek2010testing}. 

However, exactly recovering $\sigma^\ast$ can be prohibitively expensive. 
Consider, for example, a product distribution where each coordinate has mean $\pm \Theta(1/n)$. 
Such a distribution is $\Omega(1)$-far from uniform, but distinguishing $\E[x_i]=0$ from $\E[x_i]=\pm \Theta(1/n)$ for each coordinate requires $\Omega(n^2)$ samples by Chernoff bounds. 
This naive approach of fully learning the orientation thus fails to give the desired sample complexity of $\wt{O}(n^{3/2})$. 

Our algorithm circumvents this barrier by avoiding the need to learn $\sigma^\ast$ exactly. 
Instead, we estimate a weakly correlated proxy for $\sigma^\ast$ using only $\wt{O}(n^{3/2}/\varepsilon^2)$ samples. 
The key technical step (\Cref{prop:BE-bash}; which may be of independent interest) shows that for unate distributions, the coordinate-wise sign estimates cannot exhibit strong correlations: indeed, their covariances decay at rate $O(1/\sqrt{m})$ after $m$ samples. 
This weak dependence is sufficient to amplify a small average bias across coordinates into a global signal, allowing us to test uniformity with $\wt{O}(n^{3/2})$  samples. 
\paragraph{Lower Bound.} 

Our lower bound builds on the ``monotone decomposition'' method of Rubinfeld and
Servedio~\cite{RubinfeldS09}, which we briefly recall. 
Concretely, one starts from a (slightly biased) monotone product distribution $\calD_\tau$, which admits a representation as a convex combination of
uniform distributions on monotone subcubes (the ``decomposition''). 
Using this representation, we define a random distribution $\bq$ by sampling $T$ such subcubes
and averaging their uniform measures, and show that for appropriate parameters $\bq$ is a monotone distribution that is (with high probability) very far from uniform even though it is indistinguishable from $\calD_\tau$ with $\ll n$ samples. 

To lift this construction from monotone to unate distributions, we apply a
\emph{random orientation}: we draw a uniformly random $\bsigma\in\{\pm1\}^n$ and
output $\bq' := \bq^{\bsigma}$ which is the distribution $\bq$ but with output bits flipped according to $\bsigma$ (see \Cref{notation:D-sigma} for a precise definition). 
This preserves the distance from uniformity since total variation distance is invariant under bijections, while hiding the global direction of
bias and ensuring that $\bq'$ is a valid hard instance in the unate class. 
The key point is that although $\bq'$ is far from uniform with high probability, it
is information-theoretically difficult to distinguish $\bq'$ from $U$ using fewer
than $\widetilde{\Omega}(n^{3/2})$ samples. The proof proceeds in two steps:
\begin{itemize}
    \item First, a coupling argument shows that $m$ samples from $\bq'$ are close
(in total variation distance) to $m$ samples from the randomly oriented biased product
distribution $\calD_\tau^{\bsigma}$. (We note that this step holds for every $m$.)
    \item Second, averaging over the random orientation $\bsigma$
forces strong cancellations, and a $\chi^2$-divergence calculation on one-dimensional
marginals shows that $(\calD_\tau^{\bsigma})^{\otimes m}$ itself is close to $U^{\otimes m}$
whenever $m \ll n^{3/2}$.
\end{itemize}
Putting these together yields our matching lower bound,
up to polylogarithmic factors.

% Our lower bound relies on the ``monotone decomposition'' method of Rubinfeld and Servedio~\cite{RubinfeldS09}. \red{TODO}. By constructing mixtures of biased product distributions and then randomly flipping coordinates, we obtain hard instances of unate distributions that are far from uniform, but look nearly uniform to any tester using fewer than $\wt{\Omega}(n^{3/2})$ samples. 
% This yields our matching lower bound (up to polylogarithmic factors). 

\subsubsection{Testing Unate Distributions via Subcube Queries}
\label{subsec:subcube-overview}

\paragraph{Upper Bound.} For the task of testing whether an unknown distribution is itself unate, we work in the \emph{subcube conditional model}, following the recent work of Chakrabarty et al.~\cite{CCRSW25}. 
At a high level, our algorithm adapts the classical \emph{edge tester} for Boolean unateness~\cite{GGLRS} to the distributional setting. The central tool is a structural lemma of~\cite{CCRSW25}, based on a real-valued ``directed'' generalization of an isoperimetric inequality due to Talagrand~\cite{Talagrand:93}, which controls the ``edge bias'' of monotone distributions. 
We refer the reader to~\Cref{lemma:CCRSW} for a precise statement.  
Using this lemma, we show that if $\calD$ is $\varepsilon$-far from unate, then there exists some coordinate $i$ that simultaneously witnesses both ``monotone'' and ``anti-monotone'' violations. 
This can then be detected by making $\wt{O}(n^{3/2})$ subcube edge queries; the analysis requires an elementary but careful calculation. 

\paragraph{Lower Bound.} 

On the lower-bound side, we reduce Boolean unateness testing to distributional unateness. 
Given a Boolean function
$f:\bn\to\{0,1\}$, let $D_f$ be the uniform distribution on $f^{-1}(1) \sse \bn$. 
Our reduction is based on two key facts: 
\begin{itemize}
    \item First, we show that (i) if $f:\bn\to\zo$ is unate, then so is $D_f$, and (ii) conversely, if 
$f$ is $\eps$-far from unate (in Hamming distance), $D_f$ is $\Omega(\eps)$-far from unate (in total variation distance). 
    \item Second, we note that membership query access to $f$ allows us to simulate unconditional samples from $D_f$ and samples from $D_f$ restricted to $d$-dimensional
subcubes (with $2^d$ queries per restriction). 
\end{itemize}
Combining these observations with the $\wt{\Omega}(n^{2/3})$ query lower bound for Boolean unateness testing~\cite{CWX17stoc} yields~\Cref{thm:subcube-lower-bound}. 

\subsection{Related Work}
\label{subsec:related-work}

Monotonicity of Boolean functions is among the most extensively studied properties in sublinear algorithms; we will not attempt to survey the very large body of results here and instead refer the reader to the comprehensive discussion from~\cite{CCRSW25}. 
The problem of unateness testing of Boolean functions was introduced alongside that of monotonicity testing in~\cite{GGLRS}, and a sequence of works~\cite{GGLRS,khot2016n,chakrabarty2016widetilde,CWX17stoc,CWX17focs,chen2019testing} have pinned down its exact query complexity: $\wt{\Theta}(n^{2/3})$ queries are both necessary and sufficient to test unateness of Boolean functions. 

A number of natural distribution testing tasks have exponential sample complexity in high-dimensional settings. 
This includes uniformity testing over $\bits^n$; see the surveys~\cite{canonne2020survey,canonne2022topics} for more discussion on this. 
In order to overcome these lower bounds, many works either make structural assumptions on the distribution being tested, or assume stronger access models to the distribution. 
Examples of the former include monotonicity~\cite{RubinfeldS09,adamaszek2010testing}, Bayesian networks~\cite{canonne2017testing,daskalakis2017square}, Markov random fields~\cite{daskalakis2019testing,bezakova2020lower}, various classes of structured truncations~\cite{he2023testing,de2024detecting,de2025testing}; see also Section~7 of~\cite{canonne2020survey}. 

The subcube conditioning model, introduced by~\cite{canonne2015testing,chakraborty2016power,bhattacharyya2018property}, takes the other route of assuming stronger access to the distribution being tested. 
This model has received much attention in recent years~\cite{canonne2021random,chen2021learning,narayanan2021tolerant,chen2024uniformity,adar2024improved,blanca2024complexity,CCRSW25}, and has also found applications beyond distribution testing to problems in learning theory~\cite{chen2021learning,blanc2023lifting}.
There has also been a nascent line of work on using the subcube conditioning model to make high-dimensional distribution testing practical~\cite{meel2020testing,pote2022scalable}. 

The works most relevant to our results are \cite{RubinfeldS09} and \cite{CCRSW25}. 
Rubinfeld and Servedio~\cite{RubinfeldS09} establish an essentially tight bound of $\wt{\Theta}(n)$ samples for the problem of testing uniformity of an unknown monotone distribution, whereas Charkabarty et al.~\cite{CCRSW25} give $\wt{\Theta}(n)$-query upper and lower bounds for testing monotonicity of an unknown distribution in the subcube model. 
Both our results---for uniformity testing as well as unateness testing---rely on ingredients going into the analyses of~\cite{RubinfeldS09,CCRSW25} respectively. 

\subsection{Discussion}
\label{subsec:discussion}

Our results raise a number of intriguing questions for further study; we highlight two compelling directions below. 

\paragraph{Lower Bounds for Unateness Testing with Subcube Queries.} 

Recall that \Cref{thm:subcube-lower-bound} gives a $\wt{\Omega}(n^{2/3})$ query lower bound against algorithms that make $O(1)$-dimensional subcube queries. 
At present, all known lower-bounds in the subcube conditional model~\cite{canonne2021random,chen2021learning,CCRSW25} construct hard instances that are product distributions, which are amenable to moment-matching techniques~\cite{chen2021learning,CCRSW25}. 
However, since product distributions are automatically unate, none of these techniques can be lifted the unateness setting. 
Improving on~\Cref{thm:subcube-lower-bound} is thus likely to lead to new insights and techniques for proving lower bounds in the subcube conditional model. 

\paragraph{$k$-Monotone Distributions.} 

Finally, one can consider distributional analogues of other structural generalizations of monotonicity.  
A natural candidate is the class of \emph{$k$-monotone distributions}, where the distribution's mass function changes direction at most $k$ times along any monotone path from $-1^n$ to $1^n$.  
The class of $k$-monotone functions has already been studied in the Boolean function setting~\cite{grigorescu2017k,canonne2019testing,chen2022new}, and $k$-monotone distributions could provide another structured setting for distribution testing beyond monotonicity.

%% file: sections/prelims.tex
\section{Preliminaries}
\label{sec:prelims}

We use boldfaced letters such as (e.g.~$\bx \sim \bits^n$) to denote random variables.
Unless explicitly stated otherwise, all probabilities and expectations will be with respect to the uniform distribution. 
Throughout, we will write $\calU_n$ for the uniform distribution over $\bn$ and when the dimension $n$ is clear from context we will simply write $\calU$ instead. 

For a distribution $\calD$, it will be convenient to write $\calD^{\otimes m}$ for the distribution of $m$ independent draws from $\calD$. 
We will use the following notation throughout: 

\begin{notation}
\label{notation:D-sigma}
    Given a distribution $\calD$ over $\bn$ and $\sigma \in \bn$, we write $\calD^\sigma$ for the distribution over $\bn$ where a draw $\bx\sim\calD^\sigma$ is obtained by first drawing $\by\sim\calD$ and then setting $\bx_i \coloneqq \sigma_i\cdot\by_i$ for $i\in[n]$. 
\end{notation}

\subsection{The Berry--Esseen Theorem}
\label{subsec:BE}

We will write $N(0, I_d)$ for the $d$-dimensional standard Gaussian distribution, where $I_d$ denotes the $d\times d$ identity matrix. 
We recall the (multi-dimensional) Berry--Esseen theorem (which will be used to prove~\Cref{prop:BE-bash}):

\begin{theorem}[Theorem~1.1 of~\cite{Bentkus:03}] \label{thm:BE}
    Let $\bX_1, \dots, \bX_m$ be i.i.d.~random vectors in $\R^d$ distributed as $\bX$ where $\E[\bX] = 0$ and $\E[\bX\bX^\top] = I_d$. 
    Then for any convex set $A \sse \R^d$, we have 
    \[
        \abs{\Prx\sbra{\frac{\sum_{i=1}^m \bX_i}{\sqrt{m}}\in A} - \Prx[\bG \in A]} \leq O(1)\cdot \frac{d^{1/4}}{\sqrt{m}}\cdot\Ex\sbra{\|\bX\|^3}\,.
    \]
    where $\bG\sim N(0,I_d)$. 
\end{theorem}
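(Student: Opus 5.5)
Since \Cref{thm:BE} is quoted from~\cite{Bentkus:03}, the paper itself simply invokes it; if I had to prove it, I would follow Bentkus' induction-on-$m$ argument combined with smoothing. Write $\bS_m = m^{-1/2}\sum_{i\le m}\bX_i$, $\beta = \Ex[\|\bX\|^3]$, and
\[
\delta_m = \sup_{A \text{ convex}} \bigl|\Prx[\bS_m\in A]-\Prx[\bG\in A]\bigr| .
\]
The goal is to show $\delta_m \le C d^{1/4}\beta/\sqrt{m}$ for an absolute constant $C$. The factor $d^{1/4}$ will come from one geometric input, which I take as known: Ball's (and Nazarov's) bound on the Gaussian surface area of convex sets. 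It says that for every convex $A$ and every $\eps>0$, the Gaussian measure of the shell $A^{\eps}\setminus A^{-\eps}$ is at most $O(d^{1/4})\,\eps$. Here $A^{\pm\eps}$ denote the outer and inner parallel sets.

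First step: smoothing. For convex $A$ and $\eps>0$, I build $h=h_{A,\eps}$ with $\mathbf{1}_{A^{-\eps}}\le h\le \mathbf{1}_{A}$. I take a mollified distance function whose $k$-th derivatives are bounded by $O(\eps^{-k})$ in operator norm, uniformly in $d$. Using $h$ and its outer analogue, plus the surface-area bound, gives the standard smoothing inequality
\[
\delta_m \le \sup_A \bigl|\Ex[h(\bS_m)]-\Ex[h(\bG)]\bigr| + O(d^{1/4})\,\eps .
\]
Second step: Stein or Lindeberg interpolation for the smooth part. I replace the summands one at a time by independent Gaussians, or equivalently solve the Stein equation for $h$, and Taylor-expand to third order. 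The first- and second-order terms cancel because the first two moments match. The third-order remainder for the $i$-th swap is an expectation of $D^3 h$ at a point $\bW_i + \theta m^{-1/2}\bX_i$, where $\bW_i$ is the sum of the other normalized summands.

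The key point is how to bound this remainder. Bounding $\|D^3h\|_\infty\le \eps^{-3}$ naively gives only the rate $\beta m^{-1/2}\eps^{-3}$, which is far too weak. Instead, $D^3 h$ is supported on the $\eps$-shell of $A$. So the remainder is of order $\beta m^{-3/2}\eps^{-3}$ times $\Prx[\bW_i\in \text{shell}]$. That probability is bounded using the induction hypothesis $\delta_{m-1}$ together with the Gaussian surface-area bound: it is at most $O(d^{1/4})(\eps + \beta m^{-1/2}) + 2\delta_{m-1}$. A more careful version integrates the smoothing along the Ornstein--Uhlenbeck semigroup, so that only second derivatives appear, with a $1/\sqrt{t}$ singularity.

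Summing over $i$ yields a recursion of the form
\[
\delta_m \le C_1 d^{1/4}\eps + C_2 \frac{\beta}{\sqrt{m}\,\eps^{2}}\Bigl(d^{1/4}\eps + d^{1/4}\frac{\beta}{\sqrt m} + \delta_{m-1}\Bigr).
\]
Choosing $\eps = K\beta/\sqrt{m}$ with $K$ a large constant makes the coefficient of $\delta_{m-1}$ at most $1/2$ once $m$ exceeds a constant multiple of $\beta^2$. The bound then closes by induction, with the base case trivial since $\delta_m\le 1$.

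The main obstacle is exactly this remainder step: making the derivative bounds and the shell-probability bounds dimension-free except for the single $d^{1/4}$ factor. A crude $\ell_1$-type bound on third derivatives would introduce polynomial factors in $d$. I would therefore first try the semigroup/Stein formulation, which expresses the relevant derivatives as Gaussian integrals over $A$. In that formulation the only dimension dependence enters through the Gaussian surface area of convex sets.
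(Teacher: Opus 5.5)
\textbf{There is a genuine gap: the recursion as you state it does not close, so your argument does not yield the $m^{-1/2}$ rate.} The paper gives no proof of this statement; it only cites Bentkus. Your ingredients are the right ones: smoothing, induction on $m$, and Ball's $O(d^{1/4})$ bound on the Gaussian measure of $\eps$-shells of convex sets. The failure is in the closing step.

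Substitute $\eps = K\beta/\sqrt m$ into your recursion. The term $C_2\frac{\beta}{\sqrt m\,\eps^2}\cdot d^{1/4}\eps$ equals $C_2 d^{1/4}/K$. The term $C_2\frac{\beta}{\sqrt m\,\eps^2}\cdot d^{1/4}\frac{\beta}{\sqrt m}$ equals $C_2 d^{1/4}/K^2$. Neither decays in $m$. The coefficient of $\delta_{m-1}$ is $C_2\sqrt m/(K^2\beta)$, which \emph{grows} with $m$. It is at most $1/2$ only when $m = O(K^4\beta^2)$, which is exactly the regime where the trivial bound $\delta_m\le 1$ already suffices, not for large $m$ as you claim.

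More generally, a remainder of the form $\frac{\beta}{\sqrt m\,\eps^{2}}\bigl(d^{1/4}\eps+\delta_{m-1}\bigr)$ gives at best the rate $d^{1/4}(\beta/\sqrt m)^{1/2}$ after optimizing $\eps$. Your first version, $\eps^{-3}$ times the shell probability, gives only the classical Lindeberg rate $(\beta/\sqrt m)^{1/3}$. Localizing $D^3h$ to the shell buys back one power of $\eps$, but you need to buy back two. Putting one derivative on the Gaussian over the whole Ornstein--Uhlenbeck time range, as you suggest, is precisely what produces this $\eps^{-2}$.

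\textbf{What is missing} is a mechanism that makes each swap cost only a single factor $\eps^{-1}$, i.e.\ a recursion of the shape
\[
\delta_m \le C_1 d^{1/4}\eps + C_2\frac{\beta}{\sqrt m} + C_3\frac{\beta}{\sqrt m\,\eps}\bigl(d^{1/4}\eps + \delta_{m-1}\bigr).
\]
With $\eps = K\beta/\sqrt m$ this puts coefficient $C_3/K$ on $\delta_{m-1}$, and the induction closes. In the Ornstein--Uhlenbeck/Stein formulation, you get it by splitting the time integral at $t\approx\eps^2$.
\begin{itemize}
\item For $t\lesssim \eps^2$, put two derivatives on $h$ and one on the Gaussian. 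The weight $t^{-1/2}$ integrates to $O(\eps)$, giving $\eps\cdot\eps^{-2}\bigl(d^{1/4}\eps+\delta_{m-1}\bigr)$.
\item For $t\gtrsim\eps^2$, put \emph{all three} derivatives on the Gaussian kernel, with weight $t^{-3/2}$. For the Gaussian counterpart of $W_i$ this is $O(1)$ per unit of $\|X_i\|^3$, which gives the $C_2\beta/\sqrt m$ term.
\item Replacing $W_i$ by that Gaussian costs at most $O(\delta_{m-1}+d^{1/4}\eps)\int_{\eps^2}^1 t^{-3/2}\,dt = O\bigl((\delta_{m-1}+d^{1/4}\eps)/\eps\bigr)$. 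This works because, for each fixed Gaussian realization and fixed $X_i$, the integrand is a smoothed indicator of a translated and dilated convex set. Since $W_i$ is a rescaled sum of $m-1$ copies, $\delta_{m-1}$ applies.
\end{itemize}
Directional derivatives of the Gaussian kernel contracted with $X_i$ have dimension-free $L^1$ norms, so $d^{1/4}$ still enters only through the shell. In a Lindeberg version, the same effect comes from smoothing with the Gaussian part of the hybrid sums, and using the $\eps$-smoothed bound only for the last $\approx m\eps^2$ swaps, where that Gaussian part has variance below $\eps^2$.
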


\subsection{Distance Metrics Between Probability Distributions} 
\label{subsec:tv-kl-etc}

We will identify a distribution $\calD$ over a discrete domain $\Omega$ with its density function $\calD : \Omega \to [0,1]$. 
Recall that for two distributions $\calD_1$, $\calD_2$ over a discrete domain $\Omega$, the \emph{total variation} (or \emph{statistical}) distance between $\calD_1$ and $\calD_2$ is 
\[
    \dtv(\calD_1,\calD_2) = \frac{1}{2}\sum_{x\in\Omega} \abs{\calD_1(x) - \calD_2(x)}\,.
\]
We will frequently make use of the fact that total variation distance is invariant under bijections. 

We also recall the 
\emph{Kullback--Leibler (KL)} and 
$\chi^2$-divergences: 
\[
    \dkl(\calD_1,\calD_2) = \sum_{x\in\Omega} \calD_1(x)\log\frac{\calD_1(x)}{\calD_2(x)}
    \qquad\text{and}\qquad 
    \dchi(\calD_1,\calD_2) = \sum_{x\in\Omega} \frac{\pbra{\calD_1(x) - \calD_2(x)}^2}{\calD_2(x)}
\]
respectively. 
The following relationships are standard: 
\begin{equation} \label{eq:distance-relationships}
    \dtv(\calD_1, \calD_2) 
    \leq \sqrt{\frac{1}{2}\dkl(\calD_1,\calD_2)}
    \leq \sqrt{\frac{1}{2}\log\pbra{1 + \dchi(\calD_1,\calD_2)}}\,.
\end{equation}

We will use the following version of the data processing inequality: 

\begin{fact} \label{fact:dpi}
    Let $\bX_1$, $\bX_2$ be random variables over the same domain. 
    For any (possibly randomized) algorithm $\calA$, we have $\dtv\pbra{\calA(\bX_1), \calA(\bX_2)} \leq \dtv(\bX_1, \bX_2)$. 
\end{fact}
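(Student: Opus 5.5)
The plan is to argue directly from the definition of total variation distance, in two stages. First handle deterministic maps, then reduce the randomized case to the deterministic one by viewing $\calA$ as a mixture of deterministic maps, equivalently as a Markov kernel. I will work over discrete domains, matching the setting of \Cref{subsec:tv-kl-etc}. Write $P_1, P_2$ for the laws of $\bX_1, \bX_2$ on a domain $\Omega$.

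The key equivalent form of TV distance is
\[
\dtv(P_1,P_2) = \max_{S \subseteq \Omega}\bigl(P_1(S) - P_2(S)\bigr).
\]
This follows from the $\frac{1}{2}\ell_1$ definition by taking $S = \{x : P_1(x) > P_2(x)\}$ and using $\sum_x (P_1(x) - P_2(x)) = 0$.

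\emph{Deterministic case.} Suppose $\calA = f$ is deterministic with output space $\Omega'$. For any $T \subseteq \Omega'$,
\[
\Pr[f(\bX_1)\in T] - \Pr[f(\bX_2)\in T] = P_1(f^{-1}(T)) - P_2(f^{-1}(T)) \le \dtv(P_1,P_2).
\]
Maximizing the left-hand side over $T$ gives the claim.

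\emph{Randomized case.} Model $\calA(x)$ by the kernel $K(x,y) = \Pr[\calA(x) = y]$, so that $\calA(\bX_j)$ has law $Q_j(y) = \sum_x P_j(x) K(x,y)$. The point is that the internal randomness of $\calA$ is independent of the input. Then
\[
\sum_y |Q_1(y)-Q_2(y)| \le \sum_y\sum_x |P_1(x)-P_2(x)|\, K(x,y) = \sum_x |P_1(x)-P_2(x)|.
\]
Here the inequality is the triangle inequality, and the equality exchanges the order of summation and uses $\sum_y K(x,y) = 1$. Halving both sides gives $\dtv(\calA(\bX_1),\calA(\bX_2)) \le \dtv(\bX_1,\bX_2)$.

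A coupling argument gives the same conclusion. Take the optimal coupling $(\bX_1,\bX_2)$ with $\Pr[\bX_1\neq \bX_2] = \dtv$, run $\calA$ on both inputs with shared random coins, and note that the outputs can differ only if the inputs differ.

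I do not expect any step to be a genuine obstacle. The only point needing care is the randomized case, specifically that $\calA$'s coins are independent of its input; this is what makes the output law a kernel mixture and lets the triangle-inequality step go through. If non-discrete output spaces were needed, the same computation works with sums replaced by integrals, or by taking the supremum over measurable sets in the event formulation.
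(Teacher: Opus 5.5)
The paper gives no proof of this statement: it is recorded in the preliminaries as a standard fact, so there is no argument of the paper's to compare against. Your proof is correct and complete.

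One small remark on structure. The Markov-kernel computation $\sum_y |Q_1(y)-Q_2(y)| \le \sum_x |P_1(x)-P_2(x)|$ is self-contained, and it already covers deterministic algorithms as the special case $K(x,y)\in\{0,1\}$. So the separate deterministic stage, and the phrase ``reduce the randomized case to the deterministic one'', are not actually needed. The coupling argument you sketch is an equally valid alternative route.
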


%% file: sections/testing-uniformity.tex
\section{Testing Uniformity of Unate Distributions}
\label{sec:uniformity}

We will prove~\Cref{thm:our-unif-test,thm:our-unif-lb} in this section, starting with the former. 

\subsection{Upper Bound}
\label{subsec:unif-ub}

We first recall the principal technical lemma of \cite{RubinfeldS09}'s uniformity tester for monotone distributions:

\begin{lemma}[Corollary~6 of~\cite{RubinfeldS09}] 
\label{lemma:RS}
	If $\calD$ is a monotone distribution over $\bn$ that is $\eps$-far from uniform, then 
	\[
		\Ex_{\bx\sim\calD}\sbra{\sumi \bx_i} \geq \eps\,.
	\]
\end{lemma}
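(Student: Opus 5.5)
The plan is to move to the uniform measure and use Fourier analysis. Write $\calD(x) = 2^{-n} f(x)$, where $f:\bn\to\R_{\ge 0}$ is monotone and $\Ex_{\bx\sim\calU}[f(\bx)] = 1$. Then
\[
\Ex_{\bx\sim\calD}\Big[\textstyle\sum_i \bx_i\Big] = \Ex_{\bx\sim\calU}\Big[f(\bx)\textstyle\sum_i \bx_i\Big] = \textstyle\sum_i \widehat f(\{i\}) = \textstyle\sum_i \Ex_{\bx\sim\calU}[D_i f(\bx)],
\]
where $D_i f(x) = \tfrac12\big(f(x^{i\to 1}) - f(x^{i\to -1})\big)$. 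Monotonicity of $f$ gives $D_i f \ge 0$ pointwise.

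Next I rewrite the distance as a covariance. Let $h = \1{f \ge 1}$, the indicator of the upper level set. Since $f$ is monotone, $h$ is a monotone $\{0,1\}$-valued function. Because $\Ex[f-1]=0$,
\[
\dtv(\calD,\calU) = \tfrac12 \Ex_{\calU}\abs{f-1} = \Ex_{\calU}\big[(f-1)h\big] = \mathrm{Cov}_{\calU}(f,h) = \sum_{S\ne\emptyset}\widehat f(S)\widehat h(S).
\]

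The key step is a standard interpolation identity for this covariance. Using $\widehat{D_i g}(T) = \widehat g(T\cup\{i\})$ for $i\notin T$, one checks that for $\rho\in[0,1]$,
\[
\sum_i \Ex\big[D_i f \cdot T_\rho D_i h\big] = \sum_S |S|\rho^{|S|-1}\widehat f(S)\widehat h(S),
\]
where $T_\rho$ is the noise operator. Integrating over $\rho\in[0,1]$ turns each factor $|S|\rho^{|S|-1}$ into $1$, so
\[
\mathrm{Cov}(f,h) = \int_0^1 \sum_i \Ex\big[D_i f\cdot T_\rho D_i h\big]\,d\rho .
\]

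To finish, note that $h$ is monotone and $\{0,1\}$-valued, so $D_i h \in \{0,\tfrac12\}$. Since $T_\rho$ averages values, $T_\rho D_i h \in [0,\tfrac12]$ as well. Combined with $D_i f \ge 0$, each integrand is at most $\tfrac12\Ex[D_i f]$. Therefore
\[
\dtv(\calD,\calU) \le \tfrac12\sum_i \widehat f(\{i\}) = \tfrac12\,\Ex_{\bx\sim\calD}\Big[\textstyle\sum_i\bx_i\Big].
\]
In particular, if $\dtv(\calD,\calU)\ge\eps$, then $\Ex_{\calD}[\sum_i \bx_i] \ge 2\eps \ge \eps$.

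The main obstacle I expect is verifying the interpolation identity with the correct normalizations, namely the factor $\tfrac12$ in $D_i$ and the resulting range of $D_i h$. The monotonicity inputs themselves (nonnegativity of $D_i f$ and the bounded range of $D_i h$) are immediate. If the identity proves awkward, a fallback is a coordinate-by-coordinate hybrid (Efron--Stein-type) argument that bounds $\mathrm{Cov}(f,h)$ directly by $\sum_i \Ex[D_i f]$, since $h$ changes by at most $1$ across each edge.
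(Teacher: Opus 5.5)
Your proof is correct. It actually establishes the sharper inequality $2\,\dtv(\calD,\calU)\le \E_{\bx\sim\calD}[\sum_i \bx_i]$, which is tight: take $n=1$, or $\calD$ uniform on $\{x : x_1 = 1\}$. The lemma as written in the paper is weaker by exactly this factor of $2$, presumably because the cited source measures distance in $L_1$.

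The paper gives no proof of its own; it imports the statement from \cite{RubinfeldS09}. The ingredient behind the cited result is an extension of the edge-isoperimetric inequality $\mathrm{Var}(g)\le\sum_i\widehat g(\{i\})$, which holds for monotone $\pm1$-valued $g$, to the bound $\E|f-\E f|\le\sum_i\widehat f(\{i\})$ for monotone real-valued $f$. The standard proof of that extension is a level-set (layer-cake) decomposition:
\begin{itemize}
\item write $f-\E f=\int_0^\infty (g_t-\E g_t)\,dt$ with $g_t=\mathbf{1}[f\ge t]$, each of which is monotone and Boolean;
\item show $\E|g_t-\E g_t|=2\mathrm{Var}(g_t)\le\sum_i\widehat{g_t}(\{i\})$, using the Poincar\'e inequality together with $D_i g_t\in\{0,\tfrac12\}$;
\item integrate in $t$, using the triangle inequality and the linearity of $f\mapsto\widehat f(\{i\})$.
\end{itemize}

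Your route is different. You dualize the $L_1$ distance against the single monotone test function $h=\mathbf{1}[f\ge 1]$. You then control $\mathrm{Cov}(f,h)$ through the semigroup interpolation formula. Monotonicity enters only through the pointwise facts $D_i f\ge 0$ and $0\le T_\rho D_i h\le\tfrac12$.

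The level-set route needs nothing beyond the Boolean Poincar\'e inequality and uses no noise operator. Yours avoids the co-area decomposition. The same computation also gives $\mathrm{Cov}(f,h)\le\sum_i\E[D_i f]\,\|D_i h\|_\infty$ for monotone $f$ and arbitrary bounded $h$. Both routes produce the same constant, and your Efron--Stein fallback is not needed.
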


% We will also require the following lemma due to~\cite{adamaszek2010testing}: 
% \begin{lemma}[Lemma~3.2 of~\cite{adamaszek2010testing}]
% \label{lemma:adamszek}
% 	Suppose $\calD$ is a monotone distribution over $\bn$ that is $\eps$-far from uniform. 
% 	If $m \geq BLAH$ and we sample $s$ points $\x{1},\dots, \x{s} \sim\calD$, then 
% 	\[
% 		\Prx\sbra{\sum_{\ell=1}^s \sumi \x{\ell}_i \leq \frac{s\eps}{2}} \leq \frac{12}{13}\,. 
% 	\]
% \end{lemma}

We now turn to our uniformity tester for unate distributions. 
First, note that if $\calD$ is a unate distribution, then there exists some $\sigma^\ast \in \bn$ such that $\calD^{\sigma^\ast}$ is a monotone distribution. In particular, note that we may take $\sigma^\ast_i = \sign(\Ex_{\bx \sim \calD}[\bx_i])$. 
This suggests a natural algorithm: estimate $\sigma^\ast$ by drawing $m$ samples $\x{1}, \x{2}, \ldots, \x{m}$ from $\calD$ and setting 
\[
	{\bsigma}_i := \sign\pbra{\sum_{j=1}^m \x{j}_i}\,,
\]
and then run the \cite{RubinfeldS09} uniformity tester on the monotone distribution $\calD^{{\bsigma}}$. 
Indeed, since
\[
	\dtv(\calD^{{\bsigma}}, \calU) = \dtv(\calD^{{\bsigma}}, \calU^{\bsigma}) = \dtv(\calD, \calU),% \geq \eps \,.
\]
replacing $\calD$ with $\calD^{\bsigma}$ preserves the uniformity (or distance from uniformity) of the distribution.

However, learning the whole sign vector $\sigma^\ast$ can be too expensive. 
In particular, while \Cref{lemma:RS} ensures that $\sumi \Ex[\bx_i]$ is large, it gives no guarantee on the magnitude of each individual $\Ex[\bx_i]$. 
In particular, note that distinguishing $\Ex[\bx_i]=0$ (which corresponds to the uniform distribution) from $\Ex[\bx_i]=\pm\Theta(1/n)$ requires $\Theta(n^2)$ samples per coordinate by standard Chernoff lower bounds.
We bypass this obstacle by not insisting on the exact orientation and instead \emph{weakly} learning the sign vector $\sigma^\ast$.

\subsubsection{Warm-Up: A Sub-Quadratic Tester in a Special Case} 
\label{subsec:subquad-unif-test}

As a warm-up, we show how to obtain a sub-quadratic tester in the hard instance just described above, where naively learning $\sigma^\ast$ requires $\Omega(n^2)$ queries.
It will be convenient to write 
\[
	\mu_i \coloneqq \Ex_{\bx\sim\calD}[\bx_i]\,. 
\]
We make one simplifying assumption for now: $\mu_i = \frac{1}{n}$ holds for every $i \in [n]$, i.e., the unate distribution $\calD$ is in fact monotone, and each coordinate has the same bias of $\Theta(1/n)$ towards $+1$. 

Let $p > 0$ be a parameter that we will set later. 
By the assumption that $\mu_i = \frac{1}{n}$, it can be verified that $m = \Theta(n^2p^2)$ samples suffice for the ``orientation step'' to produce random signs $\bsigma_1, \dots, \bsigma_n \in \bits$ with the promise that 
\[
	\Ex[\bsigma_i] \geq p~\text{for all}~i\in[n]\,.
\]
Equivalently, we correctly deduce that $\sigma^*_i = +1$ with probability $1/2 + \Omega(p)$, which is slightly better than a random guess.

Writing $\bX \coloneqq \frac{1}{n} \sumi \bsigma_i \in [-1,1]$, note that $\E[\bX] \geq p$. 
Applying Markov's inequality to the non-negative random variable $1-\bX$ gives 
\[
	\Prx\sbra{\bX \leq \frac{p}{2}} = \Pr\sbra{1 - \bX \geq 1-\frac{p}{2}} \leq \frac{\Ex\sbra{1-\bX}}{1-\frac{p}{2}} \leq \frac{1-p}{1-\frac{p}{2}} \leq 1-\Omega(p)\,.
\]	

In other words, the event $\frac{1}{n} \sumi \bsigma_i \geq \frac{p}{2}$ holds with probability $\Omega(p)$. 
Repeating this ``orientation step'' independently $O(1/p)$ times ensures that with constant probability, at least one draw satisfies $\bX \geq p/2$.
Then, we note that, conditioning on the realization of $\bsigma \in \bits^n$, the $\bsigma$-weighted Hamming weight $\sum_{i=1}^{n}\bsigma_i\bx_i$ for $\bx \sim \calD$ has an expectation of exactly
\[
    \sum_{i=1}^{n}\bsigma_i \cdot \ex{\bx \sim \calD}{\bx_i}
=   \sum_{i=1}^{n}\bsigma_i \cdot \mu_i
=   \frac{1}{n}\sum_{i=1}^{n}\bsigma_i.
\]
Therefore, by estimating $\frac{1}{n}\sum_{i=1}^{n}\bsigma_i$ to an additive error of $O(p)$ using $\widetilde O(n/p^2)$ additional samples, we obtain a tester with sample complexity (modulo polylogarithmic factors)
\begin{equation} \label{eq:stitch}
	\frac{1}{p}\cdot n^2p^2 + \frac{n}{p^2}.
\end{equation}
Setting $p = n^{-1/3}$ gives a sample complexity of $\widetilde O(n^{5/3})$, which already improves upon the easy $\widetilde O(n^2)$ bound (cf.~\Cref{subsec:our-results}).

\subsubsection{Main Technical Lemma}
\label{subsubsec:daeho}
Recall from the warm-up that the sign estimates $\bsigma_1, \bsigma_2, \ldots, \bsigma_n$ satisfy $\ex{}{\bsigma_i} \ge p$, which implies
\[
    \Prx_{\bsigma}\sbra{\frac{1}{n}\sum_{i=1}^{n}\bsigma_i \ge \Omega(p)} \ge \Omega(p)\,.
\]
Now, note that if we were able to replace the R.H.S.~above with $\Omega(1)$ instead of $\Omega(p)$, then this stronger lower bound would imply that $O(1)$ independent sign estimates would suffice for ``catching'' the $\Omega(p)$ bias.  Then, the sample complexity (up to polylogarithmic factors) would be \smash{$n^2p^2 + \frac{n}{p^2}$} (cf.~\Cref{eq:stitch}), which reduces to \smash{$\widetilde O(n^{3/2})$} if we set $p = n^{-1/4}$.

We first note that the property $\ex{}{\bsigma_i} \ge p$ alone does not imply the desired high-probability bound, namely,
\[
	\pr{}{\frac{1}{n}\sum_{i=1}^{n}\bsigma_i \ge \Omega(p)} \ge \Omega(1).
\]
Suppose for the sake of simplicity that $n$ is even, and imagine that $\bsigma \in \{\pm 1\}^n$ is drawn from the following distribution:
\begin{itemize}
	\item With probability $p$, $\bsigma = (+1, +1, \ldots, +1)$.
	\item With the remaining probability $1 - p$, $\bsigma$ is chosen as a uniformly random permutation of $n/2$ copies of $+1$ and $n/2$ copies of $-1$.
\end{itemize}
It can be easily verified that the above distribution satisfies $\ex{}{\bsigma_i} = p$, yet $\frac{1}{n}\sum_{i=1}^{n}\bsigma_i > 0$ only holds with probability $p$.

Therefore, the crux is to show that the $\bsigma_i$'s cannot be too correlated (as in the example above). 
Concretely, proving the following will suffice for our purposes:

\begin{lemma} 
\label{prop:BE-bash}
    Suppose $\calD$ is a unate distribution on $\bn$. 
    Suppose $\x{1}, \dots \x{m} \sim \calD$ and let 
    \[
        \bsigma_i := \sign\pbra{\sum_{j=1}^m \x{j}_i}\,.
    \]
    For any $1 \leq i < j \leq n$, it holds that 
    \[
        |\Cov[\bsigma_i, \bsigma_j]| = O\pbra{\frac{1}{\sqrt{m}}}\,.
    \]
\end{lemma}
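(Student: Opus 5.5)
The plan is to reduce to a two-dimensional problem, use unateness only through a structural constraint on two-coordinate marginals, and then compare with a Gaussian via \Cref{thm:BE}. Fix $i<j$. Flipping a coordinate of $\calD$ only flips the sign of the corresponding $\bsigma$, which changes $\Cov[\bsigma_i,\bsigma_j]$ by at most a sign. So we may assume $\calD$ is monotone and hence $\mu_i,\mu_j\ge 0$. The pair $(\bx_i,\bx_j)$ has a monotone pmf on $\bits^2$, since summing a monotone pmf over the other coordinates preserves monotonicity. Write this pmf as $a=p(-,-)$, $b=p(-,+)$, $c=p(+,-)$, $d=p(+,+)$, so that $a\le b,c\le d$. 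Then
\[
\mu_i=(c+d)-(a+b),\qquad \mu_j=(b+d)-(a+c),\qquad \E[\bx_i\bx_j]=(a+d)-(b+c).
\]
The first step is the elementary claim that $|\E[\bx_i\bx_j]|\le O(\mu_i+\mu_j)$. Indeed, $\mu_i+\mu_j=2(d-a)$, and every pairwise difference among $a,b,c,d$ is at most $d-a$, so $|\Cov(\bx_i,\bx_j)|=O(\mu_i+\mu_j)$. This inequality is the only place unateness enters.

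Next comes a case split at a small absolute constant $c_0$. If, say, $\mu_i\ge c_0$, then Hoeffding gives $\Pr[\bsigma_i\neq 1]\le e^{-\Omega(m)}$. Since $|\Cov[\bsigma_i,\bsigma_j]|\le\sqrt{\Var[\bsigma_i]\Var[\bsigma_j]}$ and $\Var[\bsigma_i]\le 4\Pr[\bsigma_i\ne 1]$, the covariance is $e^{-\Omega(m)}=O(1/\sqrt m)$. We handle ties in $\sign$ by any fixed convention; lattice effects are absorbed below.

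Otherwise $\mu_i,\mu_j\le c_0$. Then $\Var(\bx_i),\Var(\bx_j)\ge 1-c_0^2$, and by the first step the normalized correlation $\rho$ of $(\bx_i,\bx_j)$ satisfies $|\rho|=O(\mu_i+\mu_j)\le 1/2$ for $c_0$ small. So the covariance matrix $\Sigma$ of $(\bx_i,\bx_j)$ is well conditioned. Let $\bX=\Sigma^{-1/2}((\bx_i,\bx_j)-(\mu_i,\mu_j))$; then $\E\|\bX\|^3=O(1)$. Each event $\{\bsigma_i=s,\bsigma_j=t\}$ is a quadrant in the original coordinates, hence a convex set after the affine change of variables. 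Applying \Cref{thm:BE} with $d=2$, every orthant probability, and hence $\E[\bsigma_i\bsigma_j]$, $\E[\bsigma_i]$ and $\E[\bsigma_j]$, equals its Gaussian counterpart up to $O(1/\sqrt m)$. Consequently $\Cov[\bsigma_i,\bsigma_j]$ equals the sign-covariance of a bivariate Gaussian $(\bG_1,\bG_2)$ up to $O(1/\sqrt m)$. This Gaussian has means $(\sqrt m\,\mu_i/s_i,\ \sqrt m\,\mu_j/s_j)$, unit variances and correlation $\rho$, where $s_i,s_j=\Theta(1)$ are the standard deviations.

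It remains to bound the Gaussian sign-covariance $g(\rho)$, which vanishes at $\rho=0$ by independence. This is the step I expect to be the main obstacle, because the crude bound $|g(\rho)|=O(|\rho|)=O(\mu_i+\mu_j)$ is not $O(1/\sqrt m)$ when the means are of order $1/\sqrt m$ or larger. The way through is to use the Gaussian identity $\partial_\rho \Pr[\bG_1>0,\bG_2>0]=\phi_\rho(0,0)$, the bivariate density at the origin, together with the analogous formulas for the other orthants. Integrating from $0$ to $\rho$ gives
\[
|g(\rho)|\le O(|\rho|)\cdot \exp\bigl(-\Omega(m(\mu_i^2+\mu_j^2))\bigr),
\]
using $|\rho|\le 1/2$ so the density at the origin is controlled by the shifted means. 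Combining with $|\rho|=O(\mu_i+\mu_j)$ yields $|g(\rho)|\le O\bigl((\mu_i+\mu_j)e^{-\Omega(m(\mu_i^2+\mu_j^2))}\bigr)\le O(1/\sqrt m)$, since $\sup_{t\ge0}t\,e^{-\Omega(mt^2)}=O(1/\sqrt m)$. Together with the Berry--Esseen error, this proves the lemma.
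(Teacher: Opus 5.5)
Your proposal is correct and follows essentially the same route as the paper: reduce to the monotone case, handle large means by Hoeffding, and otherwise combine the two-dimensional Berry--Esseen bound, the Owen/Plackett derivative formula for bivariate normal orthant probabilities, and the monotonicity bound $|\rho| = O(\mu_i+\mu_j)$. The differences are only cosmetic: you use Cauchy--Schwarz instead of the paper's bound $|\mathrm{Cov}| \le 4\min\{\Pr[S_1<0],\Pr[S_2<0]\}$ in the large-mean case, and you split at a small constant $c_0$ rather than at $\mu_1+\mu_2 = 1/2$.
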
 

% \textcolor{gray!40!white}{
% We note that the proof of \Cref{prop:BE-bash} must use the fact that $\calD$ is unate; without unateness (i.e., only assuming $\ex{x \sim \calD}{x_i} = 1/n$), there is a simple counterexample. Consider the distribution $\calD$ defined as
% \[
% 	\calD(+1, +1, \ldots, +1) = \frac{1 + 1/n}{2},
% \quad
% 	\calD(-1, -1, \ldots, -1) = \frac{1 - 1/n}{2}.
% \]
% Such a distribution satisfies that $\ex{x \sim \calD}{x_i} = 1/n$ for every $i \in [n]$, and the coordinates of $x \sim \calD$ are maximally correlated. Then we always get $\sigma_1 = \sigma_2 = \cdots = \sigma_n$, so the correlations are as high as $\Omega(1)$.}
It is readily verified that~\Cref{prop:BE-bash} fails for non-unate distributions: consider, for example, the two-point distribution $\calD(+1^n) = \calD(-1^n) = 0.5$. 
We defer the proof of~\Cref{prop:BE-bash} to \Cref{subsubsec:daeho-berry-esseen}, and first show why it implies the correctness of~\Cref{alg:unif-unate}.

\subsubsection{Proof of~\Cref{thm:our-unif-test}}
\label{subsec:unif-proof}

\newcommand{\y}[1]{\by^{(#1)}}

We will require the following lemma due to Qiao and Valiant~\cite{qiao2018learning}:

\begin{lemma}[Lemma~B.2 of~\cite{qiao2018learning}]
\label{lemma:QV18}
    Suppose $k \in \N$ and $\delta \in (0, 1/15\sqrt{k})$. 
    Let $P$ and $Q$ be two distributions on the same support with $\dtv(P, Q) \geq \delta$. 
    Then 
    \[
        \dtv(P^{\otimes k}, Q^{\otimes k}) \geq \frac{\delta\sqrt{k}}{15}\,.
    \]
\end{lemma}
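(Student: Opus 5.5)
The plan is to reduce to a one-dimensional binomial problem by data processing, and then to lower bound the binomial distance with an explicit coupling plus a single threshold test. No Hellinger or $\chi^2$ bound is used: those only give $\dtv(P^{\otimes k},Q^{\otimes k}) \gtrsim k\delta^2$, which is weaker than the claimed $\delta\sqrt{k}$ in the regime $\delta\sqrt{k}<1/15$.

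\textbf{Step 1 (reduction to binomials).} Let $A=\{x : P(x)>Q(x)\}$, and set $p=P(A)$ and $q=Q(A)$, so that $p-q=\dtv(P,Q)\ge\delta$. The map $(x_1,\dots,x_k)\mapsto\sum_{\ell}\1{x_\ell\in A}$ sends $P^{\otimes k}$ to $\Binomial(k,p)$ and $Q^{\otimes k}$ to $\Binomial(k,q)$. By \Cref{fact:dpi},
\[
\dtv(P^{\otimes k},Q^{\otimes k}) \ge \dtv(\Binomial(k,p),\Binomial(k,q)).
\]
Replacing $A$ by its complement swaps $(p,q)$ with $(1-q,1-p)$ and preserves the gap. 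We may therefore assume the smaller parameter satisfies $q\le 1/2$, possibly after also swapping which distribution is called $P$.

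\textbf{Step 2 (coupling).} Let $\bX\sim\Binomial(k,q)$ be generated from $k$ independent trials. Upgrade each failed trial independently to a success with probability $r=(p-q)/(1-q)\ge\delta$, and let $\bY=\bX+\bZ$ be the new count. Then $\bY\sim\Binomial(k,p)$, and conditioned on $\bX=s$ we have $\bZ\sim\Binomial(k-s,r)$. For any threshold $t$,
\[
\dtv \ge \Pr[\bY\ge t]-\Pr[\bX\ge t]=\Pr[\bX<t\le\bY]\ge \Pr[\bX=s]\cdot\pbra{1-(1-\delta)^{k-s}},
\]
where $s=t-1$. Take $s$ to be a mode of $\Binomial(k,q)$, so that $s\le kq+1$ and hence $k-s\ge k/2-1$. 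This gives $1-(1-\delta)^{k-s}\ge c\min(1,k\delta)$. Small $k$ is handled directly, e.g.\ $k=1$ gives $\dtv\ge\delta$.

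\textbf{Step 3 (mode mass; the main obstacle).} The crux is a clean lower bound $\Pr[\bX=\text{mode}]\ge c'/\sqrt{kq+1}$. This follows from a standard local estimate for binomials, e.g.\ Stirling's formula or the fact that the mode carries at least $1/(\text{number of atoms within } 2\,\mathrm{sd})$ of a constant fraction of the mass, which comes from Chebyshev together with unimodality. The estimate has to be carried out with explicit constants to reach the stated factor $1/15$. Combining the two factors:
\begin{itemize}
\item if $k\delta\le 1$, the bound is $\gtrsim k\delta/\sqrt{kq+1}\ge k\delta/\sqrt{k}=\delta\sqrt{k}$, using $q\le 1/2$;
\item if $k\delta>1$, the bound is $\gtrsim 1/\sqrt{k}\ge 15\,\delta\sqrt{k}\cdot\frac{1}{15}$, by the hypothesis $\delta<1/(15\sqrt{k})$.
\end{itemize}
In both cases $\dtv(P^{\otimes k},Q^{\otimes k})\ge\Omega(\delta\sqrt{k})$. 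Tracking constants in the mode estimate and in $1-(1-\delta)^{k/2}$ should then yield the explicit $1/15$. Alternatively, one can cite the original \cite{qiao2018learning} for the precise constant, since only the $\Omega(\delta\sqrt{k})$ form matters downstream.
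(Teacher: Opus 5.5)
\textbf{There is a genuine gap: the case $k\delta>1$ in Step~3 does not follow.} The paper gives no proof of this lemma (it is imported from Qiao and Valiant), so your argument has to stand on its own. Steps~1 and~2 are sound, and the first bullet of Step~3 (the regime $k\delta\le 1$) is correct up to constants. The second bullet is false. The inequality $1/\sqrt{k}\ge\delta\sqrt{k}$ is equivalent to $k\delta\le 1$, which is the opposite of the case you are in. The hypothesis $\delta<1/(15\sqrt{k})$ only gives $\delta\sqrt{k}<1/15$.

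No tracking of constants can repair this. The single-atom event $\{\mathbf{X}=s,\ \mathbf{Z}\ge 1\}$ has probability at most $\max_s\Pr[\mathbf{X}=s]=O(1/\sqrt{k})$ when, say, $q=1/2$. Taking $\delta=1/(16\sqrt{k})$, your bound is $O(1/\sqrt{k})$ while the target is about $1/240$. Reducing to $\delta'=1/k$, or to $k'=1/\delta$ samples, does not help either: these give $1/\sqrt{k}$ and $\sqrt{\delta}$, both below $\delta\sqrt{k}$ when $k\delta>1$.

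This is exactly the regime the paper relies on. With $\delta=\eps/(100n)$ and $k=m_1=\Theta(n^{3/2}/\eps^2)$ one has $k\delta=\Theta(\sqrt{n}/\eps)\gg 1$. Your argument would then give only $\Omega(\eps n^{-3/4})$ in \Cref{eq:neyman-pearson}, instead of the needed $\Omega(n^{-1/4})$.

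\textbf{The missing idea is to collect a whole window of atoms, not one.} When $k\delta>1$, the coupling moves $\mathbf{X}$ up by about $k\delta$ atoms. Take $t-1$ to be a mode of $\Binomial(k,q)$ and a window length $\ell$ equal to a small constant times $\min\{k\delta,\sqrt{kq+1}\}$. Then use
\[
\Pr[\mathbf{X}<t\le\mathbf{Y}]\ \ge\ \Pr[t-\ell\le\mathbf{X}\le t-1]\cdot\min_{x\le t-1}\Pr[\mathbf{Z}\ge\ell\mid\mathbf{X}=x].
\]
\begin{itemize}
\item \emph{Second factor.} Given $\mathbf{X}=x\le kq+1$, we have $\mathbf{Z}\sim\Binomial(k-x,r)$ with mean at least $(k-x)\delta=\Omega(k\delta)$. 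Chebyshev then makes this factor $\Omega(1)$ once $k\delta$ exceeds a large constant. For $k\delta=O(1)$ your single-atom bound already suffices.
\item \emph{First factor.} This needs a local estimate strictly stronger than your mode-mass bound: every atom of $\Binomial(k,q)$ lying within a small constant times $\sqrt{kq+1}$ below the mode has mass $\Omega(1/\sqrt{kq+1})$. This follows from the ratio $\Pr[\mathbf{X}=x-1]/\Pr[\mathbf{X}=x]=x(1-q)/((k-x+1)q)$, or from a local CLT.
\end{itemize}
Together these give $\dtv\ge\Omega(\min\{1,k\delta/\sqrt{kq+1}\})\ge\Omega(\min\{1,\delta\sqrt{k}\})$. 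Only at this point does the hypothesis $\delta\sqrt{k}<1/15$ enter, to identify the minimum with $\delta\sqrt{k}$. The explicit constant $1/15$ still requires careful bookkeeping in both estimates, but as you note, only the $\Omega(\delta\sqrt{k})$ form is used downstream.
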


\begin{algorithm}[t]

	\
	
	\textbf{Input:}~Access to i.i.d.~samples from unate $\calD$, distance parameter $\eps \in (0,1)$\\[0.25em]
	
	\textbf{Output:}~``Accept'' or ``reject'' 
	
	\ 
	
	$\UnifUnate(\calD, \eps)$:\\[0.5em]
	\begin{enumerate}
			\item Repeat the following $O(1)$ times:
			\begin{enumerate}
				\item Draw $m_1 := \Theta\pbra{\frac{n^{3/2}}{\eps^2}}$ samples $\x{1},\dots,\x{m_1} \sim\calD$. 
				\item Compute $\bsigma\in\bn$ where for $i \in [n]$ we have 
				\[
				    \bsigma_i := \sign\pbra{\sum_{j=1}^{m_1} \x{j}_i} \in \bits\,. 
				\]
				\item Draw $m_2 := \Theta\pbra{\frac{n^{3/2}}{\eps^2}\log\pbra{\frac{n}{\eps}}}$ samples $\y{1},\dots,\y{m_2} \sim\calD$.
                \item For each $j \in [m_{2}]$, let $\bw_j \coloneqq \sum_{i=1}^{n}\bsigma_i \cdot \y{j}_i$ denote the $\bsigma$-weighted Hamming \newline weight of $\y{j}$.
                \item If $|\bw_j| \geq \Theta(\sqrt{n\log (n/\eps)})$ holds for any $j \in [m_2]$, halt and output ``reject.'' 
                \item If $\frac{1}{m_2}\sum_{j=1}^{m_2}\bw_j \ge \Theta(\eps / n^{1/4})$, then halt and output ``reject.'' 
			\end{enumerate}
			\item If the algorithm has not rejected yet, output ``accept.'' 
	\end{enumerate}
	
	\caption{The $\UnifUnate$ algorithm.}
	\label{alg:unif-unate}	
\end{algorithm}

\begin{proofof}{\Cref{thm:our-unif-test}}
    Note that the sample complexity and runtime are evident from~\Cref{alg:unif-unate}. 
    Next, note that if $\calD = \calU$, then $\calD^\sigma = \calU$ for every $\sigma \in \bn$. 
    It now follows from Theorem~4 of \cite{RubinfeldS09} that \UnifUnate{} will output ``accept'' with probability at least $9/10$. 
    In particular, note that in this case, Steps~1(c) through 1(f) of~\UnifUnate{} are identical to the \texttt{TestUniform} algorithm of \cite{RubinfeldS09} run on $\calU$. 
	
    We now show how~\Cref{prop:BE-bash} implies the soundness of~$\UnifUnate$.
    Suppose that the unate distribution $\calD$ is $\eps$-far from uniform.
    As before, we write  
    \[
        \mu_i \coloneqq \Ex_{\bx\sim\calD}\sbra{\bx_i}\,.
    \]
    Since $\calD$ is unate, it follows that there exists some $\sigma^\ast \in \bits^n$ such that $\calD^{\sigma^\ast}$ is monotone. 
    In particular, we can take $\sigma^\ast_i = \sign(\mu_i)$, and so we have $\sigma_i^\ast \mu_i = |\mu_i| \geq 0$.
    Since the TV distance is preserved under bijections, $\calD^{\sigma^\ast}$ is also $\eps$-far from uniform. Applying~\Cref{lemma:RS} to $\calD^{\sigma^\ast}$ gives 
    \[
        \|\mu\|_1
    =   \sumi \sigma^\ast_i \mu_i
    =   \ex{\bx \sim \calD^{\sigma^\ast}}{\sum_{i=1}^{n}\bx_i}
    \geq \eps\,.
    \]
    
    Recall that we use a sample of size $m_1 = \Theta(n^{3/2}/\eps^2)$ in Step 1(a)~and~1(b) to compute $\bsigma_1, \ldots, \bsigma_n \in \{\pm 1\}$. Set a threshold $\theta = \frac{\eps}{100n}$. For every $i \in [n]$ that satisfies $|\mu_i| \geq \theta$, we will now show that  
    \begin{equation} \label{eq:neyman-pearson}
        \ex{}{\sigma^\ast_i\bsigma_i} 
        % \ge \Omega(\sqrt{m_1} \cdot |\mu_i|) 
        \ge \Omega\pbra{\frac{\eps}{n}\cdot \frac{n^{3/4}}{\eps}} = \Omega(n^{-1/4})\,,
    \end{equation}
    using  \Cref{lemma:QV18}. 
    In more detail, let $P$ be the distribution of $\sigma_i^\ast \bx_i$ for $\bx \sim \calD$ and $Q$ be $-P$, i.e. $\by\sim Q$ is obtained by drawing $\by'\sim P$ and setting $\by = -\by'$.
    Note that 
    \[
        \Prx_{\by \sim \calP}[\by = +1] = \frac{1+|\mu_i|}{2}
        \qquad\text{and}\qquad 
        \Prx_{\by \sim \calP}[\by = -1] = \frac{1-|\mu_i|}{2}\,, 
    \]
    with swapped probabilities for $Q$. 
    It follows that $\dtv(P,Q) = \abs{\mu_i} \geq \frac{\eps}{100n}$; note also that 
    \[
        \frac{\eps}{100n} \leq \frac{1}{15\sqrt{m_1}}
    \]
    for appropriate choice of hidden constant in $m_1$. 
    In particular, applying~\Cref{lemma:QV18} gives 
    \[
        \dtv(P^{\otimes m_1}, Q^{\otimes m_1}) \geq \Omega\pbra{\frac{\eps}{n}\cdot \frac{n^{3/4}}{\eps}} = \Omega(n^{-1/4})\,.
    \]
    In order to establish~\Cref{eq:neyman-pearson}, it suffices to show that 
    $\dtv(P^{\otimes m_1}, Q^{\otimes m_1}) = \Theta(1)\cdot\Ex[\sigma_i^\ast\bsigma_i]$. 
    Since $P$ and $Q$ are Bernoulli random variables, their likelihood ratio is monotone in the number of successes (i.e., $+1$ draws), and so by the Neyman--Pearson lemma, the TV-distance maximizing set is $\{\sum_\ell \y{\ell} \geq 0\}$. 
    We may assume that $m_1$ is odd (to avoid handling ties), and so we get
    \begin{align}
        \dtv(P^{\otimes m_1}, Q^{\otimes m_1}) 
        &= P^{\otimes m_1}\pbra{\sum_{\ell=1}^{m_1} \y{\ell} \geq 0} - Q^{\otimes m_1}\pbra{\sum_{\ell=1}^{m_1} \y{\ell} \geq 0} \nonumber \\
        % &= \frac{\Ex_{P}[\bsigma_i] + 1}{2} - \frac{1}{2} 
        &=P^{\otimes m_1}\pbra{\sum_{\ell=1}^{m_1} \y{\ell} \geq 0} - P^{\otimes m_1}\pbra{\sum_{\ell=1}^{m_1} \y{\ell} < 0} \label{eq:dunkin}\\
        &= \Ex[\sigma^\ast_i\bsigma_i]\,, \nonumber
    \end{align}
    from which~\Cref{eq:neyman-pearson} follows readily. 
    Note that \Cref{eq:dunkin} relies on the fact that $P = -Q$. 
    Finally, note also that the same argument shows that $\ex{}{\sigma^\ast_i\bsigma_i} \ge 0$ for every $i \in [n]$, even if $|\mu_i| < \theta$. 
	
	Let $\bX \coloneqq \sum_{i=1}^{n}\mu_i \cdot \bsigma_i = \sum_{i=1}^{n}|\mu_i| \cdot (\sigma^\ast_i\bsigma_i)$. We have
    \[
	\ex{}{\bX}
    \ge	\sum_{i \in [n]: |\mu_i| \ge \theta}|\mu_i| \cdot \ex{}{\sigma^\ast_i\bsigma_i}
    \ge \Omega(n^{-1/4}) \cdot \sum_{i \in [n]: |\mu_i| \ge \theta}|\mu_i|
    =   \Omega(n^{-1/4} \cdot \|\mu\|_1).
    \]
    The first step above holds since both $|\mu_i|$ and $\ex{}{\sigma^\ast_i\bsigma_i}$ are non-negative for every $i \in [n]$. The second step applies $|\mu_i| \ge \theta \implies \ex{}{\sigma^\ast_i\bsigma_i} \ge \Omega(n^{-1/4})$. The third step holds since our choice of $\theta = \frac{\eps}{100n}$ and the fact $\|\mu\|_1 \ge \eps$ together imply
    \[
        \sum_{i \in [n]: |\mu_i| \ge \theta}|\mu_i|
    \ge	\sum_{i=1}^{n}|\mu_i| - n \cdot \theta
    =  \|\mu\|_1 - \frac{\eps}{100}
    \ge \Omega(\|\mu\|_1).
    \]
	
    It remains to show that
    \[
        \Var[\bX] \le O(n^{-1/2} \cdot \|\mu\|_1^2).
    \]
    By Chebyshev's inequality, the variance bound above would imply that, with probability $\Omega(1)$,
    \[
        \bX
    \ge \ex{}{\bX} - O(\sqrt{\Var[\bX]})
    \ge \Omega(n^{-1/4} \cdot \|\mu\|_1)
    \ge \Omega(\eps / n^{1/4}),
    \]
    where the second step applies the following observations:
    \begin{itemize}
        \item $\ex{}{\bX} \ge \Omega(n^{-1/4} \cdot \|\mu\|_1)$, where the hidden constant in $\Omega(\cdot)$ is lower bounded by a universal constant when $m_1 = \Theta(n^{3/2}/\eps^2)$ is sufficiently large.
        \item We will show that $\Var[\bX] \le O(n^{-1/2} \cdot \|\mu\|_1^2)$, where the hidden constant in $O(\cdot)$ goes to zero as the hidden constant in $m_1 = \Theta(n^{3/2}/\eps^2)$ increases.
        \item Therefore, for some careful choice of $m_1$ in the algorithm, the difference $\ex{}{\bX} - O(\sqrt{\Var[\bX]})$ is positive and on the order of $\Omega(n^{-1/4} \cdot \|\mu\|_1)$.
    \end{itemize}
    By a Chernoff bound, we can then catch this $\Omega(\eps / n^{1/4})$ bias using $\widetilde O\left(\frac{n}{(\eps / n^{1/4})^2}\right) = \widetilde O(n^{3/2} / \eps^2)$ additional samples in Step~1(f) of the algorithm.%\red{TODO: This is just Chebyshev's inequality, say this.} 
	
	By \Cref{prop:BE-bash}, we have
	\[
		\Var[\bX]
	=	\sum_{i=1}^{n}\mu_i^2\Var[\bsigma_i] + 2\sum_{1 \le i < j \le n}\mu_i\mu_j\Cov[\bsigma_i, \bsigma_j]
	\le \sum_{i=1}^{n}\mu_i^2\Var[\bsigma_i] + O(n^{-1/2})\cdot \|\mu\|_1^2.
	\]
	The second term above is exactly the desired upper bound. For the first term, we note that
	\[
		\Var[\bsigma_i] = e^{-\Omega(m_1\mu_i^2)},
	\]
	so each term $\mu_i^2\Var[\sigma_i]$ is at most $O(1/m_1)$. Recalling $m_1 = \Theta(n^{3/2} / \eps^2)$ and $\|\mu\|_1 \ge \eps$, we conclude that
    \[
        \sum_{i=1}^{n}\mu_i^2\Var[\bsigma_i]
    \le O(n/m_1)
    =   O(\eps^2 / \sqrt{n})
    \le O(n^{-1/2} \cdot \|\mu\|_1^2),
    \]
    which proves the desired variance bound $\Var[\bX] \le O(n^{-1/2} \cdot \|\mu\|_1^2)$ and completes the proof.
\end{proofof}

\subsubsection{Proof of~\Cref{prop:BE-bash}}
\label{subsubsec:daeho-berry-esseen}

Without loss of generality, we prove the lemma when the unate distribution $\calD$ is monotone; the more general statement then follows from the observation that flipping the $i^\text{th}$ coordinate of $\calD$ leads to a flip in the distribution of $\bsigma_i$, which preserves the magnitude of the covariances.

For $i \in [n]$, we define 
\[
    \bS_i := \sum_{\ell=1}^m \x{\ell}_i 
    \qquad\text{and}\qquad 
    \mu_i := \Ex_{\bx\sim\calD}\sbra{\bx_i}\,. 
\]
In particular, we have $\bsigma_i = \sign(\bS_i) = 2\cdot \mathbf{1}\cbra{\bS_i\geq 0} - 1$, and so 
\begin{equation} \label{eq:cov-in-terms-of-big-sig}
    \Cov[\bsigma_i,\bsigma_j] = 4\pbra{\Pr[\bS_i \geq 0,\bS_j \geq 0] - \Pr[\bS_i \geq 0]\Pr[\bS_j \geq 0] }\,.
\end{equation}

We will prove~\Cref{prop:BE-bash} with $i = 1$ and $j = 2$; that is, we will show that 
\[
    \Cov[\bsigma_1, \bsigma_2] = O\pbra{\frac{1}{\sqrt{m}}}\,.
\]
Define 
\begin{align*}
    p_{00} := \Prx_{\bx\sim\calD}\sbra{(\bx_1,\bx_2) = (-1,-1)}\,,
    \qquad&\qquad
    p_{10} := \Prx_{\bx\sim\calD}\sbra{(\bx_1,\bx_2) = (+1,-1)}\,, \\ 
    p_{01} := \Prx_{\bx\sim\calD}\sbra{(\bx_1,\bx_2) = (-1,+1)}\,,
    \qquad&\qquad
    p_{11} := \Prx_{\bx\sim\calD}\sbra{(\bx_1,\bx_2) = (+1,+1)}\,.
\end{align*}
Thanks to monotonicity of $\calD$, we have $p_{00} \leq p_{10} \leq p_{11}$ and $p_{00}\leq p_{01} \leq p_{11}$. 
It also follows that $\mu_1 = 1 - 2(p_{00} + p_{01}) \geq 0$ and $\mu_2 = 1 - 2(p_{00} + p_{10}) \geq 0$. 
The proof considers two cases depending on the value of $\mu_1 + \mu_2$: 

\subsubsection*{Case 1: $\mu_1 + \mu_2 > 1/2$} 

We record the following easy bound on $\Cov[\bsigma_1, \bsigma_2]$: 
\begin{equation}
\label{eq:easy-cov-ub}
    \abs{\Cov[\bsigma_1,\bsigma_2]} \leq 4\min\cbra{\Pr\sbra{\bS_1 < 0}, \Pr\sbra{\bS_2 < 0}}\,.
\end{equation}
Assuming~\Cref{eq:easy-cov-ub}, a direct application of the Hoeffding bound gives 
\[
    \Pr[\bS_i <0] \leq e^{-m\mu_i^2/2}\,,
    \qquad\text{and so}\qquad
    |\Cov(s_1, s_2)|\leq 4e^{-m/32}\ll O\pbra{\frac{1}{\sqrt{m}}}\,,
\]
where we make use of the fact that $\min\{\mu_1, \mu_2\} \geq 1/4$ by the assumption $\mu_1 + \mu_2 > 1/2$. 

We now justify~\Cref{eq:easy-cov-ub}. 
It will be convenient to write 
\[
    A := \Pr[\bS_1\geq 0, \bS_2\geq0]\,,
    \qquad B := \Pr[\bS_1\geq0]\,, 
    \qquad\text{and}\qquad 
    C=\Pr[\bS_2\geq0]\,.
\]
Note that $\E[\bsigma_1] = 2B - 1$ and $\E[\bsigma_2] = 2C-1$. Writing $D := 1 + A - B - C$, it is readily checked that $D = \Pr[\bS_1 < 0, \bS_2 < 0]$. 
It follows from~\Cref{eq:cov-in-terms-of-big-sig} that  $\Cov[\bsigma_1,\bsigma_2] = 4(A-BC)$. 
In particular, we have 
\[
    \frac{1}{4}\abs{\Cov[\bsigma_1,\bsigma_2]} = \abs{A - BC}\,.
\]
Now, 
\begin{itemize}
    \item If $A \geq BC$, then $A-BC \leq C(1 - B) \leq 1-B$.
    \item If $A < BC$, then $BC - A \leq C-(B+C-1) = 1-B$, since $D \geq 0$ gives $A \geq B+C -1$. 
\end{itemize}
It follows that $|\Cov[\bsigma_1,\bsigma_2]| \leq 4(1-B) = 4\Pr[\bS_1 < 0]$. 
By symmetry, the same holds with $B$ and $C$ swapped, which yields~\Cref{eq:easy-cov-ub}. 

%%%%%%%

\subsubsection*{Case 2: $\mu_1 + \mu_2 \leq 1/2$}

In this case, we will control $\Cov(\bsigma_1,\bsigma_2)$ via a Gaussian approximation to the sums $\bS_1$ and $\bS_2$. 
First, let $\kappa := \Cov[\bx_1,\bx_2]$ for $\bx\sim\calD$. 
Note that $\kappa = p_{00} + p_{11} - p_{01} - p_{10} - \mu_1\mu_2$. 

We will first control the covariance of the Gaussian approximation. 
Consider the standardized random vector
\[
    \bT = (\bT_1, \bT_2) := \frac{1}{\sqrt{m}}(\bS_1 - m\mu_1, \bS_2 - m\mu_2)\,
    \qquad\text{and let}\qquad 
    \Sigma:=\begin{pmatrix} \lambda_1^2 & \kappa \\[2pt] \kappa & \lambda_2^2\end{pmatrix}
\]
where $\lambda_i^2 = 1 - \mu_i^2$. 
We will sometimes write $\mu = (\mu_1, \mu_2)$. 
Finally, let $\rho$ be the correlation between $\bT_1$ and $\bT_2$, that is, 
\[
    \rho := \frac{\kappa}{\lambda_1\lambda_2}\,.
\]

Note that by the central limit theorem, $\bT \to N(0, \Sigma)$ (in distribution) as $m \to \infty$. 
Following~\Cref{eq:cov-in-terms-of-big-sig}, the Gaussian approximation of the correlation is 
\[
    \Cov_{\approx} 
    := 
    4\Bigl(\Pr\big[\bZ_1\geq -\sqrt{m}\mu_1,\bZ_2\geq-\sqrt{m}\mu_1\big]-\Pr\big[\bZ_1\geq-\sqrt{m}\mu_1\big]\Pr\big[\bZ_2\geq-\sqrt{m}\mu_2\big]\Bigr).
\] 
Note that $\Cov_{\approx} = \Cov[\btau_1,\btau_2]$ where $\btau_i = \sign(\bZ_i - \sqrt{m}\mu_i)$. 
Next, we will show that $\Cov_{\approx} = O(m^{-1/2})$ via known estimates for computing bi-variate normal probabilities. 
In particular, applying Equation~3.6 of~\cite{owen1956tables} gives 
\begin{align}
    |\Cov_{\approx}| &= 4\left|\frac{1}{2\pi}\int_0^\rho \frac{1}{\sqrt{1-z^2}}\exp\Bigl(-\frac{1}{2}\frac{(\sqrt{m}\tfrac{\mu_1}{\lambda_1})^2+(\sqrt{m}\tfrac{\mu_2}{\lambda_2})^2-2(\sqrt{m}\tfrac{\mu_1}{\lambda_1})(\sqrt{m}\tfrac{\mu_2}{\lambda_2})z}{1-z^2}\Bigr) \, dz\right|  \nonumber \\
    &\leq \frac{2}{\pi}\int_0^{|\rho|} \frac{1}{\sqrt{1-z^2}}\exp\Bigl(-\frac{m}{2}\left(\frac{(\tfrac{\mu_1}{\lambda_1}-\tfrac{\mu_2}{\lambda_2})^2}{1-z^2}+\frac{2\tfrac{\mu_1}{\lambda_1}\tfrac{\mu_2}{\lambda_2}}{1+z}\right)\Bigr) \, dz \label{abs} \\
    &\leq \frac{2}{\pi}\int_0^{|\rho|} \frac{1}{\sqrt{1-z^2}}\exp\Bigl(-\frac{m}{8}(\tfrac{\mu_1}{\lambda_1}+\tfrac{\mu_2}{\lambda_2})^2\Bigr) \, dz \label{alg} \\
    &\leq \frac{2}{\pi}\sin^{-1}|\rho|\exp\Bigl(-\frac{m}{8}(\mu_1+\mu_2)^2\Bigr) \nonumber \\
    &\leq |\rho|\exp\Bigl(-\frac{m}{8}(\mu_1+\mu_2)^2\Bigr)\,, \label{trig bound}
\end{align} 
where \Cref{abs} relies on the fact that the integrand is larger on the positive $z$ side, \Cref{alg} uses 
\[
    \frac{(\tfrac{\mu_1}{\lambda_1}-\tfrac{\mu_2}{\lambda_2})^2}{1-z^2}+\frac{2\tfrac{\mu_1}{\lambda_1}\tfrac{\mu_2}{\lambda_2}}{1+z} \geq (\tfrac{\mu_1}{\lambda_1}-\tfrac{\mu_2}{\lambda_2})^2+\tfrac{\mu_1}{\lambda_1}\tfrac{\mu_2}{\lambda_2} = \frac{1}{4}(\tfrac{\mu_1}{\lambda_1}+\tfrac{\mu_2}{\lambda_2})^2+\frac{3}{4}(\tfrac{\mu_1}{\lambda_1}-\tfrac{\mu_2}{\lambda_2})^2\,,
\]
and \Cref{trig bound} relies on $\lambda_i\leq 1$ and $\sin^{-1}(\theta) \leq \frac{\pi\theta}{2}$ for $\theta\geq 0$.

We record the following lemma: 

\begin{lemma}\label{absrho}
    For monotone distributions, if $\mu_1+\mu_2 \leq \frac{1}{2}$, then $|\rho|\leq \mu_1+\mu_2$. 
\end{lemma}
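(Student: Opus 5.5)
The plan is to reduce everything to the joint law of $(\bx_1,\bx_2)$, which is determined by three numbers: $\mu_1$, $\mu_2$, and $E := \Ex[\bx_1\bx_2] = p_{00}+p_{11}-p_{01}-p_{10}$. Monotonicity then becomes a set of linear constraints on these numbers. Specifically, I will show that monotonicity forces $|E| \le \min\{\mu_1,\mu_2\}$. Since $\kappa = E - \mu_1\mu_2$, this bounds $|\kappa|$. Finally, I will lower bound $\lambda_1\lambda_2$ using the smallness of $\mu_1,\mu_2$.

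\textbf{Step 1 (inverting to the cell probabilities).} Using $p_{00}+p_{01}+p_{10}+p_{11}=1$ together with the definitions of $\mu_1,\mu_2,E$, I will solve for the four cell probabilities:
\[
p_{11} = \tfrac{1}{4}(1+\mu_1+\mu_2+E),\quad p_{00} = \tfrac{1}{4}(1-\mu_1-\mu_2+E),\quad p_{10} = \tfrac14(1+\mu_1-\mu_2-E),\quad p_{01} = \tfrac14(1-\mu_1+\mu_2-E).
\]

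\textbf{Step 2 (monotonicity as constraints on $E$).} I will plug these expressions into the four monotonicity inequalities, and each one reduces to a single linear constraint:
\begin{itemize}
    \item $p_{00}\le p_{10}$ becomes $E\le \mu_1$;
    \item $p_{00}\le p_{01}$ becomes $E\le\mu_2$;
    \item $p_{10}\le p_{11}$ becomes $E\ge -\mu_2$;
    \item $p_{01}\le p_{11}$ becomes $E\ge-\mu_1$.
\end{itemize}
Since $\mu_1,\mu_2\ge 0$ (as already noted in the text), these combine to $|E|\le \min\{\mu_1,\mu_2\}\le (\mu_1+\mu_2)/2$. Hence
\[
|\kappa|\le |E|+\mu_1\mu_2 \le \tfrac12(\mu_1+\mu_2) + \tfrac14(\mu_1+\mu_2)^2 \le \left(\tfrac12+\tfrac18\right)(\mu_1+\mu_2).
\]
The last inequality uses $\mu_1+\mu_2\le 1/2$.

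\textbf{Step 3 (denominator).} Since $0\le\mu_i\le\mu_1+\mu_2\le 1/2$, we have $\lambda_i^2 = 1-\mu_i^2\ge 3/4$, and therefore $\lambda_1\lambda_2\ge 3/4$. Combining this with Step 2 gives
\[
|\rho|\le \tfrac43\cdot\tfrac58(\mu_1+\mu_2)=\tfrac56(\mu_1+\mu_2)\le\mu_1+\mu_2.
\]

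The only step requiring any care is Step 2, which translates monotonicity into the bound $|E|\le\min\{\mu_1,\mu_2\}$. This step is where the lemma genuinely uses the monotone structure: without monotonicity, $E$ could be close to $1$, as in the two-point example following \Cref{prop:BE-bash}. Everything else is routine arithmetic, and the constants have slack.
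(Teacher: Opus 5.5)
Your proof is correct and follows essentially the same route as the paper. The paper parametrizes the two-coordinate marginal as $p_{00}=a$, $p_{01}=a+b$, $p_{10}=a+c$, $p_{11}=a+d$ with $0\le b,c\le d$, which is exactly your constraint system $|E|\le\min\{\mu_1,\mu_2\}$ (since $E=d-b-c$ and $\mu_1+\mu_2=2d$); it then derives the same bound $|\kappa|\le d+d^2$ and the same denominator bound from $\mu_i\le 1/2$, so your moment coordinates $(\mu_1,\mu_2,E)$ amount to a change of variables, and your looser estimate $\lambda_1\lambda_2\ge 3/4$ still leaves enough slack.
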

\begin{proof}
    Parametrize the probabilities of the monotone distribution as 
    \[p_{00} = a, \quad p_{01}=a+b, \quad p_{10}=a+c, \quad p_{11}=a+d.\]
    Then $\mu_1 = d+c-b$, $\mu_2 = d+b-c$, so $\mu_1+\mu_2 = 2d$ implies $d\leq \tfrac14$. Now 
    \begin{align}
        |\rho| &= \frac{|(d-b-c)-(d+c-b)(d+b-c)|}{\sqrt{(1-\mu_1^2)(1-\mu_2^2)}} \nonumber\\
        &\leq \frac{2}{\sqrt{3}}\left|(d-b-c)+((b-c)^2-d^2)\right| \label{mu} \\
        &\leq \frac{2}{\sqrt{3}}(d+d^2) \label{bc} \\
        &=2d\cdot \frac{1+d}{\sqrt{3}} \leq 2d = \mu_1+\mu_2  \nonumber 
    \end{align} 
    where in \Cref{mu} we use that $(1-\mu_1^2)(1-\mu_2^2) \geq \frac{3}{4}$, and in \Cref{bc} we note that $|d-b-c|\leq d$ and $|b-c|\leq d$. 
\end{proof}
Combining~\Cref{trig bound} with~\Cref{absrho}, we get
\begin{equation} \label{eq:senior-fitness}
    |\Cov_{\approx}| \leq (\mu_1+\mu_2) \exp\Bigl(-\frac{m}{8}(\mu_1+\mu_2)^2\Bigr) = O\pbra{\frac{1}{\sqrt{m}}}\,.
\end{equation}

Next, we will control the error in the Gaussian approximation to $\Cov[\bsigma_1,\bsigma_2]$ itself by applying the multivariate Berry-Eseeen central limit theorem (cf.~\Cref{thm:BE}). 
It is readily verified using \Cref{eq:cov-in-terms-of-big-sig} and the triangle inequality that 
\[
    \frac{1}{4}\pbra{\Cov[\bsigma_1,\bsigma_2] - \Cov_{\approx}} \leq \sum_{i=1}^3 |\alpha_i - \beta_i|
\]
where 
\begin{gather*}
    \alpha_1 = \Prx[\bS_1 \geq 0, \bS_2 \geq 0]\,,\quad  
    \alpha_2 = \Pr[\bS_1 \geq 0]\,,\quad 
    \alpha_3 = \Pr[\bS_2 \geq 0]\,,\\[0.5em] 
    \beta_1 = \Pr\sbra{\bZ_1 \geq -\sqrt{m}\mu_1, \bZ_2 \geq -\sqrt{m}\mu_2}\,,\quad 
    \beta_2 = \Pr\sbra{\bZ_1 \geq -\sqrt{m}\mu_1}\,, \quad 
    \beta_3 = \Pr\sbra{\bZ_2 \geq -\sqrt{m}\mu_2}\,. 
\end{gather*}
\Cref{thm:BE} combined with the fact that both $\mu_1, \mu_2 \leq 1/2$ immediately gives $|\alpha_2-\beta_2|$, $|\alpha_3-\beta_3| = O(m^{-1/2})$. 
In more detail, \Cref{thm:BE} gives
\[
    |\alpha_2-\beta_2| 
    = O\left(\frac{1}{\sqrt{m}}\cdot\frac{[|\bS_1-\mu_1|^3]}{\Ex[(\bS_1-\mu_1)^2]^{3/2}}\right) 
    = O\left(\frac{1}{\sqrt{m}}\cdot\frac{(1+\mu_1)^3]}{(1-\mu_1^2)^{3/2}}\right) \leq O\pbra{\frac{1}{\sqrt{m}}}\,.
\] 
An identical argument gives the same bound for $|\alpha_3-\beta_3|$. 
Finally, in order to show that $|\alpha_1-\beta_1| = O(m^{-1/2})$ using~\Cref{thm:BE}, it suffices to establish that $\E[\|\Sigma^{-1/2}(\bS-\mu)\|_2^3] = O(1)$, where $\bS = (\bS_1, \bS_2)$. 
To see this, first note that 
\[
    \Sigma^{-1/2} = \frac{1}{2}\begin{bmatrix}
        \alpha & \beta \\
        \beta & \alpha
    \end{bmatrix}\begin{bmatrix}
        \lambda_1 & 0 \\ 0 & \lambda_2
    \end{bmatrix}^{-1}
\]
where $\alpha = \frac{1}{\sqrt{1+\rho}}+\frac{1}{\sqrt{1-\rho}}$ and $\beta = \frac{1}{\sqrt{1+\rho}}-\frac{1}{\sqrt{1-\rho}}$. 
Now, 
\begin{align}
    \|\Sigma^{-1/2}(\bS-\mu)\|_2^2 &= \frac{1}{4}\left((\alpha\lambda_1^{-1}(\bS_1-\mu_1)+\beta\lambda_2^{-1}(\bS_2-\mu_2))^2+(\beta\lambda_1^{-1}(\bS_1-\mu_1)+\alpha\lambda_2^{-1}(\bS_2-\mu_2))^2\right) \nonumber\\
    &= \frac{1}{1-\rho^2}\left(\lambda_1^{-2}(\bS_1-\mu_1)^2+\lambda_2^{-2}(\bS_2-\mu_2)^2-\rho\lambda_1^{-1}\lambda_2^{-1}(\bS_1-\mu_1)(\bS_2-\mu_2)\right) \label{alphabeta} \\
    &\leq \frac{4}{1-\rho^2}\left(\lambda_1^{-2}+\lambda_2^{-2}+\rho\lambda_1^{-1}\lambda_2^{-1}\right) \label{variab} \\
    &= 4\cdot\frac{2-(\mu_1^2+\mu_2^2)+\kappa}{(1-\mu_1^2)(1-\mu_2^2)-\kappa^2} \nonumber \\
    &\leq 4\cdot \frac{2-\frac{1}{2}+\frac{1}{4}}{\frac{3}{4}-\left(\frac{4}{9}\right)^2} < 16 \label{kap}, 
\end{align} 
where in \Cref{alphabeta} we note that $\alpha^2+\beta^2 = \frac{4}{1-\rho^2}$ and $2\alpha\beta = -\frac{4\rho}{1-\rho^2}$, in \Cref{variab} we use $|X_i-\mu_i|\leq 2$, and in \Cref{kap} we recall $(1-\mu_1^2)(1-\mu_2^2) \geq \frac{3}{4}$ and $-\tfrac{4}{9} \leq \kappa \leq \tfrac{1}{4}$. 
The result follows immediately. 
\qed

\subsection{Lower Bound}
\label{subsec:unif-lb}

Turning to lower bounds for uniformity testing of unate distribution, we will establish the following: 

\unifLB*

In particular, \Cref{thm:our-unif-lb} implies that the algorithm from~\Cref{subsec:unif-ub} has essentially optimal sample complexity.  
Our proof of~\Cref{thm:our-unif-lb} will rely on a slight modification of the construction used by Rubinfeld and Servedio~\cite{RubinfeldS09} to show a tight lower bound for uniformity testing of monotone distributions. 

\subsubsection{Useful Preliminaries}
\label{subsec:unif-lb-prelims}

Throughout this section, we set a parameter 
\[
    \tau := \Theta\pbra{\frac{\log n}{n}}\,.
\]
We write $\calD_\tau$ for the product distribution over $\bn$ where each marginal is a $\tau$-biased bit. 
In other words, for $x\in\bn$, we have 
\[
    \calD_\tau(x) = \pbra{\frac{1}{2} + \tau}^{\abs{\{i:x_i = +1\}}}\pbra{\frac{1}{2} - \tau}^{\abs{\{i:x_i = -1\}}}\,.
\]
Claim~11 of~\cite{RubinfeldS09} gives the following representation of $\calD_\tau$ which will be useful for our purposes: 
\begin{equation} \label{eq:RS-subcube-decomp}
    \calD_\tau(x) 
    = \sum_{i=0}^n \sum_{\substack{y \in \bn \\ |i : y_i = +1| = i}} (2\tau)^i(1-2\tau)^{n-i} \calU_y(x)
    =: \sum_{y \in \bn} p_y \calU_y(x)\,,
\end{equation} 
where we write $\calU_y$ for the uniform distribution on the monotone subcube rooted at $y$, i.e., the uniform distribution on $\{z \in \bn : z_i \geq y_i~\text{for all}~i\in[n] \}$. 

\begin{proposition}[Claim~7 of~\cite{RubinfeldS09}]
\label{prop:RS-claim-7}
    Let $T \in \N$ and suppose $m \leq 0.1\sqrt{T}$. 
    Let $\bS_1$ be the random variable where a draw is obtained by making $m$ independent draws from $\calD_\tau$. 
    We define a new random variable $\bS_2$ on the same domain as $\bS_1$, where a draw is obtained as follows: 
    % Consider the following random process: 
    \begin{enumerate}
        \item[(i)] First, for $\ell \in [T]$, independently draw $\by^{(\ell)}\sim\bn$ with probability $p_y$ (cf.~\Cref{eq:RS-subcube-decomp}).
        \item[(ii)] Define the distribution $\bq$ over $\bn$ as 
        \[
            \bq(x) := \frac{1}{T}\sum_{\ell=1}^T \calU_{\by^{(\ell)}}(x)\,.
        \]
        Then, we make $m$ independent draws from $\bq$. 
    \end{enumerate}
    We then have 
    \[
        \dtv(\bS_1, \bS_2) \leq \frac{1}{100}\,. 
    \]
\end{proposition}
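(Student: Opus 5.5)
The approach is a coupling through the decomposition in \Cref{eq:RS-subcube-decomp}. Since $\calD_\tau = \sum_y p_y \calU_y$, a draw from $\calD_\tau$ can be generated in two stages: pick a root $\by \sim p$, then output a uniform point of the subcube $\calU_{\by}$. Thus $\bS_1$ has the same law as the following process: draw $m$ \emph{fresh} roots $\by'^{(1)},\dots,\by'^{(m)}$ i.i.d.\ from $p$, and output an independent uniform point from each subcube $\calU_{\by'^{(k)}}$.

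Next we rewrite $\bS_2$ in the same form. A draw from $\bq$ picks an index $\bell$ uniformly from $[T]$ and outputs a uniform point of $\calU_{\by^{(\bell)}}$. So $\bS_2$ is obtained by drawing indices $\bell_1,\dots,\bell_m$ i.i.d.\ uniform from $[T]$ and then, for each $k$, a uniform point of $\calU_{\by^{(\bell_k)}}$. The subcube points are conditionally independent given the roots.

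Now condition on the event $E$ that the indices $\bell_1,\dots,\bell_m$ are pairwise distinct. On $E$, the roots $\by^{(\bell_1)},\dots,\by^{(\bell_m)}$ are $m$ distinct entries of an i.i.d.\ sequence drawn from $p$. These indices are chosen independently of the roots' values, so the selected roots are i.i.d.\ from $p$. Hence, conditioned on $E$, the law of $\bS_2$ is exactly the law of $\bS_1$. This is a pure coupling and requires no structure of $p$ beyond it being a distribution.

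Finally we apply the standard bound $\dtv(\bS_1,\bS_2)\le \Pr[\overline{E}]$, which holds since $\bS_2$ is a mixture of $\bS_1$ (weight $\Pr[E]$) and something else. By the birthday bound and a union bound over pairs,
\[
\Pr[\overline{E}] \le \binom{m}{2}\frac{1}{T} \le \frac{m^2}{2T} \le \frac{0.01\,T}{2T} < \frac{1}{100}.
\]
The only delicate step is justifying that distinct uniformly chosen indices yield i.i.d.\ roots. This follows because $(\bell_k)$ is independent of $(\by^{(\ell)})$, so after conditioning on any distinct index tuple, exchangeability gives $m$ i.i.d.\ draws from $p$.
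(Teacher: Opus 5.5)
Your proof is correct. It is the standard birthday-paradox coupling behind Claim~7 of~\cite{RubinfeldS09}, which the paper cites without reproving: both $\bS_1$ and $\bS_2$ are two-stage root-then-subcube processes that agree once the $m$ chosen indices are distinct, so $\dtv(\bS_1,\bS_2)\le\binom{m}{2}/T\le 0.005$.
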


Note, in particular, that the value of $T$ in \Cref{prop:RS-claim-7} is unconstrained. 
We will also require the following 
due to Rubinfeld and Servedio, which states that for $T = n^3$, 
with high probability the distribution $\bq$ constructed
in~\Cref{prop:RS-claim-7} is far from uniform:

\begin{proposition}[Lemma~13 of~\cite{RubinfeldS09}]
\label{prop:RS-lemma-13}
    Suppose $T = n^3$, and let $\bq$ be as in~\Cref{prop:RS-claim-7}. 
    Then with probability $1-o(1)$ over the random choice of $\bq$, we have 
    \[
        \dtv(\bq, \calU) \geq 1 - \frac{1}{n^{9}}\,,
    \]
    where $\calU$ denotes the uniform distribution over $\bn$. 
\end{proposition}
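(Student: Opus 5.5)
The plan is to show that, with high probability, $\bq$ is supported on a vanishingly small fraction of $\bn$. This is enough: for any distribution $\bq$ supported on a set $S$, we have $\dtv(\bq,\calU) \geq \calU(\bn\setminus S) = 1 - |S|/2^n$. By construction the support of $\bq$ is the union of the $T$ monotone subcubes rooted at $\by^{(1)},\dots,\by^{(T)}$. Write $k_\ell := |\{i : \by^{(\ell)}_i = +1\}|$. The subcube rooted at $\by^{(\ell)}$ has exactly $2^{n-k_\ell}$ points, so
\[
    |\mathrm{supp}(\bq)| \leq \sum_{\ell=1}^T 2^{n-k_\ell} \leq T\cdot 2^{n-\min_\ell k_\ell}.
\]
It therefore suffices to show that, with probability $1-o(1)$, every $k_\ell$ is at least $12\log_2 n$. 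Indeed, we then get $|\mathrm{supp}(\bq)|/2^n \leq n^3\cdot n^{-12} = n^{-9}$.

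To bound the $k_\ell$, read the decomposition \Cref{eq:RS-subcube-decomp} as follows: $p_y$ is the law of a string whose coordinates are independently $+1$ with probability $2\tau$. Hence each $k_\ell \sim \Binomial(n, 2\tau)$, with mean $2\tau n = 2C\log n$, where $\tau = C\log n/n$ and the constant $C$ is the one hidden in $\tau=\Theta(\log n/n)$. A multiplicative Chernoff bound gives
\[
    \Pr\sbra{k_\ell < C\log n} \leq \exp(-\Omega(C\log n)) = n^{-\Omega(C)}.
\]
A union bound over the $T = n^3$ independent roots then shows that $\min_\ell k_\ell \geq C\log n$ except with probability $n^{3-\Omega(C)} = o(1)$. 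We need $C$ large enough for two things: the exponent $\Omega(C)$ must exceed $3$, and $C\log n$ must be at least $12\log_2 n$.

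The only real obstacle is bookkeeping with the constant in $\tau$. One must make sure that the choice of $C$ is compatible with the other uses of $\tau$ in the lower-bound argument, such as \Cref{prop:RS-claim-7} and the later $\chi^2$ computation. For the proposition itself, it suffices to take $C$ a sufficiently large absolute constant, for instance $C=20$. Then the Chernoff exponent $(1/8)\cdot 2C\log n$ comfortably beats the $n^3$ union bound, and $C\ln n \ge 12\log_2 n$ as well. Combining the two steps yields $\dtv(\bq,\calU)\geq 1-n^{-9}$ with probability $1-o(1)$.
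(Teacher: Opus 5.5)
Your proof is correct and matches the paper's route. The paper gives no self-contained proof; it only says the variant follows from the proof of Lemma~13 of~\cite{RubinfeldS09} by enlarging the constant hidden in $\tau$. Your argument makes exactly that adjustment explicit: a Chernoff-plus-union bound shows every root has at least $12\log_2 n$ coordinates equal to $+1$, so the support of $\bq$ has size at most $\sum_\ell 2^{n-k_\ell}\le n^{-9}2^n$. The compatibility concern you raise is harmless, since \Cref{prop:RS-claim-7} does not depend on $\tau$, and the $\chi^2$ bound $O(\sqrt{nm^2\tau^4})$ only needs the constant hidden in $m$ to be small relative to $C^{-2}$.
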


We note that Rubinfeld and Servedio~\cite{RubinfeldS09} proved their Lemma~13 with $T= n^2$, 
but we require a larger $T$ for our result. 
The variant above readily follows from inspecting their proof and adjusting the constant hidden by $\Theta(\cdot)$ in $\tau$. 

\subsubsection{Proof of~\Cref{thm:our-unif-lb}} 

% \snote{I think we wanted to update the notation in this section, right?}

Let $T = n^3$, $m = \Theta(n^{3/2}/\log^2(n))$, and let $\calD_\tau$ be as in the previous section. 
(Note that $m \leq 0.1\sqrt{T}$ for $n$ large enough.)
We will show that there exists a distribution $\calF$ over unate distributions over $\bn$ such that 
\begin{enumerate}
    \item With probability $1-o(1)$ over the draw of $\bq'\sim\calF$, $\dtv(\bq', \calU) \geq 1-n^{-9}$. 
    \item The statistical distance between (a) the distribution of $m$ independent samples from $\calU$, and (b) the draw of $m$ independent samples from $\bq'$ where $\bq' \sim\calF$, is at most $0.2$.  
\end{enumerate}
Items~1 and~2 immediately imply that any algorithm with performance guarantee as in~\Cref{thm:our-unif-test} must draw $m = \Omega(n^{3/2})$ samples.

%Suppose, for the sake of notational simplicity, that $n$ is even.  
We now define the family $\calF$ of hard distributions. 
A distribution $\bq'\sim\calF$ is generated as follows: 
\begin{enumerate}
    \item[(a)] First generating $\bq$ as in \Cref{prop:RS-claim-7} and drawing $\bsigma\sim\bn$.
    \item[(b)] Returning the distribution $\bq' := \bq^{\bsigma}$ (cf.~\Cref{notation:D-sigma}). 
\end{enumerate}
Note that since every $\bq$ generated as in~\Cref{prop:RS-claim-7} is monotone, it follows that every $\bq' = \bq^{\bsigma}$ is a unate distribution. 
Item~1 follows immediately from~\Cref{prop:RS-lemma-13}; all that remains is to establish Item~2 above. 

It will be convenient to write $\calD'$ for the distribution of $m$ independent samples from $\bq'$ after drawing $\bq'\sim\calF$. 
For $\bsigma$ as in Item~(a) above, we have 
\begin{align*}
    \dtv\pbra{\calD', \calU^{\otimes m}} 
    &\leq \dtv\pbra{\calD', (\calD_\tau^{\bsigma})^{\otimes m}} + \dtv\pbra{(\calD_\tau^{\bsigma})^{\otimes m}, \calU^{\otimes m}} \tag{Triangle inequality} \\ 
    &= \dtv\pbra{\bS_1, \bS_2} + \dtv\pbra{(\calD_\tau^{\bsigma})^{\otimes m}, \calU^{\otimes m}}  \\ 
    &\leq \frac{1}{100} + \dtv\pbra{(\calD_\tau^{\bsigma})^{\otimes m}, \calU^{\otimes m}}\,,  \tag{\Cref{prop:RS-claim-7}}
\end{align*}
where $\bS_1, \bS_2$ are as in~\Cref{prop:RS-claim-7}. 
Note that in the first item above, $(\calD_\tau^{\bsigma})^{\otimes m}$ refers to the distribution of $m$ samples obtained by first drawing $\bsigma$ and then drawing $m$ independent samples from $\calD_\tau^{\bsigma}$. 

The remainder of the argument will establish
\begin{equation} \label{eq:unif-LB-final-goal}
    \dtv\pbra{(\calD_\tau^{\bsigma})^{\otimes m}, \calU^{\otimes m}} 
    \leq 0.1\,, 
\end{equation}
which will complete the proof of~\Cref{thm:our-unif-lb}. 
Since each marginal is independent in both $\calD_\tau^{\bsigma}$ and $\calU$, it follows from additivity of the KL divergence that 
\begin{align*}
    \dtv\pbra{(\calD_\tau^{\bsigma})^{\otimes m}, \calU^{\otimes m}} 
    &\leq 
    \sqrt{\frac{1}{2}\dkl\pbra{(\calD_\tau^{\bsigma})^{\otimes m}, \calU^{\otimes m}}}  \tag{\Cref{eq:distance-relationships}} \\ 
    &= \sqrt{\frac{1}{2}\sumi \dkl\pbra{(\calD_\tau^{\bsigma})_i^{\otimes m}, (\calU)_i^{\otimes m}}}\,,
\end{align*}
where we use the subscript $i\in [n]$ to denote the marginal of the draw of $m$ samples, each restricted to the $i^\text{th}$ bit. 
Note that for any fixed $i\in [n]$, the quantity $\dkl\pbra{(\calD_\tau^{\bsigma})_i^{\otimes m}, (\calU)_i^{\otimes m}}$ is the KL divergence between (i) the uniform mixture of $\Binomial(m,1/2-\tau)$ and $\Binomial(m, 1/2+\tau)$, and (ii) $\Binomial(m, 1/2)$. 

Let $p_\pm:=\tfrac12\pm\tau$ with $0<\tau<\tfrac12$, and define the distributions 
\[
    P(k)=\tfrac12\binom{m}{k}p_-^{\,k}(1-p_-)^{m-k} +\tfrac12\binom{m}{k}p_+^{\,k}(1-p_+)^{m-k},
    \qquad
    Q(k)=\binom{m}{k}\,2^{-m}
\]
for notational convenience. Note that $P$ is the uniform mixture from above, while $Q$ is $\Binomial(m, 1/2 + \tau)$\,. 
Finally, let 
\[
    f_{\pm}(k) := \frac{P_{\pm}(k)}{Q(k)} = \frac{P_\pm(k)}{Q(k)}
    =\bigl(2p_\pm\bigr)^{k}\bigl(2(1-p_\pm)\bigr)^{m-k}\,,
\]
and note that $\E_{Q}[f_{\pm}] = 1$. 
We have from~\Cref{eq:distance-relationships} that 
\begin{align*}
    \dkl\pbra{(\calD_\tau^{\bsigma})_i^{\otimes m}, (\calU)_i^{\otimes m}}
     &\leq \dchi\pbra{(\calD_\tau^{\bsigma})_i^{\otimes m}, (\calU)_i^{\otimes m}} \\
    &=\Ex_Q\!\left[\left(\tfrac12(f_-+f_+)-1\right)^{\!2}\right]\\
    &=\tfrac14\,\Ex_Q\!\bigl[(f_-+f_+)^2\bigr]-1\\
    &=\tfrac14\Bigl(\Ex_Q[f_-^2]+\Ex_Q[f_+^2]+2\,\Ex_Q[f_-f_+]\Bigr)-1\\
    &=\tfrac12\Bigl((1+4\tau^2)^m+(1-4\tau^2)^m\Bigr)-1\\
    &=8m(m-1)\tau^4+O\!\bigl(m\tau^6\bigr)\,.
\end{align*}
Combining this with the earlier bound, we get 
\[
    \dtv\pbra{(\calD_\tau^{\bsigma})^{\otimes m}, \calU^{\otimes m}} 
    \leq O(1)\cdot \sqrt{n m^2 \tau^4} \ll 1 ~\qquad\text{for}~m = \Theta\pbra{\frac{n^{3/2}}{\log^2 n}}\,,
\]
which establishes~\Cref{eq:unif-LB-final-goal} and completes the proof of~\Cref{thm:our-unif-lb}. \qed

%% file: sections/subcube.tex
\section{Testing Unateness with Subcube Conditioning}
\label{sec:subcube}

We now turn to the proofs of~\Cref{thm:our-subcube-UB,thm:subcube-lower-bound}, starting with the former. 

\subsection{Upper Bound}

We start by recalling \Cref{thm:our-subcube-UB}:

\subcubeUB*

\subsubsection{Preliminaries from~\cite{CCRSW25}}
\label{subset:erik-technical-lemma}

Chakrabarty, Chen, Ristic, Seshadhri, and Waingarten~\cite{CCRSW25} gave an $\widetilde O(n/\eps^2)$-query algorithm for monotonicity testing of a distribution over $\{\pm 1\}^n$ in the subcube conditional model. 
Our $\widetilde O(n^{3/2}/\eps^2)$-query algorithm for unateness testing will rely on their principal technical lemma, which we state after introducing some relevant notation. 

% Throughout, let $\calD$ be a distribution over $\{\pm 1\}^n$ that is $\eps$-far from unate. 
For any $x \in \{\pm 1\}^n$ and $i \in [n]$, we define
\[
    \delta_{x,i} \coloneqq \frac{\calD(x^{i\to -1}) - \calD(x^{i\to +1})}{\calD(x^{i\to -1}) + \calD(x^{i\to +1})} \in [-1, 1]
\]
as the bias on the edge $\{x, x^{\oplus i}\}$. 
Note that if $\calD$ is monotone, we always have $\delta_{x,i} \le 0$, so a positive $\delta_{x,i}$ witnesses a violation of monotonicity.

Central to the analysis in~\cite{CCRSW25} is the following lemma, which is implicit in the proof of~\cite[Lemma 2.3]{CCRSW25}. 
For $x \in \R$, we write $(x)_+ \coloneqq \max\{x, 0\}$.

\begin{lemma}[\cite{CCRSW25}] 
\label{lemma:CCRSW}
    For any $\calD$ that is $\eps$-far from monotone, there exists a positive integer $w \le O(\log(n/\eps))$ such that 
    \[
        \Prx_{\bx \sim \calD \atop \bi \sim [n]}\sbra{(\delta_{\bx, \bi})_+^2 \ge 2^{-w}} \ge \widetilde{\Omega}\left(\frac{2^w \cdot \eps^2}{n}\right).
    \]
\end{lemma}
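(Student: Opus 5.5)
Since the statement is attributed to \cite{CCRSW25} as implicit in their proof of Lemma~2.3, my plan is to reconstruct that argument. It has two parts: an upper bound on the distance to monotonicity in terms of a directed edge-violation functional, followed by dyadic bucketing.

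\emph{Step 1 (distance bound).} The key inequality to establish is a directed, distributional Talagrand-type bound of the form
\[
    \eps \le \dtv(\calD, \text{monotone}) \le \widetilde{O}(1)\cdot \Ex_{\bx\sim\calD}\sbra{\sqrt{\textstyle\sum_{i=1}^n (\delta_{\bx,i})_+^2}}\,.
\]
I would prove it in three moves.
\begin{itemize}
    \item Write $\calD = 2^n f\cdot\calU$ with density $f$.
    \item Repair $\calD$ toward monotonicity by repeatedly applying the sorting operators $S_i$, which swap the masses on each violated $i$-edge. Each $S_i$ moves TV mass equal to the total violation on coordinate $i$, and the sorting operators do not increase the violations in other directions.
    \item Bound the resulting total movement by a Talagrand-style square-function quantity, using the real-valued directed isoperimetric inequality. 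Normalizing edges by the edge mass $\calD(x^{i\to-1})+\calD(x^{i\to+1})$ turns the edge differences into the biases $\delta_{x,i}$, weighted by $\calD$.
\end{itemize}
This step is the main obstacle. It requires the directed real-valued Talagrand inequality together with a log-level truncation (levels of $f$) to handle unbounded densities, and this is where the polylogarithmic losses arise. Since the paper treats this as given from \cite{CCRSW25}, I would cite their inequality rather than reprove it.

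\emph{Step 2 (dyadic bucketing).} Set $V(x)=\sum_i(\delta_{x,i})_+^2 = n\Prx_{\bi}[(\delta_{x,\bi})_+^2>0]\cdot(\cdot)$, and decompose it by levels
\[
    V(x)\le \sum_{w\ge1} 2^{-w+1} N_w(x) + n\cdot\frac{\eps^2}{\mathrm{poly}(n)}\,,
    \qquad\text{where } N_w(x)=\#\{i : (\delta_{x,i})_+^2\in[2^{-w},2^{-w+1})\}.
\]
Edges with squared bias below $(\eps/n)^{O(1)}$ contribute negligibly, so only $w\le O(\log(n/\eps))$ levels matter.

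\emph{Step 3 (subadditivity and pigeonhole).} Use subadditivity of the square root, $\sqrt{V}\le\sum_w\sqrt{2^{-w+1}N_w}$, together with Step~1, to get
\[
    \sum_w \Ex\sbra{\sqrt{2^{-w}N_w(\bx)}}\ge \widetilde{\Omega}(\eps)\,.
\]
By pigeonhole over the $O(\log(n/\eps))$ levels, some $w$ satisfies $\Ex[\sqrt{2^{-w}N_w}]\ge\widetilde{\Omega}(\eps)$. Jensen's inequality, or Cauchy--Schwarz, then gives $\Ex[N_w]\ge \widetilde{\Omega}(\eps^2 2^w)$. The awkward case is when $N_w$ is concentrated on few $x$; here Cauchy--Schwarz with $\sqrt{N_w}\le \sqrt{N_w}\cdot\mathbf 1[N_w>0]$ still yields $\Ex[N_w]\ge(\Ex\sqrt{N_w})^2$. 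Dividing by $n$ converts $\Ex[N_w(\bx)]/n$ into $\Prx_{\bx,\bi}[(\delta_{\bx,\bi})_+^2\ge 2^{-w}]$, giving the claimed bound $\widetilde{\Omega}(2^w\eps^2/n)$.
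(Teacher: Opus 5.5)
Your reconstruction matches what the paper relies on. The paper only cites the lemma as implicit in \cite[Lemma~2.3]{CCRSW25}, resting on their real-valued directed isoperimetric inequality; that inequality is your Step~1, $\eps\le\widetilde{O}(1)\cdot\Ex_{\bx\sim\calD}\bigl[(\sum_i(\delta_{\bx,i})_+^2)^{1/2}\bigr]$. Your Steps~2--3 correctly turn it into the stated bound via dyadic bucketing, pigeonhole over $O(\log(n/\eps))$ levels, and Jensen's $\Ex[N_w]\ge(\Ex\sqrt{N_w})^2$, provided you close the top bucket as $[1/2,1]$ so that edges with $\delta_{x,i}=1$ are counted.

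One caution about your sketch of Step~1. The sorting-operator chain cannot work as written: the total sorting movement is an $\ell_1$-over-edges quantity and can exceed the square function by a $\sqrt{n}$ factor. For a product distribution with bias $-\beta$ in every coordinate, it is $n\beta$ versus $\beta\sqrt{n}$. So Step~1 has to be imported wholesale from the isoperimetric theorem, which is what you ultimately propose.
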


The proof of \Cref{lemma:CCRSW} relies on a real-valued ``directed'' isoperimetric inequality proved in~\cite{CCRSW25}.  
In particular, it follows from~\Cref{lemma:CCRSW} that if we draw a random $\bw \le O(\log(n/\eps))$ with probability $\bw = w$ proportional to $2^{-w}$, we have
    \begin{equation} \label{eq:MIT-tech-review}
        \Prx_{\bw, \bx \sim \calD, \bi \sim [n]}\sbra{(\delta_{\bx, \bi})_+^2 \ge 2^{-\bw}}
    \ge \sum_{w=1}^{O(\log(n/\eps))}2^{-w}\cdot\Prx_{\bx \sim \calD \atop \bi \sim [n]}\sbra{(\delta_{\bx, \bi})_+^2 \ge 2^{-w}}
    \ge \widetilde\Omega\left(\frac{\eps^2}{n}\right).
    \end{equation}

Give the lemma above, the algorithm of~\cite{CCRSW25} is natural: we simply sample $\widetilde O(n/\eps^2)$ triples $(\bw, \bx, \bi)$, in the hope of finding at least one with $(\delta_{\bx, \bi})_+^2 \ge 2^{-\bw}$. For each triple, we make $\widetilde O(2^w)$ queries to the subcube (or edge) $\{x, x^{\oplus i}\}$ to check whether $(\delta_{x,i})_+^2 \ge 2^{-w
}$ holds. If this holds for any triple, we reject distribution $\calD$. 
Over the randomness in $\bw$, we make $\sum_{w=1}^{O(\log(n/\eps))}2^{-w}\cdot\widetilde O(2^w) = \widetilde O(1)$ queries for each triple, so the overall query complexity is $\widetilde O(n/\eps^2)$.

\subsubsection{Proof of~\Cref{thm:our-subcube-UB}} 

We first introduce some additional notation before proving~\Cref{thm:our-subcube-UB}. 
Recall from \Cref{notation:D-sigma} that, for $\sigma \in \bn$, $\calD^\sigma$ is the distribution of 
\[
    \bx^\sigma \coloneqq (\sigma_1 \bx_1, \sigma_2 \bx_2, \ldots, \sigma_n \bx_n)\,,
\]
where $\bx \sim \calD$, i.e., $\calD^\sigma$ is obtained from $\calD$ by flipping the coordinates indexed by $\{i \in [n]: \sigma_i = -1\}$. 
Recall that, with respect to distribution $\calD$, we defined
\[
    \delta_{x,i} \coloneqq \frac{\calD(x^{i\to -1}) - \calD(x^{i\to +1})}{\calD(x^{i\to -1}) + \calD(x^{i\to +1})}\,.
\]
In the following, we will write $\delta^{\calD}_{x,i}$ to avoid confusion and emphasize dependence on $\calD$. 
While the next identity is notation-heavy, it states an intuitive fact: if $\calD^\sigma$ has a bias on an edge adjacent to $x$ along the $i^\text{th}$ direction, $\calD$ should also have a bias at $x^\sigma$, albeit flipped by $\sigma_i$. 
More formally, 
\begin{align}
    \delta^{\calD^\sigma}_{x,i}
&=  \frac{\calD^\sigma(x^{i\to -1}) - \calD^\sigma(x^{i\to +1})}{\calD^\sigma(x^{i\to -1}) + \calD^\sigma(x^{i\to +1})} \nonumber \\
&=  \frac{\calD((x^{i\to -1})^\sigma) - \calD((x^{i\to +1})^\sigma)}{\calD((x^{i\to -1})^\sigma) + \calD((x^{i\to +1})^\sigma)} \tag{$\calD^\sigma(x) = \calD(x^\sigma)$} \nonumber \\
&=  \frac{\sigma_i \cdot \left[\calD((x^\sigma)^{i\to -1}) - \calD((x^\sigma)^{i\to +1})\right]}{\calD((x^\sigma)^{i\to -1}) + \calD((x^\sigma)^{i\to +1})} \nonumber \\
&=  \sigma_i \cdot \delta^P_{x^\sigma, i}\,. \label{eq:banana} 
\end{align}

We now record a corollary of~\Cref{lemma:CCRSW} which will be crucial to our analysis of~\Cref{alg:subcube-unate}. 
Since $\calD$ is $\eps$-far from unate, for every sign pattern $\sigma \in \{\pm 1\}^n$, the distribution $\calD^\sigma$ is $\eps$-far from monotone. 
Applying~\Cref{lemma:CCRSW} (or more precisely, \Cref{eq:MIT-tech-review}) to $\calD^\sigma$ then gives 
\begin{align}
    \wt{\Omega}\pbra{\frac{\eps^2}{n}} 
    &\leq \Prx_{\bw, \bx\sim \calD^\sigma, \bi \sim [n]}\sbra{(\delta^{\calD^\sigma}_{\bx, \bi})_+^2 \ge 2^{-\bw}} \nonumber \\
    &= \Prx_{\bw, \bx\sim \calD^\sigma, \bi \sim [n]}\sbra{(\sigma_i\cdot\delta^{\calD}_{\bx^\sigma, \bi})_+^2 \ge 2^{-\bw}} \tag{\Cref{eq:banana}} \nonumber \\
    &= \Prx_{\bw, \bx\sim \calD, \bi \sim [n]}\sbra{(\sigma_i\cdot\delta^{\calD}_{\bx, \bi})_+^2 \ge 2^{-\bw}}\,, \label{eq:super-nice-bakery}
\end{align}
where \Cref{eq:super-nice-bakery} relies on the fact that for $\bx\sim\calD^\sigma$, the random variable $\bx^\sigma$ is distributed as $\calD$. 
We record this consequence of~\Cref{lemma:CCRSW}  for future use: 

\begin{corollary} 
\label{cor:CCRSW}
    Suppose $\calD$ is $\eps$-far from unate. 
    Then for every $\sigma\in\bn$, we have 
    \[
         \Prx_{\bw, \bx\sim \calD, \bi \sim [n]}\sbra{(\sigma_i\cdot\delta^{\calD}_{\bx, \bi})_+^2 \ge 2^{-\bw}} 
         \geq 
         \wt{\Omega}\pbra{\frac{\eps^2}{n}}\,.
    \]
\end{corollary}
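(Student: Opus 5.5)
The plan is to obtain \Cref{cor:CCRSW} directly from \Cref{lemma:CCRSW}, in the averaged form \Cref{eq:MIT-tech-review}, applied to each reoriented distribution $\calD^\sigma$. We then transport the resulting edge-bias statement back to $\calD$ through the identity \Cref{eq:banana}.

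First, fix $\sigma\in\bn$. We claim $\calD^\sigma$ is $\eps$-far from monotone. Suppose instead that some monotone $\calM$ had $\dtv(\calD^\sigma,\calM)<\eps$. The map $x\mapsto x^\sigma$ is an involutive bijection of $\bn$, and TV distance is invariant under bijections, so $\dtv(\calD,\calM^\sigma)<\eps$. But $\calM^\sigma$ is unate by \Cref{def:unate-dist}: $(\calM^\sigma)^\sigma=\calM$ is monotone. This contradicts $\calD$ being $\eps$-far from unate.

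Second, apply \Cref{lemma:CCRSW} to $\calD^\sigma$ and average over $\bw$, drawn with probability proportional to $2^{-w}$ on $w\le O(\log(n/\eps))$. The normalizing constant is $\Theta(1)$, so \Cref{eq:MIT-tech-review} gives
\[
\Prx_{\bw,\bx\sim\calD^\sigma,\bi}\sbra{(\delta^{\calD^\sigma}_{\bx,\bi})_+^2\ge 2^{-\bw}}\ge\wt\Omega(\eps^2/n).
\]
The $\wt\Omega$ hides only polylogarithmic factors that do not depend on $\sigma$. This is because the lemma's guarantee is uniform over all distributions $\eps$-far from monotone.

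Third, rewrite the event. By \Cref{eq:banana}, $\delta^{\calD^\sigma}_{x,i}=\sigma_i\,\delta^{\calD}_{x^\sigma,i}$. This holds because flipping $x$ by $\sigma$ swaps $x^{i\to-1}$ and $x^{i\to+1}$ exactly when $\sigma_i=-1$, which negates the numerator and leaves the denominator unchanged. Finally, $\bx\sim\calD^\sigma$ implies $\bx^\sigma\sim\calD$, so we may substitute $\by=\bx^\sigma$. This gives exactly the probability in \Cref{cor:CCRSW}.

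The only step needing care is the first one, transferring farness from unateness to farness from monotonicity for every $\sigma$. It is short once the bijection invariance is noted. One should also check the bookkeeping in \Cref{eq:banana}, where the paper's $\delta^P$ should read $\delta^{\calD}$. There is also a degenerate case where both endpoints of an edge have zero mass, so $\delta$ is undefined. Such $\bx$ have probability zero under $\calD$, so they do not affect the probability.
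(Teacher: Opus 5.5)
Your proposal is correct and follows essentially the same route as the paper. The paper's proof likewise shows that $\calD^\sigma$ is $\eps$-far from monotone, applies the averaged form \Cref{eq:MIT-tech-review} of \Cref{lemma:CCRSW} to $\calD^\sigma$, and then uses \Cref{eq:banana} together with the substitution $\bx\mapsto\bx^\sigma$. You additionally spell out the bijection argument for the farness transfer (which the paper asserts without proof), the uniformity of the hidden factors in $\sigma$, and the $\delta^P$ typo in \Cref{eq:banana}.
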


\Cref{cor:CCRSW} suggests a natural test to detect distributions that are far from unate and forms the basis of~\Cref{alg:subcube-unate}: iterating over each coordinate $i\in [n]$, we reject if there exist two pairs $(\bw, \bx)$ and $(\bw', \bx')$ such that 
\begin{equation} \label{eq:subcube-reject-condition}
    (\delta^\calD_{\bx,i})_{+}^2 \ge 2^{-\bw}
    \quad\text{and}\quad
    (-\delta^\calD_{\bx',i})_{+}^2 \ge 2^{-\bw'}\,.
\end{equation}
Intuitively, the former condition rules out the possibility that $\calD$ is monotone in direction $i$, and the latter rules out that $\calD$ is anti-monotone in direction $i$. 
Note that if $\calD$ is unate, then both conditions above can never hold simultaneously. 

\begin{algorithm}[t]

	\
	
	\textbf{Input:}~Subcube query access to $\calD$, distance parameter $\eps \in (0,1)$\\[0.25em]
	
	\textbf{Output:}~``Accept'' or ``reject'' 
	
	\ 
	
	$\SubcubeUnate(\calD, \eps)$:\\[0.5em]
	\begin{enumerate}
		  \item Set $k := \wt{\Theta}\pbra{\frac{\sqrt{n}}{\varepsilon^2}}$ and repeat Step~2 $\Theta(1)$-times: 
            \item For $i \in [n]$: 
            \begin{enumerate}
                \item For $\ell \in [k]$:
                \begin{enumerate}
                    \item Sample $\bx\sim\calD$ and consider the restriction $\brho \in \{\pm1,\ast\}^n$ given by $\brho_j = \bx_j$ \newline if $j\neq i$ and $\brho_{i} = \ast$. 
                    \item Draw $\bw\in[\Theta(\log(n/\eps))]$ where $\Pr[\bw = t] \propto 2^{-t}$. 
                    \item Query the conditional distribution $\calD\mid_{\brho}$ independently
                    \[
                        T(\bw) := \Theta(1)\cdot2^{\bw}\cdot \ln(12nk/\delta_{\mathrm{cmp}})
                    \]
                    times for $\delta_{\mathrm{cmp}}:=n^{-3}$.
                    Let $N_+^{(\ell)}$ and $N_-^{(\ell)}$ be the numbers of returns with
                    coordinate $i$ equal to $+1$ and $-1$, respectively
                    ($\bN_+^{(\ell)}+\bN_-^{(\ell)}=T(\bw)$). \newline Define the empirical edge bias
                    \[
                    \btau_\ell \;\coloneqq\; \frac{\bN_-^{(\ell)}-\bN_+^{(\ell)}}{\bN_-^{(\ell)}+ \bN_+^{(\ell)}}\,,
                    \qquad
                    \text{and set }\ c_{\bw}\coloneqq \frac{2^{-\bw/2}}{4}\,.
                    \]
                    Mark the trial $\ell$ as a \emph{positive witness} if $\btau_\ell\ge c_{\bw}$,
                    and as a \emph{negative witness} \newline if $\btau_\ell < -c_{\bw}$.
                \end{enumerate}
                \item If there exist $\ell_1, \ell_2 \in [k]$ such that $\ell_1$ is a positive witness and $\ell_2$ is a negative \newline witness, then halt and output ``reject.'' 
            \end{enumerate}
            \item If the algorithm has not rejected, output ``accept.''
	\end{enumerate}
	
	\caption{
            The $\SubcubeUnate$ algorithm. 
        }
	\label{alg:subcube-unate}	
\end{algorithm}

We now turn to the proof of~\Cref{thm:our-subcube-UB}:

\begin{proof}[Proof of~\Cref{thm:our-subcube-UB}]
    For a fixed triple $(i,\bx,\bw)$, the expected number of conditional samples is
    \[
        \sum_{t=1}^{\Theta(\log(n/\varepsilon))}\Pr[\bw=t]\cdot T(t)
        =\sum_t \Theta(2^{-t})\cdot \Theta(2^{t}\log(nk))
        =\widetilde O(1).
    \]
    We perform $\Theta(n\cdot k)=\widetilde \Theta(n^{3/2}/\varepsilon^2)$ triples,
    so the total query complexity is $\widetilde O(n^{3/2}/\varepsilon^2)$. 
    
    By Hoeffding’s inequality,
    \[
        \Pr\!\left[\big|\btau_\ell-\delta^{\calD}_{\bx,i}\big|
        \ \ge\ \frac{1}{8}\,2^{-\bw/2}\right]
        \ \le\ 2\exp\!\Big(-\tfrac12\cdot (\frac{1}{8}\,2^{-\bw/2})^2\,T(\bw)\Big)
        \ \le\ \frac{\delta_{\mathrm{cmp}}}{6nk},
    \]
    for a sufficiently large absolute constant in the definition of $T(\bw)$.
    Taking a union bound over all $nk$ trials, with probability at least
    $1-\delta_{\mathrm{cmp}}/6$ we have, simultaneously for all trials,
    \begin{equation} \label{eq:good-event}
        \btau_\ell \in \Big[\delta^{\calD}_{\bx,i}-\frac{1}{8}\,2^{-\bw/2},\;
        \delta^{\calD}_{\bx,i}+\frac{1}{8}\,2^{-\bw/2}\Big].
    \end{equation}
    For the remainder of the argument, we condition on the event that \Cref{eq:good-event} holds for all trials. 

    \paragraph{Completeness.} Conditioned on the above event, it is readily seen that  \SubcubeUnate{} has one-sided error. 
    In particular, suppose $\calD$ is a unate distribution. 
    It follows that there exists $\sigma\in\bn$ such that $\calD^\sigma$ is monotone. 
    In particular, $\sigma_i \cdot \delta^{\calD}_{x,i} \leq 0$ for all $x\in\bn$ and $i\in[n]$. 
    Fix a coordinate $i \in [n]$ and suppose $\sigma_i = +1$. 
    We have $\delta^{\calD}_{\bx,i}\le 0$ for every $\bx$, and conditioned on~\Cref{eq:good-event},
    \[
        \btau_\ell \le\ \tfrac18\,2^{-\bw/2}\ <\ c_{\bw}= \frac{2^{-\bw/2}}{4}\,,
    \]
    so this trial cannot create a positive witness for coordinate $i$. 
    An identical argument gives that if $\sigma_i = -1$, then the trial cannot create a negative witness for coordinate $i$. 
    It follows that, conditioned on~\Cref{eq:good-event}, the algorithm will output ``accept'' with probability $1$. 

    \paragraph{Soundness.} We now turn to soundness of $\SubcubeUnate(\calD, \eps)$.
    Suppose $\calD$ is $\eps$-far from every unate distribution, we will show that $\SubcubeUnate(\calD,\eps)$ will output ``reject'' with probability at least $9/10$. 
    For $\bw$ distributed as in Step~2(a).ii of~\Cref{alg:subcube-unate} and $i\in[n]$, let 
    \[
        q^{+}_i \coloneqq \Prx_{\bw,\bx \sim \calD}\sbra{(\delta^{\calD}_{\bx, i})_+^2 \ge 2^{-\bw}} 
        \qquad\text{and}\qquad 
        q^{-}_i \coloneqq \Prx_{\bw,\bx \sim \calD}\sbra{(-\delta^{\calD}_{\bx, i})_+^2 \ge 2^{-\bw}} 
    \]
    be the probabilities that, over the randomness in $\bw$ and $\bx$, the pair $(\bw, \bx)$ witnesses that the $i^\text{th}$ coordinate cannot be monotone and anti-monotone respectively. 
    \Cref{cor:CCRSW} directly implies that 
    \begin{equation} \label{eq:corCCRSW}
        m:=\frac{1}{n} \sum_{i=1}^n m_i 
        \geq \widetilde{\Omega}\!\left(\frac{\varepsilon^{2}}{n}\right)\,,
        \qquad\text{where}~m_i \coloneqq \min \left\{ q_i^{+}, q_i^{-} \right\}\,.
    \end{equation}
    Let $r_i$ be the probability that $\SubcubeUnate(\calD,\eps)$ outputs ``reject'' in Step~2(b) on coordinate $i$. 
    Thanks to independence, we have 
    \[
        \Prx\sbra{\SubcubeUnate(\calD,\eps)~\text{outputs ``accept''}} = \prod_{i=1}^n (1-r_i) \leq \exp\pbra{-\sumi r_i}\,,
    \]
    and so the remainder of the argument will establish that $\sumi r_i = \Omega(1).$

    First, note that 
    \begin{equation} \label{eq:quarters}
        r_i 
        \geq \pbra{1-(1-q_i^+)^{k/2}}\pbra{1-(1-q_i^-)^{k/2}}
        \geq \pbra{1-(1-m_i)^{k/2}}^2\,,
    \end{equation}
    where the first inequality holds because the probability of returning ``reject'' after $k$ iterations of Steps~2(a).i--iii is at least the probability of detecting a positive witness among the first $k/2$ iterations and a negative witness among the last $k/2$ iterations. 
    The following lemma will be useful: 

    \begin{lemma}\label{inflection}
        For $K\geq1$, the function $f(x) = (1-(1-x)^K)^2$ is convex on the interval $(0, \ln(2)/K)$. 
    \end{lemma}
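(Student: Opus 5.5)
Let $g(x) := 1-(1-x)^K$, so that $f = g^2$. I will show $f''(x) \geq 0$ on $(0,\ln(2)/K)$ by differentiating twice. We have $g'(x) = K(1-x)^{K-1}$ and $g''(x) = -K(K-1)(1-x)^{K-2}$. Since $f'' = 2\pbra{(g')^2 + g\,g''}$, convexity is equivalent to
\[
    K^2(1-x)^{2K-2} \;\geq\; K(K-1)(1-x)^{K-2}\pbra{1-(1-x)^K}.
\]
For $x\in(0,1)$ we may divide by $K(1-x)^{K-2}>0$ and set $u := (1-x)^K \in (0,1)$. Using $(1-x)^{K} = u$, the left side becomes $K u (1-x)^{0}\cdot$ (after dividing, $K(1-x)^K = Ku$), so the condition reduces to
\[
    K u \;\geq\; (K-1)(1-u), \qquad\text{i.e.}\qquad u \;\geq\; \frac{K-1}{2K-1}.
\]
The case $K=1$ is trivial: $g''\equiv 0$, so $f''=2(g')^2\ge 0$.

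It therefore suffices to show that $(1-x)^K \geq \frac{K-1}{2K-1}$ whenever $x < \ln(2)/K$. Note that $\frac{K-1}{2K-1} < \frac12$ for every $K\geq 1$, so it is enough to prove $(1-x)^K \geq \frac12$ on this interval. Here a small obstacle appears. The bound $1-x\le e^{-x}$ points the wrong way, giving only $(1-x)^K \leq e^{-Kx}$. Instead I will use $\ln(1-x) \geq -x - x^2$, valid for $x\le 1/2$, which gives
\[
    (1-x)^K \;\geq\; e^{-Kx-Kx^2} \;>\; \tfrac12\, e^{-\ln^2 2/K}.
\]
This bound alone is not quite $\tfrac12$. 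The fix is to use the slack between $\tfrac12$ and the true threshold $\frac{K-1}{2K-1} = \tfrac12 - \frac{1}{2(2K-1)}$.

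Using $e^{-t} \geq 1-t$, we get $\tfrac12 e^{-\ln^2 2/K} \geq \tfrac12 - \frac{\ln^2 2}{2K}$. Since $\ln^2 2 \approx 0.48$, we have $\frac{\ln^2 2}{2K} < \frac{1}{2(2K-1)}$ for every $K\geq1$. This yields $(1-x)^K > \frac{K-1}{2K-1}$, hence $f''>0$ on the whole interval, and $f$ is convex there.

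The one point to check is the domain: the inequality $\ln(1-x)\ge -x-x^2$ requires $x\le 1/2$, and indeed $\ln(2)/K \leq \ln 2 < 1$. For $K=1$ the interval is $(0,\ln 2)$, which extends past $1/2$, but that case was already handled directly. For $K\ge 2$ we have $\ln(2)/K\le 0.35 < 1/2$, so the inequality applies.
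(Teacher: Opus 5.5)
Your reduction matches the paper's: both compute $f''$ and reduce convexity to $(1-x)^K\ge\frac{K-1}{2K-1}$. The routes diverge in how this inequality is verified.

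The paper checks it only at the right endpoint $x=\ln 2/K$. Taking logarithms turns it into $F(K)\ge 0$ for $F(x)=x\ln(1-\frac{\ln 2}{x})-\ln\frac{x-1}{2x-1}$. The paper proves this by showing $F\to 0$ at infinity and $F'<0$ on $(1,\infty)$, which needs an auxiliary decreasing function $h$ and the inequality $2a\le(2+a)\ln(1+a)$.

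You instead combine $\ln(1-x)\ge -x-x^2$ with $e^{-t}\ge 1-t$ to get $(1-x)^K>\frac12-\frac{\ln^2 2}{2K}$. You then close using the explicit slack $\frac{1}{2(2K-1)}$ between $\frac12$ and the threshold. Everything hinges on $2\ln^2 2<1$, which you should justify rigorously (e.g., from $\ln 2<0.7$) rather than via $\ln^2 2\approx 0.48$.

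Your argument is shorter and avoids the limit and monotonicity analysis of $F$. The paper's argument is uniform over all real $K>1$ and needs no restriction of the form $x\le\frac12$. Also, the intermediate target $(1-x)^K\ge\frac12$ you first announce is false near the right endpoint, since $(1-x)^K<e^{-Kx}$. It is cleaner to aim directly at $\frac{K-1}{2K-1}$.

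One caveat: your case split into $K=1$ and $K\ge 2$ tacitly treats $K$ as an integer. That is all the application needs, since $K=k/2$ counts iterations, and your chain in fact works verbatim for every real $K\ge 2\ln 2$. However, the lemma is stated for $K\ge 1$, and the paper's proof covers real $K$.

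For real $K\in(1,2\ln 2)$, the interval $(0,\ln 2/K)$ extends past $\frac12$. There $\ln(1-x)\ge -x-x^2$ is no longer available; it actually fails near $x=\ln 2$. This is easy to patch:
\begin{itemize}
\item The map $K\mapsto(1-\frac{\ln 2}{K})^K$ is increasing, because the derivative of its logarithm is $t-\ln(1+t)>0$ with $t=\frac{\ln 2}{K-\ln 2}$. So on this range $(1-x)^K\ge 1-\ln 2>0.3$.
\item Meanwhile $\frac{K-1}{2K-1}<\frac{2\ln 2-1}{4\ln 2-1}<0.22$.
\end{itemize}
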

    \begin{proof}
    We explicitly compute the first and second derivatives of the function $f$. 
    Note that 
    \[
        f'(x) = 2K\pbra{1-(1-x)^K}(1-x)^{K-1}\,.
    \]
    Taking another derivative gives 
    \begin{align*}
         f''(x) 
         &= 2K^2(1-x)^{2K-2}-2K(K-1)(1-(1-x)^K)(1-x)^{K-2} \\
        &= 2K(1-x)^{K-2}((2K-1)(1-x)^K-(K-1))\,.
    \end{align*}
    So it suffices to show
    \[
        (2K-1)\Bigl(1-\tfrac{\ln 2}{K}\Bigr)^K \geq K-1\,, 
    \]
    or equivalently $F(K)\geq 0$ where $F(x):=x\ln\Bigl(1-\tfrac{\ln 2}{x}\Bigr)-\ln\!\Bigl(\tfrac{x-1}{2x-1}\Bigr)$. Since
    \[
    \lim_{x\to+\infty}F(x)=-\ln 2-\ln \tfrac12=0,
    \]
    showing that $F'(x)<0$ for all $x\in(1, +\infty)$ will complete the proof. 

    Note that 
    \[
        F'(x) = \ln\Bigl(1-\frac{1}{x/\ln2}\Bigr)+\frac{1}{x/\ln2-1}-\frac{1}{(x-1)(2x-1)}.
    \]
    Furthermore, the function $h(y) := \ln\Bigl(1-\frac{1}{y}\Bigr)+\frac{1}{y-1}$ has derivative $h'(y)=-\frac{1}{y(y-1)^2}$, and is thus strictly decreasing on $(1, +\infty)$. If we could show that
    \begin{equation} \label{eq:hooray-daeho}
        h(x) = \ln\Bigl(1-\frac{1}{x}\Bigr)+\frac{1}{x-1}\leq\frac{1}{(x-1)(2x-1)}
    \end{equation}
    holds for all $x \in (1, +\infty)$,
    we would have the desired inequality
    \[
        F'(x)
    =   h(x / \ln 2) - \frac{1}{(x-1)(2x-1)}
    <   h(x) - \frac{1}{(x-1)(2x-1)}
    \le 0.
    \]
    \Cref{eq:hooray-daeho} follows immediately from the well-known/readily verified inequality $2a \leq (2+a)\cdot\ln(1+a)$ for $a\geq0$ by taking $a=(x-1)^{-1}$. 
    \end{proof}
    
    Returning to the soundness of \SubcubeUnate, note that if there exists a coordinate $j$ with $m_j \geq \tfrac{2\ln2}{k}$, then thanks to \Cref{eq:quarters} we have 
    \begin{align*}
       r_j \geq (1-(1-m_j)^{k/2})^2 
        \geq \left(1-\left(1-\frac{\ln2}{k/2}\right)^{k/2}\right)^2
        \geq (1-e^{-\ln2})^2 = \frac{1}{4}.
    \end{align*} 
    Consequently, the amplification in Step~1 ensures that one of the $\Theta(1)$ independent runs will output ``reject'' with probability at least $9/10$. 

    On the other hand, suppose $m_i< \tfrac{2\ln2}{k}$ for all $i \in [n]$. 
    Thanks to \Cref{inflection} and Jensen's inequality, we then have 
    \[
        \sumi r_i 
        \geq \sumi \pbra{1 - (1-m_i)^{k/2}}^2 
        \geq n\pbra{1 - (1-m)^{k/2}}^2\,.    
    \]
    It follows that $\sumi r_i \geq 99/100$ from~\Cref{eq:corCCRSW} and our choice of $k$ by an easy application of Bernoulli's inequality $(1+u)^r \geq 1+ru$ for $r,u \geq 0$. 
\end{proof}

%% file: sections/subcube-lower-bound.tex
\subsection{Lower Bound}
We prove the following lower bound for testing unateness of distributions in the subcube model, which applies to all testers that only make subcube queries on either the entire hypercube (to get an unconditional sample) or a small subcube.

\subcubeLB*

Recall that our tester for \Cref{thm:our-subcube-UB}, like the monotonicity tester of~\cite{CCRSW25}, only makes subcube queries on edges (i.e., one-dimensional subcubes) in addition to drawing unconditional samples. In other words, our tester satisfies the precondition of \Cref{thm:subcube-lower-bound} with $d = 1$. The theorem then implies that all such testers must make at least $\widetilde\Omega(n^{2/3})$ queries in total.

We prove this lower bound by reducing the unateness testing of a Boolean function $f: \bits^n \to \zo$ to that of a distribution over $\bits^n$.\footnote{In this section, we assume that the co-domain of the Boolean function is $\zo$ (instead of $\bits$), which is more convenient when dealing with probability mass functions.} For a non-zero Boolean function $f$, let $\calD_f$ denote the uniform distribution over $f^{-1}(1) \coloneqq \{x \in \bits^n: f(x) = 1\}$. \Cref{thm:subcube-lower-bound} is an immediate consequence of the following observations and a lower bound of Chen, Waingarten, and Xie~\cite{CWX17stoc} for testing unateness of Boolean functions:
\begin{itemize}
    \item (\Cref{lemma:unate-function-to-dist}) If function $f: \bits^n \to \zo$ is unate, distribution $\calD_f$ is also unate. If $f$ is $\eps$-far from unate, $\calD_f$ is $\Omega(\eps)$-far from unate.
    \item (\Cref{lemma:function-to-dist-reduction}) Given membership query access to $f: \bits^n \to \zo$, we can sample from $\calD_f$ and its restrictions to small subcubes. Assuming that $f$ is $\eps$-far from the zero function, $O(1/\eps)$ queries suffice for drawing an unconditional sample from $\calD_f$. Moreover, $2^d$ queries suffice for sampling from $\calD_f$ restricted to a $d$-dimensional subcube.
    \item By~\cite[Theorem 2]{CWX17stoc}, testing the unateness of an $n$-variable Boolean function requires $\widetilde\Omega(n^{2/3})$ queries.
\end{itemize}

We state and prove \Cref{lemma:unate-function-to-dist,lemma:function-to-dist-reduction} in the rest of this section.

\begin{lemma}
\label{lemma:unate-function-to-dist}
    The following two claims hold for every $f: \bits^n \to \zo$ and $\eps \ge 0$:
    \begin{itemize}
        \item If $f$ is unate, $\calD_f$ is unate.
        \item If $f$ is $\eps$-far from unate, $\calD_f$ is $(\eps/2)$-far from unate.
    \end{itemize}
\end{lemma}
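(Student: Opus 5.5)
First claim: if $f$ is unate, there is $\sigma$ with $f(x^\sigma)$ monotone. Then $\calD_f^\sigma(x)=\calD_f(x^\sigma)=f(x^\sigma)/|f^{-1}(1)|$ is a nonnegative multiple of a monotone function, hence a monotone mass function. So $\calD_f$ is unate. (If $f\equiv 0$, $\calD_f$ is undefined; we assume $f$ is nonzero throughout.)

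Second claim: I would prove the contrapositive. Suppose $\calD_f$ is $\delta$-close to a unate distribution $\calP$, and set $N\coloneqq|f^{-1}(1)|$. Composing with $\sigma$ preserves both TV and Hamming distance, so we may assume $\calP$ is monotone. The goal is a unate (indeed monotone) Boolean $g$ with $\Pr_{\bx}[f(\bx)\ne g(\bx)]\le 2\delta$. I would obtain $g$ by thresholding: $g(x)\coloneqq\1{\calP(x)\ge \tfrac{1}{2N}}$. This $g$ is monotone because $\calP$ is, and any threshold of a monotone function is monotone.

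To bound the disagreements, note that each point with $f(x)\ne g(x)$ contributes at least $\tfrac{1}{2N}$ to $\sum_x|\calD_f(x)-\calP(x)|$. If $f(x)=1$ and $g(x)=0$, then $\calD_f(x)=1/N$ while $\calP(x)<\tfrac1{2N}$. If $f(x)=0$ and $g(x)=1$, then $\calD_f(x)=0$ while $\calP(x)\ge\tfrac1{2N}$. Hence
\[
\#\{x: f(x)\ne g(x)\}\cdot \frac{1}{2N}\le 2\delta,
\]
so the number of disagreements is at most $4\delta N$, i.e. $\Pr_{\bx}[f\ne g]\le 4\delta N/2^n$.

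The main obstacle is this normalization step: it gives a factor of $4\delta\cdot N/2^n$, whereas we need $2\delta$, i.e. $\delta\ge\eps/2$. Since $N/2^n\le 1$, we get $\Pr[f\ne g]\le 4\delta$, which only yields $\eps/4$-farness. To sharpen this, I would use the slack in $N/2^n$ and the fact that $g\equiv 0$ is itself unate. If $N/2^n\le \tfrac12$, the bound is $\le 2\delta$ and we are done. If $N/2^n>\tfrac12$, I would compare $f$ with $g\equiv 1$ (also unate) or otherwise refine the threshold argument, for instance by choosing the threshold $\tfrac{c}{N}$ optimally or using $\calP\ge 0$ more carefully to recover the constant. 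Failing that, the fallback is to settle for $\Omega(\eps)$, which is all that \Cref{thm:subcube-lower-bound} requires.
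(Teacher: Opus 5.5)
Your first claim and the reduction to a monotone $\calP$ are fine, and your fixed-threshold count is correct. But the count only proves that $\dtv(\calD_f,\calP)\le\delta$ makes $f$ $(4\delta N/2^n)$-close to unate. That is $(\eps/4)$-farness in general, and $(\eps/2)$-farness only when $N\le 2^{n-1}$. Neither proposed repair closes the case $N>2^{n-1}$:
\begin{itemize}
\item The constant function $1$ is at distance $1-N/2^n$ from $f$. This can be a constant such as $0.4$ while $2\delta$ is arbitrarily small.
\item No threshold $c/N$ fixed in advance helps. A point with $f=1$ is misclassified at $L_1$ cost just over $(1-c)/N$, and a point with $f=0$ at cost $c/N$. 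So per-point charging yields at best $2\delta N/\min\{c,1-c\}\ge 4\delta N$ disagreements.
\end{itemize}
As written, the lemma with the constant $\eps/2$ is not established. Falling back to $\Omega(\eps)$ proves a weaker statement than the one stated, even though it would suffice for \Cref{thm:subcube-lower-bound}.

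The missing step is to randomize the threshold instead of fixing it. Draw $t$ uniformly from $(0,1/N]$ and set $g_t(x)\coloneqq\1{\calP(x)\ge t}$; this is monotone for every $t$. Because $\calD_f(x)\in\{0,1/N\}$, you have $f(x)=\1{\calD_f(x)\ge t}$ for every such $t$. The two indicators differ exactly when $t$ lies between $\calD_f(x)$ and $\calP(x)$, hence
\[
\Pr_t\bigl[f(x)\ne g_t(x)\bigr] = N\cdot\bigl|\min\{\calD_f(x),1/N\}-\min\{\calP(x),1/N\}\bigr| \le N\,\bigl|\calD_f(x)-\calP(x)\bigr|.
\]
Summing over $x$ gives $\E_t\bigl[\#\{x: f(x)\ne g_t(x)\}\bigr]\le N\,\|\calD_f-\calP\|_1\le 2\delta N\le 2\delta\cdot 2^n$. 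So some $t$ yields a monotone $g_t$ at Hamming distance at most $2\delta$ from $f$, which is exactly the required constant.

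This layer-cake argument is a lighter alternative to the paper's proof. The paper instead takes a basic optimal solution of the $L_1$ isotonic-regression LP (with the orientation of the nearby unate distribution). It argues that each connected component of tight equalities contains a point where $h^*=g$, concludes that $h^*$ is $\{0,1/N\}$-valued, and reads off the same count $2\delta N$.
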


\begin{proof}
    Let $g: \bits^n \to [0, 1]$ be the probability mass function (PMF) of $\calD_f$. The first claim easily follows from the observation that $g$ is exactly $\alpha \coloneqq 1 / |f^{-1}(1)|$ times $f$, and is thus unate for every unate function $f$.
    
    For the second claim, we prove its contrapositive: if $\calD_f$ is $\eps$-close to a unate distribution, $f$ is $(2\eps)$-close to a unate Boolean function. Assuming that $\calD_f$ is $\eps$-close (in total variation distance) to a unate distribution  $\calD'$, there exists a unate function $g': \bits^n \to [0, 1]$ (namely, the PMF of $\calD'$) such that
    \[
        \|g - g'\|_1 \coloneqq \sum_{x \in \bits^n}|g(x) - g'(x)| \le 2\eps.
    \]
    The factor of $2$ appears since the TV distance is half the $L_1$ distance between the PMFs.

    In the rest of the proof, we will show that there exists $g'': \bits^n \to \{0, \alpha\}$ such that $\|g - g''\|_1 \le 2\eps$. Note that the lemma would directly follow from this claim: Since both $g$ and $g''$ are $\{0, \alpha\}$-valued, we have
    \[
        2\eps
    \ge \|g - g''\|_1
    =   \sum_{x \in \bits^n}|g(x) - g''(x)|
    =   \alpha \cdot \sum_{x \in \bits^n}\1{g(x) \ne g''(x)},
    \]
    which implies that $g$ and $g''$ differ on at most $2\eps / \alpha = 2\eps|f^{-1}(1)| \le 2\eps \cdot 2^n$ inputs. Therefore, $g'' / \alpha$ is a unate Boolean function that disagrees with $f = g / \alpha$ on at most a $(2\eps)$-fraction of inputs. In other words, $f$ is $(2\eps)$-close to unate.

    \paragraph{Existence of $g''$.} To show the existence of $g'': \bits^n \to \{0, \alpha\}$ such that $\|g - g''\|_1 \le 2\eps$, we consider a standard linear program for finding the unate function that is the closest to $g$ in $L_1$ distance.
    
    Recall that $g$ is $(2\eps)$-close to a unate function $g'$. Let $E \subseteq \bits^n \times \bits^n$ be the set of all ordered pairs $(x, x')$ such that the constraint $g'(x) \le g'(x')$ is enforced by the unateness of $g'$. Formally, for each edge $\{x^{i \to -1}, x^{i \to +1}\}$ of the hypercube (where $x \in \bits^n$ and $i \in [n]$), we have $(x^{i \to -1}, x^{i \to +1}) \in E$ if $g'$ is monotone in the $i$-th coordinate and $(x^{i \to +1}, x^{i \to -1}) \in E$ if $g'$ is anti-monotone in the $i$-th coordinate. Then, we consider the following linear program (LP) on $2^{n+1}$ real-valued variables $\{h(x): x \in \bits^n\}$ and $\{\delta(x): x \in \bits^n\}$:
    \begin{align*}
        \text{Minimize}&\quad \sum_{x \in \bits^n}\delta(x) &\\
        \text{Subject to}&\quad h(x) \le h(x') &\quad \forall (x, x') \in E,\\
        &\quad -\delta(x) \le g(x) - h(x) \le \delta(x) &\quad \forall x \in \bits^n.
    \end{align*}
    This LP finds the unate function $h$ (with the same monotonicity as $g'$ on all the $n$ coordinates) that is the closest to $g$ in $L_1$ distance: the first set of constraints enforces the unateness of $h$, while the second ensures that the optimal choice of each $\delta(x)$ is $|g(x) - h(x)|$.

    Since $g'$ is a feasible solution to the LP above with an objective value of $\le 2\eps$, by the fundamental theorem of linear programming (e.g., \cite[Theorem 29.13]{CLRS}), there exists a basic feasible solution $(h^*, \delta^*)$ that attains an objective value of $\le 2\eps$. In other words, $h^*$ is a unate function that is $(2\eps)$-close to $g$, and $(h^*, \delta^*)$ is uniquely determined by the linear system obtained from replacing a subset of the constraints with equations.
    
    Then, equations of form ``$h^*(x) = h^*(x')$'' (where $(x, x') \in E$) partition $\bits^n$ into several connected components: two different points $x, x' \in \bits^n$ are in the same component if there exist a path $x = x_1, x_2, \ldots, x_l = x'$ such that the linear system contains the equation $h^*(x_i) = h^*(x_{i+1})$ for every pair of adjacent points $x_i$ and $x_{i+1}$. For each connected component $C \subseteq \bits^n$, there must exist some $x_0 \in C$ such that both equations ``$-\delta^*(x_0) = g(x_0) - h^*(x_0)$'' and ``$g(x_0) - h^*(x_0) = \delta^*(x_0)$'' are in the linear system; otherwise, we may freely set the values $\{h^*(x): x \in C\}$ to the same, arbitrary value, and then set each $\delta^*(x)$ according to ``$-\delta^*(x) = g(x) - h^*(x)$'' or ``$g(x) - h^*(x) = \delta^*(x)$'', depending on which of the two is in the linear system.

    Therefore, for each connected component $C \subseteq \bits^n$, we can find $x_0 \in C$ such that the linear system enforces $-\delta^*(x_0) = g(x_0) - h^*(x_0) = \delta^*(x_0)$. This implies that all variables $\{h^*(x): x \in C\}$ are equal to $g(x_0) \in \{0, \alpha\}$. It then follows that $h^*$ is $\{0, \alpha\}$-valued. Since $h^*$ is unate and $(2\eps)$-close to $g$, it gives the desired function $g''$.
\end{proof}

\begin{lemma}
\label{lemma:function-to-dist-reduction}
    Suppose that algorithm $\calA$ tests whether an unknown distribution $\calD$ over $\bits^n$ is unate or $\eps$-far from unate by drawing $Q_1$ samples from $\calD$ and making $Q_2$ (possibly adaptive) subcube queries on subcubes of dimensions $\le d$. Then, there is an algorithm that tests whether an unknown function $f: \bits^n \to \zo$ is unate or $(2\eps)$-far from unate with an expected query complexity of
    \[
        O\left(\frac{Q_1 + 1}{\eps} + Q_2 \cdot 2^d\right).
    \]
\end{lemma}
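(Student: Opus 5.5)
We build a function tester $\calB$ that runs $\calA$ and answers each of its requests by simulating them with membership queries to $f$. Two preliminary cases need care. If $f$ is the all-zero function, then $f$ is unate, but $\calD_f$ is undefined. If $f$ has very few ones, it is $2\eps$-close to the zero function, which is unate, so accepting is always safe. Accordingly, $\calB$ first draws $O(1/\eps)$ uniform points $\bx\sim\bn$ and queries $f$ on them. If none satisfies $f(\bx)=1$, it outputs ``accept''. By a standard argument, if $|f^{-1}(1)|\ge 2\eps\cdot 2^n$, this wrongly accepts with probability at most a small constant. In every other case accepting is correct: either $f$ has fewer than $2\eps\cdot 2^n$ ones and hence is $2\eps$-close to unate, or $f$ really is unate.

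Otherwise $\calB$ simulates $\calA$ on $\calD_f$ as follows.

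\textbf{Unconditional samples.} For each of $\calA$'s unconditional samples, $\calB$ uses rejection sampling. It repeatedly draws a uniform $\bx$ and queries $f(\bx)$ until $f(\bx)=1$, then returns $\bx$. The output is exactly uniform on $f^{-1}(1)$, i.e., distributed as $\calD_f$.

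\textbf{Expected cost of sampling.} To keep the expected cost at $O(1/\eps)$ per sample, we use a case split.
\begin{itemize}
\item If $f$ has density at least $\eps$ (more precisely, at least $2\eps$), each sample costs $O(1/\eps)$ queries in expectation.
\item If the density is below $2\eps$, $f$ is $2\eps$-close to the zero function, so either answer is acceptable. To protect the expected query count in this case, $\calB$ caps the rejection loop at a budget of $O((Q_1+1)/\eps)$ total queries. If the budget is exhausted, it aborts and accepts.
\end{itemize}
When $f$ is $2\eps$-far from unate, its density is at least $2\eps$. By Markov's inequality, the cap is then hit only with small constant probability, provided the constant in the budget is large enough. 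Replacing ``expected'' by a worst-case bound with a constant loss in success probability is fine, and it also yields the stated expected bound.

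\textbf{Subcube queries.} For a query on a subcube $\rho$ of dimension $d'\le d$, $\calB$ queries $f$ on all $2^{d'}\le 2^d$ points of the subcube.
\begin{itemize}
\item If some point has $f=1$, $\calB$ returns a uniformly random point among those with $f=1$. This is exactly $\calD_f$ conditioned on $\rho$.
\item If no point has $f=1$, the conditional distribution is undefined. $\calB$ returns whatever convention the subcube model uses for zero-mass subcubes, which is consistent because $\calB$ knows the mass is zero.
\end{itemize}
Caching query answers also lets repeated queries to the same subcube cost nothing beyond the first $2^d$.

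\textbf{Correctness.} By \Cref{lemma:unate-function-to-dist}, if $f$ is unate then $\calD_f$ is unate, and if $f$ is $2\eps$-far from unate then $\calD_f$ is $\eps$-far from unate. Since the simulation reproduces $\calA$'s view on $\calD_f$ exactly, $\calB$ inherits $\calA$'s guarantee, up to the constant failure probabilities introduced above. Those losses are absorbed by standard amplification, i.e., running $\calA$ a constant number of times and taking the majority answer.

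\textbf{Main obstacle.} The delicate point is handling low-density $f$ so that the query bound holds without hurting correctness. It is resolved by observing that every low-density $f$ is close to the zero function, which is unate, so accepting it is always a correct answer.
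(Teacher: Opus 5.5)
Your proposal is correct and follows essentially the same route as the paper. Both arguments use a cheap density pre-check, simulate $\calA$ on $\calD_f$ by rejection sampling and brute-force subcube queries, truncate via Markov, and amplify by a constant factor together with \Cref{lemma:unate-function-to-dist}. The differences are that you use an ``any $1$ seen'' pre-check and an accept-on-abort cap on the rejection loop, where the paper uses the thresholded estimate $\hat\mu$ and the two-sided event $\calE$.

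One small correction: for density below $2\eps$ it is not true that ``either answer is acceptable,'' since a sparse unate $f$ must still be accepted. Your argument survives anyway, because every fallback in your algorithm outputs ``accept.''
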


\begin{proof}
    We construct a unateness tester $\calA'$ for an unknown function $f: \bits^n \to \zo$ by simulating the distribution tester $\calA$ on distribution $\calD_f$:
    \begin{itemize}
        \item Sample $m = \Theta(1/\eps)$ inputs $\bx^{(1)}, \bx^{(2)}, \ldots, \bx^{(m)} \in \bits^n$ independently and uniformly at random. Query $f$ on these $m$ inputs and compute the empirical average $\hat\mu = \frac{1}{m}\sum_{i=1}^{m}f(\bx^{(i)})$. If $\hat\mu \le \eps$, output ``Accept'' and terminate.
        \item Simulate tester $\calA$ on distribution $\calD_f$. Whenever $\calA$ requests an unconditional sample from $\calD_f$, keep sampling uniformly random inputs until an $x \in \bits^n$ with $f(x) = 1$ is encountered. Send $x$ to $\calA$.
        \item Whenever $\calA$ makes a subcube query on $C \subseteq \bits^n$, query $f$ at every $x \in C$ and return an $x$ sampled from $\{x \in C: f(x) = 1\}$ uniformly at random.
        \item When $\calA$ terminates, make the same decision as $\calA$.
    \end{itemize}

    Let $\calE$ be the event that the following two conditions hold:
    \begin{itemize}
        \item If $f$ has a mean of $\ex{\bx \sim \bits^n}{f(\bx)} \le \eps / 2$, $\hat\mu \le \eps$ holds in the first step of $\calA'$.
        \item If $\ex{\bx \sim \bits^n}{f(\bx)} \ge 2\eps$, $\hat\mu > \eps$ holds in the first step of $\calA'$.
    \end{itemize}
    By a Chernoff bound, we have $\pr{}{\calE} \ge 1 - \delta$ if we set $m = \Theta(\log(1/\delta)/\eps)$.
    
    In the following, we analyze the query complexity, completeness, and soundness of $\calA'$ conditioning on event $\calE$. We will remove the conditioning at the end of the proof.

    \paragraph{Query complexity.} Tester $\calA'$ makes $O(1/\eps)$ queries in the first step. Conditioning on event $\calE$, if $\calA'$ does not terminate in this step, it holds that $\ex{\bx \sim \bits^n}{f(\bx)} > \eps/2$. Then, whenever $\calA$ requests an unconditional sample from $\calD$, $\calA'$ makes $1 / \ex{\bx \sim \bits^n}{f(\bx)} = O(1/\eps)$ queries to $f$ in expectation. Whenever $\calA$ requests a conditional sample from subcube $C \subseteq \bits^n$, $\calA'$ makes $|C| \le 2^d$ queries. Therefore, the expected query complexity of $\calA'$ conditioning on event $\calE$ is
    \[
        O(1/\eps) + Q_1 \cdot O(1/\eps) + Q_2 \cdot 2^d
    =   O\left(\frac{Q_1 + 1}{\eps} + Q_2 \cdot 2^d\right).
    \]

    \paragraph{Completeness and soundness.} Suppose that $f$ is a unate function. By the first part of \Cref{lemma:unate-function-to-dist}, $\calD_f$ is a unate distribution. From the perspective of tester $\calA$ (simulated by $\calA'$), it tests the unateness of $\calD_f$ and thus accepts with probability at least $2/3$. It follows that $\calA'$ also accepts with probability at least $2/3$ (conditioning on event $\calE$) in this case.

    Suppose that $f$ is $(2\eps)$-far from unate. In particular, $f$ is $(2\eps)$-far from the zero function, i.e., it holds that $\ex{\bx \sim \bits^n}{f(\bx)} \ge 2\eps$. Conditioning on event $\calE$, $\hat\mu > \eps$ holds in the first step of $\calA'$, and the tester would not accept by mistake. By the second part of \Cref{lemma:unate-function-to-dist}, distribution $\calD_f$ is $\eps$-far from unate, so the soundness of tester $\calA$ guarantees that $\calA'$ rejects with a conditional probability $\ge 2/3$ in this case.

    \paragraph{Put everything together.} We have showed that, conditioning on event $\calE$ that happens with probability $\ge 1 - \delta$, $\calA'$ has an expected query complexity of $Q \coloneqq O\left(\frac{Q_1 + 1}{\eps} + Q_2 \cdot 2^d\right)$ and succeeds with probability at least $2/3$. Now, suppose that we terminate $\calA'$ whenever it makes more than $Q / \delta$ queries. By Markov's inequality, we obtain a tester with a query complexity always bounded by $Q / \delta$ and an unconditional success probability of at least
    \[
        \pr{}{\calE} \cdot \left(\frac{2}{3} - \delta\right)
    \ge (1 - \delta) \cdot \left(\frac{2}{3} - \delta\right)
    \ge \frac{2}{3} - 2\delta.
    \]

    If we set $\delta = 1/100$, we have $2/3 - 2\delta > 1/2$. Thus, by repeating the truncated tester a constant number of times and taking the majority vote, we obtain a tester with a success probability $\ge 2/3$ and the desired query complexity of
    \[
        O(Q / \delta)
    =   O\left(\frac{Q_1 + 1}{\eps} + Q_2 \cdot 2^d\right). \qedhere
    \]
\end{proof}

%% file: allrefs.bib
@STRING{acm = {ACM Press}}

@string{focs = afocs}

@STRING{ieee = {IEEE Computer Society Press}}

@string{sigact = "SIGACT News"}

@string{stoc = astoc}

@article{chen2025boolean,
  title={Boolean function monotonicity testing requires (almost) $n^{1/2}$ queries},
  author={Chen, Mark and Chen, Xi and Cui, Hao and Pires, William and Stockwell, Jonah},
  eprinttype={arxiv},
  eprint={2511.04558},
  year={2025}
}

@inproceedings{CCRSW25,
  title={Monotonicity Testing of High-Dimensional Distributions with Subcube Conditioning},
  author={Chakrabarty, Deeparnab and Chen, Xi and Ristic, Simeon and Seshadhri, C. and Waingarten, Erik},
  booktitle={Symposium on Theory of Computing (STOC)},
  pages={1019--1030},
  year={2025}
}

@article{BshoutyTamon:96,
  author={N. Bshouty and C. Tamon},
  title={On the {F}ourier spectrum of monotone functions},
  journal={Journal of the ACM},
  year={1996},
  volume={43},
  number={4},
  pages={747-770}
}

@article{RubinfeldS09,
  author    = {Ronitt Rubinfeld and
               Rocco A. Servedio},
  title     = {Testing monotone high-dimensional distributions},
  journal   = {Random Struct. Algorithms},
  volume    = {34},
  number    = {1},
  pages     = {24--44},
  year      = {2009}
  }

@article{Fischer,
author = {E. Fischer},
title = {The art of uninformed decisions: A primer to property testing},
journal = {Bulletin of the European Association for Theoretical Computer
Science},
volume = {75},
pages = {97--126},
year = {2001},
}

@article{GGLRS,
author = {O. Goldreich and S. Goldwasser and E. Lehman and D. Ron
and A. Samordinsky},
title = {Testing Monotonicity},
journal = {Combinatorica},
volume = 20,
number = 3,
pages = {301--337},
year = 2000,
}

@article{Bentkus:03,
  author = {V. Bentkus},
      title = {{On the dependence of the Berry-Esseen bound on
          dimension}},
                journal = "Journal of Statistical Planning and Inference",
                        volume = 113,
                                  pages = {385--402},
                                              year = 2003
                                                          }

@article{Talagrand:93,
  author = {M. Talagrand},
  title = {Isoperimetry, logarithmic {S}obolev inequalities on the discrete cube and {M}argulis' graph connectivity theorem},
  journal = {GAFA},
  volume = 3,
  number = 3,
  year = 1993,
  pages = {298--314}
}

@inproceedings{CWX17stoc,
  author    = {Xi Chen and
               Erik Waingarten and
               Jinyu Xie},
  title     = {Beyond Talagrand functions: new lower bounds for testing monotonicity and unateness},
  booktitle = {Proceedings of the 49th Annual {ACM} {SIGACT} Symposium on Theory
               of Computing (STOC)},
  pages     = {523--536},
  year      = {2017}
}

@inproceedings{CWX17focs,
  author    = {Xi Chen and
               Erik Waingarten and
               Jinyu Xie},
  title     = {Boolean Unateness Testing with $\tilde{O}(n^{3/4})$ Adaptive Queries},
  booktitle = {Proceedings of the 58th Annual {IEEE} Symposium on Foundations of Computer Science (FOCS)},
  pages     = {868--879},
  year      = {2017}
}

@inproceedings{adamaszek2010testing,
  title={Testing monotone continuous distributions on high-dimensional real cubes},
  author={Adamaszek, Micha{\l} and Czumaj, Artur and Sohler, Christian},
  booktitle={Proceedings of the Twenty-First Annual ACM-SIAM Symposium on Discrete Algorithms},
  pages={56--65},
  year={2010},
  organization={SIAM}
}

@article{baleshzar2020optimal,
  title={Optimal Unateness Testers for Real-Valued Functions: Adaptivity Helps},
  author={Baleshzar, Roksana and Chakrabarty, Deeparnab and Pallavoor, Ramesh Krishnan S and Raskhodnikova, Sofya and Seshadhri, C},
  journal={Theory of Computing},
  volume={16},
  number={1},
  pages={1--36},
  year={2020},
  publisher={Theory of Computing Exchange}
}

@inproceedings{khot2016n,
  title={An $o(n)$ queries adaptive tester for unateness},
  author={Khot, Subhash and Shinkar, Igor},
  booktitle={Approximation, Randomization, and Combinatorial Optimization. Algorithms and Techniques (APPROX/RANDOM 2016)},
  pages={37--1},
  year={2016},
  organization={Schloss Dagstuhl--Leibniz-Zentrum f{\"u}r Informatik}
}

@unpublished{chakrabarty2016widetilde,
  title={A ${O}(n)$ non-adaptive tester for unateness},
  author={Chakrabarty, Deeparnab and Seshadhri, C},
  eprint={1608.06980},
  eprinttype={arxiv},
  year={2016}
}

@article{daskalakis2019testing,
  title={Testing ising models},
  author={Daskalakis, Constantinos and Dikkala, Nishanth and Kamath, Gautam},
  journal={IEEE Transactions on Information Theory},
  volume={65},
  number={11},
  pages={6829--6852},
  year={2019},
  publisher={IEEE}
}

@article{canonne2020survey,
  title={A survey on distribution testing: Your data is big. But is it blue?},
  author={Canonne, Cl{\'e}ment L},
  journal={Theory of Computing},
  pages={1--100},
  year={2020},
  publisher={Theory of Computing Exchange}
}

@article{canonne2015testing,
  title={Testing probability distributions using conditional samples},
  author={Canonne, Cl{\'e}ment L and Ron, Dana and Servedio, Rocco A},
  journal={SIAM Journal on Computing},
  volume={44},
  number={3},
  pages={540--616},
  year={2015},
  publisher={SIAM}
}

@article{bhattacharyya2018property,
  title={Property testing of joint distributions using conditional samples},
  author={Bhattacharyya, Rishiraj and Chakraborty, Sourav},
  journal={ACM Transactions on Computation Theory (TOCT)},
  volume={10},
  number={4},
  pages={1--20},
  year={2018},
  publisher={ACM New York, NY, USA}
}

@inproceedings{de2024detecting,
  title={Detecting low-degree truncation},
  author={De, Anindya and Li, Huan and Nadimpalli, Shivam and Servedio, Rocco A},
  booktitle={Proceedings of the 56th Annual ACM Symposium on Theory of Computing},
  pages={1027--1038},
  year={2024}
}

@unpublished{he2023testing,
  title={Testing junta truncation},
  author={He, William and Nadimpalli, Shivam},
  eprint={2308.13992},
  eprinttype={arxiv},
  year={2023},
}

@ARTICLE{de2025testing,
  title     = {Testing convex truncation},
  author    = {De, Anindya and Nadimpalli, Shivam and Servedio, Rocco A.},
  journal   = {Mathematical Statistics and Learning},
  year      = {2025},
  doi 		= {10.4171/MSL/50},
  note      = {Preliminary version in \emph{Proceedings of the 2023 ACM-SIAM Symposium on Discrete Algorithms (SODA)}.},
}

@inproceedings{chen2024uniformity,
  title={Uniformity testing over hypergrids with subcube conditioning},
  author={Chen, Xi and Marcussen, Cassandra},
  booktitle={Proceedings of the 2024 Annual ACM-SIAM Symposium on Discrete Algorithms (SODA)},
  pages={4338--4370},
  year={2024},
  organization={SIAM}
}

@inproceedings{chen2021learning,
  title={Learning and testing junta distributions with sub cube conditioning},
  author={Chen, Xi and Jayaram, Rajesh and Levi, Amit and Waingarten, Erik},
  booktitle={Conference on Learning Theory},
  pages={1060--1113},
  year={2021},
  organization={PMLR}
}

@inproceedings{canonne2021random,
  title={Random restrictions of high dimensional distributions and uniformity testing with subcube conditioning},
  author={Canonne, Cl{\'e}ment L and Chen, Xi and Kamath, Gautam and Levi, Amit and Waingarten, Erik},
  booktitle={Proceedings of the 2021 ACM-SIAM Symposium on Discrete Algorithms (SODA)},
  pages={321--336},
  year={2021},
  organization={SIAM}
}

@inproceedings{blanc2023lifting,
  title={Lifting uniform learners via distributional decomposition},
  author={Blanc, Guy and Lange, Jane and Malik, Ali and Tan, Li-Yang},
  booktitle={Proceedings of the 55th Annual ACM Symposium on Theory of Computing},
  pages={1755--1767},
  year={2023}
}

@article{meel2020testing,
  title={On testing of samplers},
  author={Meel, Kuldeep S and Pote, Yash Pralhad and Chakraborty, Sourav},
  journal={Advances in Neural Information Processing Systems},
  volume={33},
  pages={5753--5763},
  year={2020}
}

@article{pote2022scalable,
  title={On scalable testing of samplers},
  author={Pote, Yash and Meel, Kuldeep S},
  journal={Advances in Neural Information Processing Systems},
  volume={35},
  pages={28068--28079},
  year={2022}
}

@unpublished{adar2024improved,
  title={Improved bounds for high-dimensional equivalence and product testing using subcube queries},
  author={Adar, Tomer and Fischer, Eldar and Levi, Amit},
  eprint={2408.02347},
  eprinttype={arxiv},
  year={2024}
}

@article{blanca2024complexity,
  title={Complexity of high-dimensional identity testing with coordinate conditional sampling},
  author={Blanca, Antonio and Chen, Zongchen and {\v{S}}tefankovi{\v{c}}, Daniel and Vigoda, Eric},
  journal={ACM Transactions on Algorithms},
  volume={21},
  number={1},
  pages={1--58},
  year={2024},
  publisher={ACM New York, NY}
}

@inproceedings{aliakbarpour2019towards,
  title={Towards Testing Monotonicity of Distributions Over General Posets},
  author={Aliakbarpour, Maryam and Gouleakis, Themis and Peebles, John and Rubinfeld, Ronitt and Yodpinyanee, Anak},
  booktitle={Conference on Learning Theory},
  pages={34--82},
  year={2019},
  organization={PMLR}
}

@article{rubinfeld2020monotone,
  title={Monotone probability distributions over the Boolean cube can be learned with sublinear samples},
  author={Rubinfeld, Ronitt and Vasilyan, Arsen},
  journal={11th Innovations in Theoretical Computer Science (ITCS 2020)},
  year={2020}
}

@inproceedings{narayanan2021tolerant,
  title={On tolerant distribution testing in the conditional sampling model},
  author={Narayanan, Shyam},
  booktitle={Proceedings of the 2021 ACM-SIAM Symposium on Discrete Algorithms (SODA)},
  pages={357--373},
  year={2021},
  organization={SIAM}
}

@article{chakraborty2016power,
  title={On the Power of Conditional Samples in Distribution Testing},
  author={Chakraborty, Sourav and Fischer, Eldar and Goldhirsh, Yonatan and Matsliah, Arie},
  journal={SIAM Journal on Computing},
  volume={45},
  number={4},
  pages={1261--1296},
  year={2016},
  publisher={Society for Industrial and Applied Mathematics}
}

@book{CLRS,
  title={Introduction to algorithms},
  author={Cormen, Thomas H. and Leiserson, Charles E. and Rivest, Ronald L. and Stein, Clifford},
  year={2009},
  edition={3rd},
  publisher={MIT press}
}

@article{owen1956tables,
  title={Tables for computing bivariate normal probabilities},
  author={Owen, Donald B},
  journal={The Annals of Mathematical Statistics},
  volume={27},
  number={4},
  pages={1075--1090},
  year={1956},
  publisher={JSTOR}
}

@inproceedings{chakrabarty2013n,
  title={A $o(n)$ monotonicity tester for boolean functions over the hypercube},
  author={Chakrabarty, Deeparnab and Seshadhri, C},
  booktitle={Proceedings of the Forty-Fifth Annual ACM Symposium on Theory of Computing},
  pages={411--418},
  year={2013}
}

@inproceedings{chen2014new,
  title={New algorithms and lower bounds for monotonicity testing},
  author={Chen, Xi and Servedio, Rocco A and Tan, Li-Yang},
  booktitle={2014 IEEE 55th Annual Symposium on Foundations of Computer Science},
  pages={286--295},
  year={2014},
  organization={IEEE}
}

@article{khot2018monotonicity,
  title={On monotonicity testing and boolean isoperimetric-type theorems},
  author={Khot, Subhash and Minzer, Dor and Safra, Muli},
  journal={SIAM Journal on Computing},
  volume={47},
  number={6},
  pages={2238--2276},
  year={2018},
  publisher={SIAM}
}

@inproceedings{lange2022properly,
  title={Properly learning monotone functions via local correction},
  author={Lange, Jane and Rubinfeld, Ronitt and Vasilyan, Arsen},
  booktitle={2022 IEEE 63rd Annual Symposium on Foundations of Computer Science (FOCS)},
  pages={75--86},
  year={2022},
  organization={IEEE}
}

@article{lange2025agnostic,
  title={Agnostic proper learning of monotone functions: beyond the black-box correction barrier},
  author={Lange, Jane and Vasilyan, Arsen},
  journal={SIAM Journal on Computing},
  number={0},
  pages={FOCS23--1},
  year={2025},
  publisher={SIAM}
}

@inproceedings{chen2024mildly,
  title={Mildly exponential lower bounds on tolerant testers for monotonicity, unateness, and juntas},
  author={Chen, Xi and De, Anindya and Li, Yuhao and Nadimpalli, Shivam and Servedio, Rocco A},
  booktitle={Proceedings of the 2024 Annual ACM-SIAM Symposium on Discrete Algorithms (SODA)},
  pages={4321--4337},
  year={2024},
  organization={SIAM}
}

@inproceedings{qiao2018learning,
  title={Learning Discrete Distributions from Untrusted Batches},
  author={Qiao, Mingda and Valiant, Gregory},
  booktitle={9th Innovations in Theoretical Computer Science Conference (ITCS 2018)},
  pages={47--1},
  year={2018},
  organization={Schloss Dagstuhl--Leibniz-Zentrum f{\"u}r Informatik}
}

@article{black2025monotonicity,
  title={A $d^{1/2+o(1)}$ Monotonicity Tester for Boolean Functions on $d$-Dimensional Hypergrids},
  author={Black, Hadley and Chakrabarty, Deeparnab and Seshadhri, C},
  journal={SIAM Journal on Computing},
  number={0},
  pages={FOCS23--147},
  year={2025},
  publisher={SIAM}
}

@inproceedings{chen2019testing,
  title={Testing unateness nearly optimally},
  author={Chen, Xi and Waingarten, Erik},
  booktitle={Proceedings of the 51st Annual ACM SIGACT Symposium on Theory of Computing},
  pages={547--558},
  year={2019}
}

@inproceedings{bhattacharyya2011testing,
  title={Testing monotonicity of distributions over general partial orders.},
  author={Bhattacharyya, Arnab and Fischer, Eldar and Rubinfeld, Ronitt and Valiant, Paul},
  booktitle={ICS},
  pages={239--252},
  year={2011}
}

@inproceedings{batu2004sublinear,
  title={Sublinear algorithms for testing monotone and unimodal distributions},
  author={Batu, Tugkan and Kumar, Ravi and Rubinfeld, Ronitt},
  booktitle={Proceedings of the thirty-sixth annual ACM symposium on Theory of computing},
  pages={381--390},
  year={2004}
}

@article{canonne2022topics,
  title={Topics and techniques in distribution testing: A biased but representative sample},
  author={Canonne, Cl{\'e}ment L and others},
  journal={Foundations and Trends{\textregistered} in Communications and Information Theory},
  volume={19},
  number={6},
  pages={1032--1198},
  year={2022},
  publisher={Now Publishers, Inc.}
}

@article{bezakova2020lower,
  title={Lower bounds for testing graphical models: Colorings and antiferromagnetic ising models},
  author={Bez{\'a}kov{\'a}, Ivona and Blanca, Antonio and Chen, Zongchen and {\v{S}}tefankovi{\v{c}}, Daniel and Vigoda, Eric},
  journal={Journal of Machine Learning Research},
  volume={21},
  number={25},
  pages={1--62},
  year={2020}
}

@inproceedings{canonne2017testing,
  title={Testing bayesian networks},
  author={Canonne, Cl{\'e}ment L and Diakonikolas, Ilias and Kane, Daniel M and Stewart, Alistair},
  booktitle={Conference on Learning Theory},
  pages={370--448},
  year={2017},
  organization={PMLR}
}

@inproceedings{daskalakis2017square,
  title={Square Hellinger subadditivity for Bayesian networks and its applications to identity testing},
  author={Daskalakis, Constantinos and Pan, Qinxuan},
  booktitle={Conference on Learning Theory},
  pages={697--703},
  year={2017},
  organization={PMLR}
}

@article{canonne2019testing,
  title={Testing $k$-monotonicity: The rise and fall of boolean functions},
  author={Canonne, Cl{\'e}ment L and Grigorescu, Elena and Guo, Siyao and Kumar, Akash and Wimmer, Karl},
  journal={Theory of Computing},
  volume={15},
  number={1},
  pages={1--55},
  year={2019},
  publisher={Theory of Computing Exchange}
}

@article{grigorescu2017k,
  title={Flipping out with many flips: Hardness of testing $k$-monotonicity},
  author={Grigorescu, Elena and Kumar, Akash and Wimmer, Karl},
  journal={SIAM Journal on Discrete Mathematics},
  volume={33},
  number={4},
  pages={2111--2125},
  year={2019},
  publisher={SIAM}
}

@article{chen2022new,
  title={New Distinguishers for Negation-Limited Weak Pseudorandom Functions},
  author={Chen, Zhihuai and Guo, Siyao and Li, Qian and Lin, Chengyu and Sun, Xiaoming},
  eprint={2203.12246},
  eprinttype={arxiv},
  year={2022}
}

@article{acharya2015optimal,
  title={Optimal testing for properties of distributions},
  author={Acharya, Jayadev and Daskalakis, Constantinos and Kamath, Gautam},
  journal={Advances in Neural Information Processing Systems},
  volume={28},
  year={2015}
}
